\newcommand\vldbpagestyle{empty}
\newcommand{\paratitle}[1]{\noindent{\bf #1}.}
\newcommand{\qcount}{\ensuremath{q_1}}
\newcommand{\qsum}{\ensuremath{q_2}}
\newcommand{\qmed}{\ensuremath{q_3}}
\newcommand{\whereClause}{\ensuremath{\varphi}}
\newtheorem{theorem}{Theorem}[section]
\newtheorem{proposition}[theorem]{Proposition}
\newtheorem{observation}[theorem]{Observation}
\newtheorem{example}[theorem]{Example}
\newtheorem{definition}[theorem]{Definition}
\newcommand{\oursys}{\textsc{DP-PQD}}
\newcommand{\countQuery}{\texttt{COUNT}}
\newcommand{\medianQuery}{\texttt{MEDIAN}}
\newcommand{\sumQuery}{\texttt{SUM}}
\newcommand{\alg}{\ensuremath{\mathcal{A}}} 
\newcommand{\distBd}{\ensuremath{d_q}}
\newcommand{\intvl}{\ensuremath{\mathcal{I}}}
\newcommand{\BasicDecider}{\FuncSty{GenericDecider}} 
\newcommand{\dpmech}{\FuncSty{DPNoisy}}
\newcommand{\param}{\ensuremath{\phi}}
\newcommand{\lapcount}{\ensuremath{LM_{count}}}
\newcommand{\emcount}{\ensuremath{EM_{count}}}
\newcommand{\emcountNaive}{\ensuremath{EM_{count}^{naive}}}
\newcommand{\emmed}{\ensuremath{EM_{med}}}
\newcommand{\histmed}{\ensuremath{Hist_{med}}}
\newcommand{\GSbound}{\ensuremath{GS_q}}
\newcommand{\DLS}{\ensuremath{DS_{q, D}}}
\newcommand{\lapsum}{\ensuremath{LM_{sum}}}
\newcommand{\rtwotsum}{\ensuremath{R2T_{sum}}}
\newcommand{\svtsum}{\ensuremath{SVT_{sum}}}
\newcommand{\DPbayes}{\ensuremath{\hat{\mathcal{N}}}}
\newcommand{\revm}[1]{{\color{black} #1}} 
\newcommand{\reva}[1]{{\color{black}  #1}} 
\newcommand{\revb}[1]{{\color{black}   #1}} 
\newcommand{\revc}[1]{{\color{black}    #1}} 
\begin{document}
\title{\oursys: 
Privately Detecting Per-Query Gaps In \\
Synthetic Data Generated By Black-Box Mechanisms}


\settopmatter{authorsperrow=5}

\author{Shweta Patwa}
\affiliation{%
  \institution{Duke University}
}
\email{sjpatwa@cs.duke.edu}

\author{Danyu Sun}
\affiliation{%
  \institution{Duke University}
}
\email{ds592@cs.duke.edu}

\author{Amir Gilad}
\affiliation{%
  \institution{Hebrew University}
}
\email{amirg@cs.huji.ac.il}

\author{Ashwin Machanavajjhala}
\affiliation{%
  \institution{Duke University}
}
\email{ashwin@cs.duke.edu}

\author{Sudeepa Roy}
\affiliation{%
  \institution{Duke University}
}
\email{sudeepa@cs.duke.edu}

\begin{abstract}
Synthetic data generation methods, and in particular, private synthetic data generation methods, are gaining popularity as a means to make copies of sensitive databases that can be shared widely for research and data analysis. Some of the fundamental operations in data analysis include  analyzing aggregated statistics, e.g., count, sum, or median, on a subset of data satisfying some conditions. 
When synthetic data is generated, users may be interested in knowing if their aggregated queries generating such statistics
can be reliably answered on the synthetic data, for instance, to decide if the synthetic data is  
suitable for specific tasks. 
However, the standard data generation systems 
do not provide ``per-query'' quality guarantees on the synthetic data, and the users have no way of knowing how much the aggregated statistics on the synthetic data can be trusted. To address this problem,  
we present a novel framework named {\em \oursys\ (differentially-private per-query decider)} to detect if the query answers on the private and synthetic datasets are within a user-specified threshold of each other while guaranteeing differential privacy.
We give a suite of private algorithms for per-query deciders for count, sum, and median queries, analyze their properties, and evaluate them experimentally.
\end{abstract}

\maketitle

\pagestyle{\vldbpagestyle}


\section{Introduction} \label{sec:intro}
For more than a decade, we have witnessed an abundance of data containing private and sensitive information and a growing interest in using this data for decision making and data analytics. 
Formal policies like the GDPR \cite{GDPR} and CCPA \cite{ccpa} require that the privacy of the individuals whose data is being used be maintained.
Differential privacy (DP) \cite{Dwork06} is the gold standard in offering mathematically rigorous bounds on privacy leakage while offering utility through multiple data releases, even in the presence of side information. Intuitively, DP guarantees that the output has a similar distribution whether an individual's data is used. 
It has been widely adopted by many organizations \cite{10.1145/3219819.3226070,dwork2019differential} and leading companies \cite{erlingsson2014rappor,ding2017collecting,tang2017privacy}.

Private data can be queried directly using designated DP mechanisms. However, the accuracy of the results depends heavily on the privacy budget, especially when multiple queries need to be answered. Furthermore, the results may be inconsistent with each other.  
A prominent alternative is using differentially-private {\em synthetic data generators (SDGs)} to produce a synthetic copy of the private data, which can be used repeatedly to answer multiple queries without spending additional privacy budget. Previous works have employed techniques from game theory \cite{NIPS2012_208e43f0, DBLP:conf/icml/GaboardiAHRW14, pmlr-v119-vietri20b}, probabilistic graphical models \cite{10.1145/3134428, DBLP:conf/icml/McKennaSM19, DBLP:journals/pvldb/HuangMBMHM19, DBLP:journals/corr/abs-2108-04978, DBLP:journals/pvldb/CaiLWX21}, and deep learning \cite{DBLP:journals/corr/abs-2011-05537, DBLP:journals/pvldb/GeMHI21}. 
On the other hand, to generate new instances of datasets resembling properties of the given dataset, SDGs not satisfying DP have also become very popular alternatives in applications where it is appropriate to do so. Examples include SDGs using generative modeling \cite{DBLP:conf/dsaa/PatkiWV16} and deep learning techniques \cite{10.5555/3454287.3454946, 10.14778/3231751.3231757}.
Synthetic data offers advantages such as: ($1$) consistency in answering a large number of statistical queries, ($2$) preservation of desired correlations within data, 
and ($3$) concise representation of the private data that circumvents expending more privacy budget to answer queries. 

While SDGs (DP or not) embody a promising approach for increasing the usability of private data, there may exist discrepancies between query results over the private and synthetic datasets.
This work in particular focuses on analyzing aggregated statistics, e.g., count, sum, or median, on a subset of data satisfying some conditions, which form some of the most fundamental operations in data analysis.
When synthetic data produced by a SDG is used in data analysis, users may be interested in knowing if their aggregated queries generating such statistics can be reliably answered on said data.
However, the standard data generation systems 
do not provide ``per-query'' quality guarantees on the synthetic data, and the users have no way of knowing how much the aggregated statistics on the synthetic data can be trusted. We illustrate with an example below.


\begin{example}\label{example:problem}
    Let the private database $D$ be a simplified version of the Adult database \cite{Dua:2019} with attributes: {\tt age, education, capital-gain, marital-status, occupation, relationship, and sex}. Let $D_s$ denote the synthetic copy of $D$ from PrivBayes \cite{10.1145/3134428}, which is a DP SDG. 
    %
    Also consider the following query $q$, where (\texttt{<a>}, \texttt{<b>}) is one of $(0, 200), (200, 400), (400, 600),$ $(600, 800)$ or $(800, 1000)$:\\   
    {\small
    \noindent
    \texttt{SELECT COUNT(*) FROM $D_s$\\
    WHERE capital-gain $\geq$ <a> AND capital-gain $<$ <b>}.
    }
    \begin{figure}[ht]
        \centering
        \includegraphics[width=0.8\columnwidth]{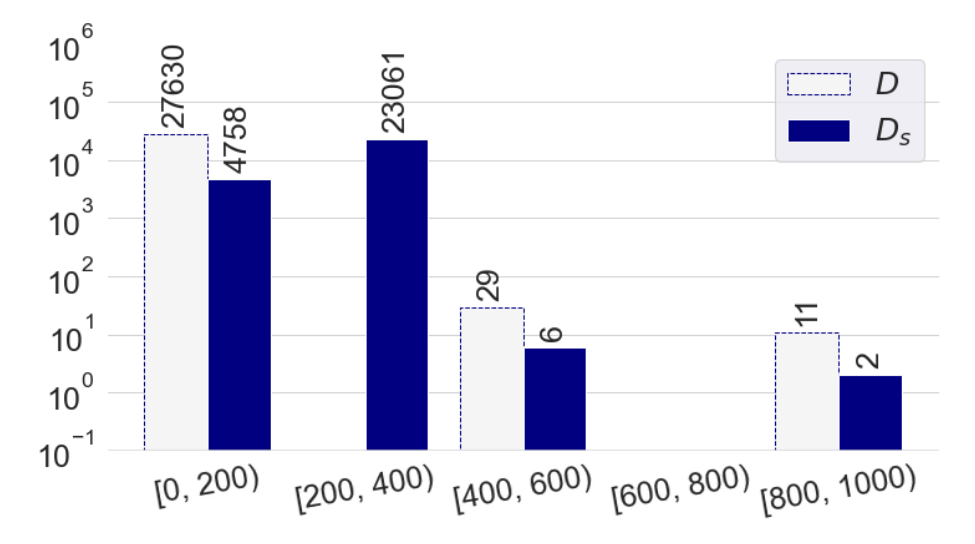}
        \caption{Histogram for attribute $capital$-$gain$ in $D$ (in white, private and not visible to the user) and $D_s$ (in navy blue, visible to the user) for bins given by $(0, 200), (200, 400), (400, 600), (600, 800)$ and $(800, 1000)$.}
        \label{fig:running_eg}
    \end{figure}
    
    Figure~\ref{fig:running_eg} shows the output for the aforementioned values of \texttt{<a>} and \texttt{<b>}. Bars with height $0$ are not shown here.
    Note that the corresponding counts from $D$ are private and we include them here only for reference.

    Suppose the user wants to know if the gap between $q(D)$ and $q(D_s)$ is less than $\tau = 200$. \textbf{How can the user find out if their distance bound for query $q$ is met without access to either the private data or the SDG that was used to produce the synthetic data?} 
    In Figure~\ref{fig:running_eg}, we see that when (\texttt{<a>}, \texttt{<b>}) $= (400, 600)$, the (true) answer is `Yes', whereas when (\texttt{<a>}, \texttt{<b>}) $= (0, 200)$, the (true) answer is `No'. In fact, the distance is $114.36$ times $200$.
\end{example}

In this paper, we aim to build a {\em ``Differentially-Private Per-Query Decider''}, which gives a {\em `Yes' (distance bound is satisfied)} answer if $|q(D) - q(D_s)|$ is smaller than a given distance bound $\tau > 0$, and a {\em `No' (distance bound is unmet)} answer otherwise, while satisfying DP with a given privacy budget $\epsilon > 0$. However, 
there are several challenges that one needs to address.
{\em First}, we assume that we do not have access to the mechanism behind the SDG producing $D_s$. All the per-query decider can see is the synthetic data $D_s$ generated by the SDG. This SDG may be one of the SDGs satisfying DP while outputting $D_s$ using a separate privacy budget \cite{NIPS2012_208e43f0, DBLP:conf/icml/GaboardiAHRW14, pmlr-v119-vietri20b, 10.1145/3134428, DBLP:conf/icml/McKennaSM19, DBLP:journals/pvldb/HuangMBMHM19, DBLP:journals/corr/abs-2108-04978, DBLP:journals/pvldb/CaiLWX21, DBLP:journals/corr/abs-2011-05537, DBLP:journals/pvldb/GeMHI21}, or a SDG that does not use DP \cite{DBLP:conf/dsaa/PatkiWV16, 10.5555/3454287.3454946, 10.14778/3231751.3231757}. The per-query decider should work with $D_s$ generated by any SDG.  {\em Second}, not only can the per-query decider not output the true `Yes' or `No' answers since $q(D)$ is private, but also it cannot give a deterministic answer because DP mechanisms must be randomized algorithms. {\em Third}, the per-query decider should have good accuracy -- answering a random `Yes'/`No' answer is trivially private but is not useful to the user. {\em Finally}, we aim to build a framework of per-query decider that can handle different types of standard aggregates, namely \countQuery, \medianQuery, and \sumQuery, which have different sensitivities on the input (private) data and will need different techniques. 



\subsection*{Our Contributions}
In this work, we propose a novel framework called {\em Differentially-Private Per-Query Decider (\oursys) to decide if the distance between $q(D)$ and $q(D_s)$ is less than a user-provided distance bound of $\tau > 0$ for a given query $q$ and privacy budget $\epsilon > 0$.} We investigate the problem for \countQuery, \sumQuery, and \medianQuery\ queries under this framework (with optional predicates selecting a subset of the data), which are three fundamental aggregate operators used in data analysis. We make the following contributions.


    \paragraph{{\bf (1) The \oursys\ framework
    (Section~\ref{sec:prob_statement}): }}
    We formally define the differentially-private per-query decider and introduce the notion of {\em effectiveness} of an algorithm to capture the range of input distance thresholds for which a per-query decider algorithm is expected to perform well.

    \paragraph{{\bf (2) \countQuery\ queries (Section~\ref{sec:count}): }}
    For \countQuery\ queries, we present and analyze two approaches, one based on the   Laplace Mechanism (LM) \cite{10.1007/11681878_14} that  uses a DP noisy estimate of $q(D)$ to compare with $q(D_s)$, and the other is direct approach for answering `Yes' or `No' based on the Exponential Mechanism (EM) \cite{10.1109/FOCS.2007.41} using a carefully designed score function. 

    \paragraph{{\bf (3) \sumQuery\ queries (Section~\ref{sec:sum}): }}
    For \sumQuery\ queries, we present and analyze three approaches. Two of them based on the Laplace Mechanism (LM) \cite{10.1007/11681878_14} and the recent {\em Race-to-the-Top} Mechanism (R2T) \cite{10.1145/3514221.3517844} use a DP noisy estimate of $q(D)$ to compare with $q(D_s)$. The third direct approach exploits the {\em Sparse Vector Technique} (SVT) \cite{10.14778/3055330.3055331} originally designed to detect when the first of a sequence of queries 
    exceeds a given threshold to implement a per-query decider.  

    \paragraph{{\bf (4) \medianQuery\ queries (Section~\ref{sec:median}): }}
    For \medianQuery\ queries, we present and analyze two approaches: one uses a DP noisy estimate of the median query using EM, and the other is a new histogram-based DP mechanism that directly solves the problem using the LM.
   
    \paragraph{{\bf (5) Experimental evaluation (Section~\ref{sec:experiments}): }}
    We have implemented the \oursys\ framework with all the above algorithms to evaluate our proposed solutions. 
    We analyze the accuracy for \reva{$22$} \countQuery\ queries (with a range of tuple selectivity), \reva{$19$} \sumQuery\ queries (with a range of tuple selectivity and varying downward local sensitivities), and \reva{$19$} \medianQuery\ queries (with different data distribution around the true median). One of the interesting observations is that the error of a DP per-query decider is not always monotonic in the privacy budget $\epsilon$, and we explain why this happens.



\section{Preliminaries}\label{sec:prelim}
In this section, we review some background concepts and present notations used in the rest of the paper.

\subsection{Data and Queries}\label{subsec:data_and_q}
We are given a private database instance $D$ that comprises a single relation with attributes $A_1, \ldots, A_d$. 
The domain of attribute $A_i$ is given by $dom(A_i)$, which is categorical or integral. 

In this paper, we consider three aggregate operators: \countQuery, \sumQuery, and \medianQuery, i.e., the corresponding aggregate queries in SQL  with an optional {\tt WHERE} clause take the following form: 
\begin{itemize}
    \item \texttt{SELECT COUNT(*) FROM $D$ WHERE $\whereClause$},
    \item \texttt{SELECT SUM($A_i$) FROM $D$ WHERE $\whereClause$}, 
    and    
    \item \texttt{SELECT MEDIAN($A_i$) FROM $D$ WHERE $\whereClause$}.
\end{itemize}
In these queries, first the predicate $\whereClause$ (if the {\tt WHERE} clause exists)  is applied over all tuples in $D$, and then the aggregate is computed on tuples that satisfy $\whereClause$. 
These queries output single real output value and belong to the class of {\em scalar queries}. 


\begin{example}\label{example:problem_with_queries}
    Recall the private database $D$ and its synthetic copy $D_s$ from Example~\ref{example:problem}. 
    Consider the following queries:
    \begin{itemize}
        \item \qcount: \texttt{SELECT COUNT(*) FROM $D_s$ WHERE $age > 30$ AND $education$ LIKE 'Masters'}.
        
        \item \qsum: \texttt{SELECT SUM($capital$-$gain$) FROM $D_s$ WHERE $education$ LIKE '$12$th'}.
        
        \item \qmed: \texttt{SELECT MEDIAN($capital$-$gain$) FROM $D_s$}.
    \end{itemize}
    Query \qcount\ asks for the number of people with age above $30$ and Master's degree, \qsum\ asks for the total capital-gain of people with $12$-th grade education, and \qmed\ asks for the median of capital-gain over all people.
\end{example}

\subsection{Differential Privacy}\label{subsec:DP}
We use {\em Differential Privacy (DP)} \cite{Dwork06} as the measure of privacy. 
We say that two databases $D$ and $D'$ are {\em neighbors} if they differ by a single tuple. This is denoted by $D\approx D'$.

\begin{definition}[Differential Privacy \cite{DP_book}]
    A randomized mechanism $\mathcal{M}$ is said to satisfy $\epsilon$-DP if $~\forall S\subseteq Range(\mathcal{M})$ and ~$\forall D, D'$ pair of neighboring databases, i.e., $D\approx D'$,
    \begin{center}
        $Pr[\mathcal{M}(D)\in S] \leq e^\epsilon Pr[\mathcal{M}(D')\in S]$
    \end{center}
\end{definition}

Smaller $\epsilon$ gives stronger privacy guarantee. 


\begin{definition} [Global Sensitivity]\label{def:GS}
    For a scalar query $q$, its global sensitivity is given by $\Delta q = \max\limits_{D\approx D'} |q(D) - q(D')|$.
\end{definition}

\begin{definition} [Downward Local Sensitivity]\label{def:DLS}
    For a scalar query $q$, its downward local sensitivity on database $D$ is given by
    \begin{align*}
        \DLS = \max\limits_{D'\approx D, D'\subseteq D}|q(D) - q(D')|
    \end{align*}
\end{definition}

\begin{example}
    Consider the private database $D$ and sum query \qsum\ from Example~\ref{example:problem_with_queries}.
    Suppose $dom(capital$-$gain)$ is $\{0, 1, \ldots,$ $99999\}$, so $\Delta \qsum = 99999$ since the maximum change in the sum over all pairs of neighboring databases is the maximum value in the domain. On the other hand, given a database $D$,  \DLS\ equals the largest value in $capital$-$gain$ from tuples in $D$ with $education$ equal to $12$-th. 
\end{example}

Properties like composition \cite{Dwork06} and post-processing \cite{dwork2006our} give a modular way to build complex DP mechanisms:

\begin{proposition}\label{prop:DP-comp-post} 
    \cite{Dwork06, dwork2006our} give the following:
    \begin{enumerate}
        \item {\bf (Sequential composition)} 
        If $\mathcal{M}_i$ satisfies $\epsilon_i$-DP, then the sequential application of $\mathcal{M}_1$, $\mathcal{M}_2, \cdots$,  satisfies $\sum_{i} \epsilon_i$-DP.
        
        \item {\bf (Parallel composition)} 
        If each $\mathcal{M}_i$ accesses disjoint sets of tuples, then they together satisfy $\max_i \epsilon_i$-DP.
        
        \item {\bf (Post-processing)} 
        Any function applied to the output of an $\epsilon$-DP mechanism $\mathcal{M}$ also satisfies $\epsilon$-DP.
    \end{enumerate}
\end{proposition}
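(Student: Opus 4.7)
The plan is to handle the three parts separately, since each relies on a different ingredient, but all reduce to verifying the defining inequality $\Pr[\mathcal{M}(D)\in S] \leq e^{\epsilon}\Pr[\mathcal{M}(D')\in S]$ for fixed neighbors $D\approx D'$ and an arbitrary output set $S$. I would first prove each claim in its pointwise (density) form, bounding the ratio of output probabilities at a generic output value, and then integrate or sum over $S$ at the end.

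For sequential composition I would first dispatch the two-mechanism case and extend to arbitrary $k$ by induction. View the joint output as $(s_1,s_2)$ and factor its density by the chain rule: $\Pr[\mathcal{M}_1(D)=s_1,\mathcal{M}_2(D)=s_2] = \Pr[\mathcal{M}_1(D)=s_1]\cdot\Pr[\mathcal{M}_2(D)=s_2\mid\mathcal{M}_1(D)=s_1]$. Each factor is bounded by $e^{\epsilon_i}$ times its $D'$-counterpart using the individual DP guarantees, so their product is bounded by $e^{\epsilon_1+\epsilon_2}$. The main obstacle here is handling adaptivity, where $\mathcal{M}_2$ may depend on the output of $\mathcal{M}_1$: a rigorous treatment defines each subsequent mechanism as a function of the database \emph{and} the earlier outputs, and shows that the DP bound of each $\mathcal{M}_i$ holds conditionally on those outputs. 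Once that conditional formulation is in place, the product-of-ratios argument carries through unchanged, and summing over any $S$ in the joint range preserves the inequality.

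For parallel composition the key observation is that the single tuple by which $D$ and $D'$ differ lies in at most one block of the disjoint partition $D_1,\ldots,D_k$, say block $j$. Then $D_i = D_i'$ for all $i\neq j$, so $\mathcal{M}_i(D_i)$ and $\mathcal{M}_i(D_i')$ are identically distributed and contribute a factor of $1$ to the joint density ratio, while $D_j \approx D_j'$ contributes at most $e^{\epsilon_j}$. Multiplying gives a ratio bound of $e^{\epsilon_j}\leq e^{\max_i\epsilon_i}$ on the joint probabilities, which is exactly the claim.

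For post-processing, let $f$ be any (possibly randomized) function applied to the output of $\mathcal{M}$. For any measurable $T$ in the codomain of $f$, write $\Pr[f(\mathcal{M}(D))\in T] = \sum_{y}\Pr[\mathcal{M}(D)=y]\cdot\Pr[f(y)\in T]$, with the integral analogue in the continuous case. Since $\Pr[f(y)\in T]$ depends only on $y$ and not on the choice between $D$ and $D'$, the $\epsilon$-DP guarantee of $\mathcal{M}$ applies pointwise to $\Pr[\mathcal{M}(D)=y]$ and the $e^{\epsilon}$ factor can be pulled out of the sum, leaving exactly $e^{\epsilon}\Pr[f(\mathcal{M}(D'))\in T]$. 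For randomized $f$ one first conditions on $f$'s internal randomness, which is independent of the database, and the same argument applies; in particular no additional privacy budget is consumed.
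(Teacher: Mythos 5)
The paper does not prove this proposition at all---it is stated as a known result and attributed to the cited references \cite{Dwork06, dwork2006our}---so there is no in-paper argument to compare against. Your three arguments are the standard textbook proofs and are correct, including the two points that are most often glossed over: the conditional formulation needed for adaptive sequential composition, and the independence of the post-processing function's internal randomness from the database. The only convention worth flagging is in parallel composition: your claim that the differing tuple lies in at most one block relies on the add/remove notion of neighboring databases (which is what this paper uses, cf.\ the $D'\subseteq D$ condition in its downward local sensitivity definition); under a replace-one-tuple convention the changed record could straddle two blocks and the bound would degrade to $2\max_i\epsilon_i$.
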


\paratitle{Laplace mechanism (LM)}
The Laplace mechanism \cite{10.1007/11681878_14} is a common building block in DP mechanisms and is used to get a noisy estimate for scalar queries with numeric answers. The noise injected is calibrated to the global sensitivity of the query.

\begin{definition}[Laplace Mechanism]\label{def:LM}
    Given a database $D$, 
    scalar query $q$ (with output in $\mathbb{R}$), and 
    privacy budget $\epsilon$, 
    the Laplace mechanism $\mathcal{M}_L$ returns $q(D) + \nu_q$, where $\nu_q\sim Lap(\Delta q/\epsilon)$.
\end{definition}

Its accuracy is given by the following theorem 
\cite{DP_book}. 

\begin{theorem}[Accuracy of Laplace Mechanism \cite{DP_book}]\label{thm:Lap_accuracy}
    Given a database $D$ and 
    scalar query $q$ (with output in $\mathbb{R})$. Let $y$ be the output from running the Laplace Mechanism $\mathcal{M}_L$ on $D$ and $q$ with privacy budget $\epsilon$. Then, $\forall \delta\in (0, 1]$, 
    \begin{align*}
        Pr\left[ |q(D) - y|\geq \frac{\Delta q}{\epsilon} \ln{\frac{1}{\delta}} \right] \leq \delta
    \end{align*}
\end{theorem}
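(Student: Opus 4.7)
The plan is to reduce the event $\{|q(D)-y| \geq t\}$ to a statement about the tail of the noise variable, and then evaluate that tail in closed form. By \Cref{def:LM}, the output is $y = q(D) + \nu_q$ with $\nu_q \sim Lap(\Delta q/\epsilon)$, so $|q(D) - y| = |\nu_q|$. Hence it suffices to bound $\Pr[|\nu_q| \geq t]$ for $t = \frac{\Delta q}{\epsilon}\ln(1/\delta)$; no property of $D$ or $q$ other than $\Delta q$ enters the probability.

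The first step I would carry out is to write down the density of $\nu_q$ at scale $b := \Delta q/\epsilon$, namely $f(x) = \frac{1}{2b} e^{-|x|/b}$, and exploit its symmetry. Then I would compute
\begin{equation*}
  \Pr[|\nu_q| \geq t] \;=\; 2 \int_{t}^{\infty} \frac{1}{2b} e^{-x/b}\,dx \;=\; e^{-t/b},
\end{equation*}
which is valid for all $t \geq 0$. This is the only real calculation in the proof and it is routine.

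The second step is to instantiate $t = \frac{\Delta q}{\epsilon}\ln(1/\delta)$, which is nonnegative for every $\delta \in (0,1]$, so the tail formula applies. Substituting gives $t/b = \ln(1/\delta)$, hence
\begin{equation*}
  \Pr\!\left[|q(D)-y| \geq \tfrac{\Delta q}{\epsilon}\ln\tfrac{1}{\delta}\right] \;=\; e^{-\ln(1/\delta)} \;=\; \delta,
\end{equation*}
which yields the desired bound (with equality, in fact).

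There is no genuine obstacle here: the argument is essentially an evaluation of a two-sided Laplace tail. The only things to be careful about are (i) using $|q(D)-y| = |\nu_q|$ so that global sensitivity $\Delta q$ enters only through the scale parameter, and (ii) verifying that $\ln(1/\delta) \geq 0$ on the stated range $\delta \in (0,1]$ so that the closed-form tail is applicable. Once these are in place, the stated inequality follows immediately.
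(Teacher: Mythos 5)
Your proof is correct and is the standard argument: it is exactly the two-sided Laplace tail identity that the paper itself records later as $Pr[|\nu_q|\geq t\cdot\frac{1}{\epsilon}]=e^{-t}$ (equation~(\ref{eq:lap-mod})), instantiated at $t=\ln(1/\delta)$ with scale $\Delta q/\epsilon$. The paper cites this theorem from \cite{DP_book} rather than proving it, and your derivation matches that cited proof in approach and detail.
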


\begin{example}
    Consider the private database $D$ and query \qcount\ from Example~\ref{example:problem_with_queries}. $\Delta \qcount$ equals $1$ because the count can change by at most one on neighboring databases.
    Say the output of \qcount\ on $D$ is $n_1$, a private quantity, and we want a DP estimate, $\tilde{n}_1$, for it.
    $\mathcal{M}_L$ returns $n_1$ plus noise sampled from $Lap(1/\epsilon)$.
    Also, $Pr\left[|n_1 - \tilde{n}_1|\geq \frac{1}{\epsilon} \ln{\frac{1}{\delta}} \right]\leq \delta$.
\end{example}



\paratitle{Exponential mechanism (EM)}
For categorical outputs, the Exponential mechanism \cite{10.1109/FOCS.2007.41} is used with an appropriate \textit{score function} that gives the utility of each element in the output space with respect to the given private database $D$.
The likelihood of an element being returned as the output depends on its score.

\begin{definition}[Exponential Mechanism]\label{def:EM}
    Given a database $D$, 
    range of outputs $\mathcal{R}$, 
    real-valued score function $u(D, e)$ that gives the utility of $e\in \mathcal{R}$ with respect to $D$, and
    privacy budget $\epsilon$, 
    the Exponential mechanism $\mathcal{M}_E$ returns $e\in \mathcal{R}$ with probability $c\cdot e^{\frac{\epsilon u(D, e)}{2\Delta u}}$, where $c$ is a positive constant and $\Delta u$ is the global sensitivity of $u$.
\end{definition}


\begin{example}
    Recall the private database $D$ and query \qmed\ from Example~\ref{example:problem_with_queries}.  
    We find a DP estimate for the median \cite{DBLP:journals/corr/abs-1103-5170} by applying $\mathcal{M}_E$ with $u(D, e) = -|rank(e) - n/2|$, for $e\in dom(capital$-$gain)$. $\Delta u$ is $1$, and therefore, $\mathcal{M}_E$ returns $e$ with probability $\propto e^{\epsilon u(D, e)/2}$.
\end{example}

\subsection{Synthetic Data Generators}
We treat the {\em synthetic data generator (SDG)}
used to produce $D_s$ as a black box. 
Our framework \oursys\ can be used in conjunction with any synthetic data generator: a standard SDG (not satisfying DP) like \cite{DBLP:conf/dsaa/PatkiWV16, 10.5555/3454287.3454946, 10.14778/3231751.3231757} that typically takes $D$ and some optional constraints as inputs, or an SDG satisfying DP (SDG$_{DP}$)
that takes $D$ and a privacy budget as input. An SDG$_{DP}$ may or may not take a set of queries as input. For example, PrivBayes \cite{10.1145/3134428} does not take queries as input but works such as \cite{NIPS2012_208e43f0, DBLP:conf/icml/GaboardiAHRW14, pmlr-v119-vietri20b, DBLP:journals/pvldb/HuangMBMHM19, DBLP:journals/corr/abs-2108-04978, DBLP:conf/icml/McKennaSM19, 10.14778/3231751.3231769} do.  
For an SDG$_{DP}$, we assume that it has a privacy budget separate from the privacy budget $\epsilon$ for the per-query decider \oursys. \oursys\ takes as input the synthetic data $D_s$ generated from any SDG, the private database $D$ and a privacy budget $\epsilon$, does not need to run the SDG again, and does not assume anything about how the SDG works. 

\section{The \oursys\ Framework}\label{sec:prob_statement}
In this section, we present the \oursys\ (Differentially-Private Per-Query Decider) framework that intends to solve the following problem; 
the workflow of \oursys\ is given in Figure~\ref{fig:flowchart}.

\begin{definition}[Differentially-Private Per-Query Decider]\label{def:prob_statement}
    Given a private database $D$, 
    synthetic database $D_s$ for $D$ from a black-box SDG, 
    query $q$ (\countQuery, \sumQuery\ or \medianQuery), 
    distance bound $\tau > 0 \in \mathbb{R}$, 
    and privacy budget $\epsilon > 0$, 
    return whether $|q(D) - q(D_s)| < \tau$ while satisfying $\epsilon$-DP. 
    We call such a mechanism a {\bf differentially-private per-query decider}, or simply a per-query decider, and denote it by $\alg(D, D_s, q, \tau, \epsilon)$, or $\alg(D)$ when $D_s, q, \tau$, and $\epsilon$ are clear from context.
    \par
    To simplify notation, we will write $o$ to denote the outcome of $\alg(D)$ when $\alg, D, D_s, q, \tau, \epsilon$ are clear from context. Here, 
    \begin{eqnarray*}
        o = 1 & \equiv & \text{``Distance bound satisfied''}\\ 
        o = 0 & \equiv & \text{``Distance bound unmet''}    
    \end{eqnarray*}
\end{definition}

\begin{figure}[ht]
    \centering
    \includegraphics[width=0.85\columnwidth]{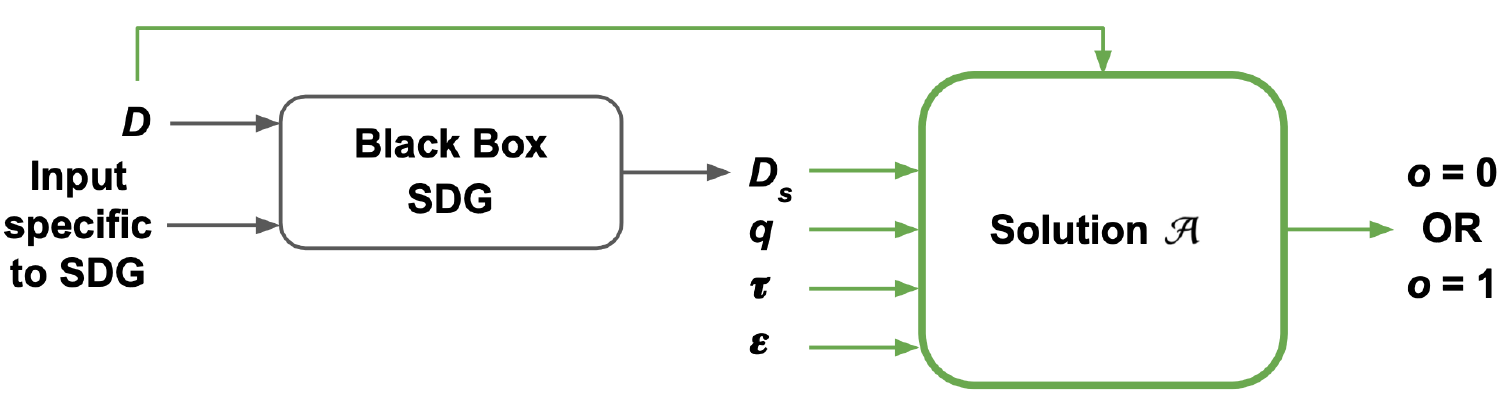}
    \caption{Workflow of \oursys}
    \label{fig:flowchart}
\end{figure}

\revb{
In this paper, we investigate the following approaches for $\alg(D)$: ($1$) spend $\epsilon$ to obtain a noisy DP estimate for $q(D)$, and compare it with $q(D_s)$ to check if the distance bound of $\tau$ is met (by instantiating Algorithm~\ref{algo:plug_in} with a suitable DP mechanism $\dpmech(\cdot)$), and ($2$) design specialized algorithms to solve the problem without first estimating $q(D)$. A summary of our solutions 
is given in Table~\ref{tbl:summary}.
}

For convenience, we denote the desired interval for $q(D)$ as $\intvl :=  (l, r)$ (known to the user), and the absolute difference in query values of $q(D)$ and $q(D_s)$ as  $\distBd$ (hidden from the user), i.e., 
\begin{align}
   \intvl &:=  (l, r) = (q(D_s) - \tau, q(D_s) + \tau) \label{eq:interval}\\
   \distBd &:=  |q(D) - q(D_s)| \label{eq:query-diff}
\end{align}
\revb{
Since $q(D_s)$ is known to the user, but $q(D)$ is not, we envision the user choosing $\tau$ as a percentage value of $q(D_s)$ in practical applications. Here, $\intvl = (l, r) = (q(D_s)\cdot (1-\tau), q(D_s)\cdot (1+\tau))$.
The $\tau$ value depends on how much error in $q(D_s)$ can be tolerated by the application using $q(D_s)$. For example, $\tau$ larger than $10\%$ of $q(D_s)$ may introduce too much error in the downstream analysis. To present our techniques in Sections~\ref{sec:count}, \ref{sec:sum}, and \ref{sec:median} for a given query, we use $\tau$ as a constant, whereas in the experiments in Section~\ref{sec:experiments} when we vary the queries, we use and vary $\tau$ as a percentage of $q(D_s)$ since the same value of $\tau$ may not be meaningful to all queries. 
}

\alg\ satisfies $\epsilon$-DP and like any randomized mechanism, incurs error in deciding if the distance bound is met (i.e. if $\distBd < \tau$). 
We quantify error as the expectation of the event that \alg\ returns the wrong outcome on $D$ defined as follows:

\begin{definition}[Error]\label{def:error}
    Let \alg\ be a per-query decider for the given private database $D$, 
    synthetic database $D_s$, 
    query $q$, 
    distance bound $\tau$, and 
    privacy budget $\epsilon$. 
    Then the error of \alg\ is given by:
    \begin{align*}
        err_\alg(q, D, D_s, \tau, \epsilon) &= Pr[o = 1 \land \distBd \geq \tau] + Pr[o = 0 \land \distBd < \tau]
    \end{align*}
    where the probability is over the randomness in \alg.  
\end{definition}

In Definition~\ref{def:error}, the first term corresponds to the {\em false positive rate} and the second term corresponds to the {\em false negative rate}.
Intuitively, we expect larger error when \distBd\ and $\tau$ are closer because the chance of the random noise injected by \alg\ causing the wrong outcome becomes higher.

\begin{example}
    Consider $D, D_s, q$ from Example~\ref{example:problem} where $(0, 200)$ is used as the range. Given that $\tau = 200$. Here, $\distBd$ (hidden from the user) equals $22,872$ and is greater than $\tau$, so the correct outcome of \alg\ is $o = 0$. Hence, a solution \alg\ makes a mistake if $o = 1$ and the error equals the probability that \alg\ returns $o = 1$.
\end{example}

Next, we introduce a notion we call the \textit{effectiveness} of a per-query decider \alg. The idea is to derive a lower bound for $\tau$ such that \alg\ returns the correct outcome with probability at least $1 - \delta$ (for $0 < \delta < 1$) in two cases: ($1$) $q(D) = q(D_s)$, and ($2$) $q(D) \leq q(D_s) - 2\tau$ or $q(D) \geq q(D_s) + 2\tau$. 
The first condition ensures that \alg\ has high accuracy when the two query outputs match, and the second condition ensures that \alg\ has high accuracy also for very different query outputs. In the absence of the first case, an \alg\ that always returns $o=0$ meets the condition, and in the absence of the second case, an \alg\ that always returns $o=1$ meets the condition, but neither is a useful solution. 
We denote the lower bound by $\tau_{min}^{\alg, \delta}$ and call it the \textit{effectiveness threshold} of \alg\ at $\delta$. 

\begin{definition}[Effectiveness]\label{def:effectiveness_metric}
    Let \alg\ be a per-query decider for a given private database $D$, 
    a synthetic database $D_s$, 
    a query $q$, 
    a distance bound $\tau$, and 
    a privacy budget $\epsilon$.
    \alg\ is called {\bf effective} at error probability $0 < \delta < 1$ if the following two conditions hold:
    \begin{enumerate}
        \item if $q(D) = q(D_s), Pr[o = 1]\geq 1 - \delta$, and
        \item if $q(D)\not\in (q(D_s) - 2\tau, q(D_s) + 2\tau) , Pr[o = 0]\geq 1 - \delta$.
    \end{enumerate}

    The smallest value of $\tau$ that achieves the above is called the effectiveness threshold of \alg\ at $\delta$ and is denoted by $\tau_{min}^{\alg, \delta}$.
\end{definition}

We give an upper bound on the effectiveness thresholds of each solution for \countQuery\ and \sumQuery\ queries (Section~\ref{sec:count} and \ref{sec:sum}). 
We use effectiveness as proxy for error for queries with high sensitivities, like \sumQuery\ queries.
For \medianQuery\ queries, $q(D)$ shows up in the rank space in the analysis. This is an interesting direction for future work.
Table~\ref{tab:notation} summarizes the notations used throughout the paper.

\begin{table}[t] 
    \centering
    \footnotesize
    \begin{tabular}{| c | c |}
        \hline
        \cellcolor[HTML]{C0C0C0} Notation & \cellcolor[HTML]{C0C0C0} Description\\
        \hline $D, D_s$& Private database and its synthetic copy\\
        \hline $q$& \countQuery, \sumQuery, or \medianQuery\ query\\
        \hline $\Delta q$& Global sensitivity of query $q$\\
        \hline \DLS& Downward local sensitivity of $q$ on $D$ (Sec.~\ref{sec:sum})\\
        \hline $\tau$& Given upper bound on $|q(D) - q(D_s)|$\\
        \hline $\epsilon$& Privacy budget\\
        \hline $\intvl = (l, r)$& $(q(D_s) - \tau, q(D_s) + \tau)$\\
        \hline $\alg(D, D_s, q, \tau, \epsilon)$& $\epsilon$-DP per-query decider\\
        \hline $o\in\{0, 1\}$& \alg's outcome on the given problem\\
        \hline $\tau_{min}^{\alg, \delta}$& Effectiveness threshold of \alg\ for $0 < \delta \ll 1$\\
        \hline \GSbound& Upper limit on $\Delta q$ for \sumQuery\ query $q$ (Sec.~\ref{sec:sum})\\
        \hline
    \end{tabular}
    \caption{Table of Notations}
    \label{tab:notation}
\end{table}




\section{Solutions for \countQuery\ query}\label{sec:count}
We propose two approaches for \countQuery\ query $q$: 
($1$) \lapcount\ (that instantiates Algorithm~\ref{algo:plug_in} with the Laplace Mechanism (LM)) in Section~\ref{subsec:lapcount}, and 
($2$) \emcount\ (that directly solves the problem using the Exponential Mechanism (EM)) in Section~\ref{subsec:emcount}, and analyze their errors. 
We also derive upper bounds for their respective effectiveness thresholds.
\revm{We give an error comparison in Section~\ref{subsec:emcount}}.

\IncMargin{1em}
\begin{algorithm}[!ht]
    \caption{Basic approach using DP estimate of $q(D)$}
    \label{algo:plug_in}
    
    \SetKwInOut{Input}{Input}\SetKwInOut{Output}{Output}
    \LinesNumbered
    \Input{$q$ - count/sum/median query, $D$ - private database, $D_s$ - synthetic database, $\tau$ - distance bound, $\epsilon$ - privacy budget, \dpmech\ - any $\epsilon$-DP mechanism to get a noisy estimate for $q(D)$, \param\ - any additional parameter(s) that \dpmech\ takes.\\
    {\em /* If \dpmech\ = LM (Defn.~\ref{def:LM}), then $\param = \emptyset$\\ ~~~~~~~If \dpmech\ = EM (Defn.~\ref{def:EM}), then $\param = \{\mathcal{R}, u\}$\\ ~~~~~~~If \dpmech\ = R2T (Sec.~\ref{subsec:rtwotsum}), then $\param = \{\GSbound, \beta\}$  */}}
    \Output{$o = 1$ if the desired distance bound from $q(D_s)$ is satisfied for $q(D)$, else $o = 0$.}
    
    \BlankLine
    
    \SetKwFunction{FMain}{\BasicDecider}
    \SetKwProg{Fn}{Function}{:}{}
    \Fn{\FMain{$q, D, D_s, \tau, \epsilon, \dpmech, \param$}}
    {
        \If{$-\tau < \dpmech(D, q, \epsilon, \param) - q(D_s) < \tau$}
        {   \label{l:basic_if}
            \Return $o = 1$ (``Distance bound satisfied'')\;
        }
        \Return $o = 0$ (``Distance bound unmet'')\; \label{l:basic_else}
    }
\end{algorithm}
\DecMargin{1em}

\subsection{Laplace Mechanism-Based Approach}\label{subsec:lapcount}
In our first algorithm, \lapcount, we use the LM (Definition~\ref{def:LM}) to obtain a DP estimate for $q(D)$ and check if the noisy answer is less than $\tau$ away from $q(D_s)$. 
This is achieved by running \BasicDecider\ (Algorithm~\ref{algo:plug_in}) with the LM that adds noise from $Lap(\frac{1}{\epsilon})$ as \dpmech\ (since $\Delta q$ = 1 for \countQuery\ queries). 
Since we post-process a DP estimate (Proposition~\ref{prop:DP-comp-post}), the following holds:

\begin{observation}
    \lapcount\ satisfies $\epsilon$-DP. 
\end{observation}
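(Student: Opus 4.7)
The plan is to invoke the Laplace Mechanism's DP guarantee together with the post-processing property from Proposition~\ref{prop:DP-comp-post}, since $\lapcount$ is constructed by feeding a single noisy release of $q(D)$ into a deterministic decision rule.

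First, I would isolate the only step of Algorithm~\ref{algo:plug_in} that touches the private database $D$: the call $\dpmech(D, q, \epsilon, \emptyset)$ in the guard of line~\ref{l:basic_if}, which in this instantiation returns $q(D) + \nu$ with $\nu \sim Lap(1/\epsilon)$. Since \countQuery\ queries have global sensitivity $\Delta q = 1$, this is exactly the Laplace Mechanism of Definition~\ref{def:LM} at scale $\Delta q / \epsilon = 1/\epsilon$, and so this release alone satisfies $\epsilon$-DP.

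Second, I would argue that every subsequent operation is post-processing of that single release. The quantities $q(D_s)$, $\tau$, and $\epsilon$ that appear in the comparison $-\tau < \dpmech(D, q, \epsilon, \emptyset) - q(D_s) < \tau$ are independent of $D$ from the decider's viewpoint: $D_s$ is an input to $\lapcount$, produced by a black-box SDG whose privacy budget (if any) is accounted for separately, as noted at the end of Section~\ref{sec:prelim}. Thus the map from the noisy count to the binary outcome $o \in \{0,1\}$ is a deterministic function that depends on $D$ only through the $\epsilon$-DP release. Applying Proposition~\ref{prop:DP-comp-post}(3) then gives that $\lapcount$ is $\epsilon$-DP.

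There is no real obstacle here; the content is essentially a one-line application of post-processing, which is why the statement is phrased as an observation. The only point that warrants explicit mention is the treatment of $D_s$ as external input for the purposes of the decider's privacy accounting, so that the threshold comparison is unambiguously post-processing rather than a second access to $D$.
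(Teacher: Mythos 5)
Your proposal is correct and matches the paper's own (one-line) justification: the Laplace Mechanism release of $q(D)$ at scale $\Delta q/\epsilon = 1/\epsilon$ is $\epsilon$-DP, and the threshold comparison against $q(D_s)$ is deterministic post-processing covered by Proposition~\ref{prop:DP-comp-post}(3). Your added remark that $D_s$ is treated as external input for the decider's privacy accounting is a reasonable clarification but does not change the argument.
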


We denote the noise injected by the LM (Definition~\ref{def:LM}) as $\nu_q$ and analyze \lapcount's error next. We will frequently use the following properties of the Laplace distribution (with mean $0$) \cite{DP_book}. For a Laplace random variable $\nu_q \sim Lap(\frac{1}{\epsilon})$ and for $t \geq 0$, 
\begin{eqnarray}
    Pr \left[ \nu_q\geq t\cdot \frac{1}{\epsilon} \right] & = & \frac{1}{2}e^{-t} \label{eq:lap-geq}\\
    Pr \left[ \nu_q\leq - t\cdot \frac{1}{\epsilon} \right] & = & \frac{1}{2}e^{-t} \label{eq:lap-leq}\\
    Pr \left[|\nu_q| \geq t\cdot \frac{1}{\epsilon} \right] & = & e^{-t}\label{eq:lap-mod}
\end{eqnarray}

We next employ these equations to bound the error of \lapcount.
We give the full proof of Proposition~\ref{prop:lapcount_err} in Appendix~\ref{subsec:appendix_lapcount_err}.

\begin{proposition}\label{prop:lapcount_err}
    Given a private database $D$, 
    synthetic database $D_s$, 
    \countQuery\ query $q$, 
    distance bound $\tau$, and 
    privacy budget $\epsilon$. 
    Interval $\intvl = (l, r) = (q(D_s) - \tau, q(D_s) + \tau)$ (\ref{eq:interval}). \lapcount\ satisfies the following:
    \begin{enumerate}
        \item If $q(D) \leq l$ but $o = 1$, then $err(\cdot) \leq \frac{1}{2} e^{-(l - q(D))\epsilon} - \frac{1}{2} e^{-(r - q(D))\epsilon}$.
        
        \item If $q(D) \geq r$ but $o = 1$, then $err(\cdot) \leq \frac{1}{2} e^{-(q(D) - r)\epsilon} - \frac{1}{2} e^{-(q(D) - l)\epsilon}$.
        
        \item If $l < q(D) < r$ but $o = 0$, then $err(\cdot) = \frac{1}{2} e^{-(q(D) - l)\epsilon} + \frac{1}{2} e^{-(r - q(D))\epsilon}$. 
    \end{enumerate}
\end{proposition}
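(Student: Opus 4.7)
The plan is to observe first that \lapcount\ returns $o = 1$ precisely when the Laplace noise $\nu_q \sim Lap(1/\epsilon)$ (recall $\Delta q = 1$ for \countQuery) satisfies $l - q(D) < \nu_q < r - q(D)$. For each of the three cases I would identify the ``correct'' outcome from the position of $q(D)$ relative to $\intvl = (l,r)$, express the error as the probability of the opposite outcome, and then reduce it to the one-sided tail probabilities of $\nu_q$ given by equations~(\ref{eq:lap-geq})--(\ref{eq:lap-mod}).

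For Case~1, the hypothesis $q(D) \leq l$ means the true distance is at least $\tau$, so the correct outcome is $o = 0$ and the error equals $Pr[l - q(D) < \nu_q < r - q(D)]$. Both endpoints are nonnegative, so I would write this probability as $Pr[\nu_q > l - q(D)] - Pr[\nu_q \geq r - q(D)]$ and apply equation~(\ref{eq:lap-geq}) with $t = \epsilon(l - q(D))$ and $t = \epsilon(r - q(D))$ respectively. Case~2 is completely symmetric: under $q(D) \geq r$ both endpoints are nonpositive, so I would write the error as $Pr[\nu_q < r - q(D)] - Pr[\nu_q \leq l - q(D)]$ and apply equation~(\ref{eq:lap-leq}) with $t = \epsilon(q(D) - r)$ and $t = \epsilon(q(D) - l)$ respectively.

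For Case~3, where $l < q(D) < r$, the distance bound is truly satisfied so the correct outcome is $o = 1$; the error is therefore the probability that the noisy estimate leaves $\intvl$, namely $Pr[\nu_q \leq l - q(D)] + Pr[\nu_q \geq r - q(D)]$. Since $l - q(D) < 0$ and $r - q(D) > 0$, each term is now a genuine one-sided tail of nonnegative magnitude, and applying equations~(\ref{eq:lap-leq}) and (\ref{eq:lap-geq}) with $t = \epsilon(q(D) - l)$ and $t = \epsilon(r - q(D))$ respectively yields the stated equality.

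The computation is elementary rather than technically challenging; the only care needed is matching each tail probability with the correct variant of the Laplace formula, since equations~(\ref{eq:lap-geq}) and (\ref{eq:lap-leq}) require a nonnegative argument $t$. This is exactly why the three cases must be separated according to the signs of $l - q(D)$ and $r - q(D)$. Because the Laplace density is continuous, the boundary configurations $q(D) = l$ and $q(D) = r$ require no special treatment, and in fact the bounds in Cases~1 and 2 are attained with equality; stating them as $\leq$ in the proposition is a minor weakening.
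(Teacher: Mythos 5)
Your proposal is correct and follows essentially the same route as the paper's proof in Appendix~\ref{subsec:appendix_lapcount_err}: identify the event $l - q(D) < \nu_q < r - q(D)$, split into the three cases by the signs of the endpoints, and evaluate the resulting one-sided Laplace tails via (\ref{eq:lap-geq})--(\ref{eq:lap-mod}) (your Case~2 merely writes the difference of lower tails directly where the paper passes through complements of upper tails). Your closing remark that the Case~1 and~2 bounds hold with equality for the continuous Laplace density is also accurate.
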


We now give an upper bound for the effectiveness threshold.

\begin{proposition}\label{prop:tau_min_lapcount}
    Given a private database $D$, 
    synthetic database $D_s$, 
    \countQuery\ query $q$, 
    privacy budget $\epsilon$, and 
    error probability $0 < \delta < 1$, 
    the effectiveness threshold of the per-query decider \lapcount\ at $\delta$ (Definition~\ref{def:effectiveness_metric}) has the following upper bound: 
    $\tau_{min}^{\lapcount, \delta} \leq \frac{1}{\epsilon} \ln{\frac{1}{2\delta}}$.
\end{proposition}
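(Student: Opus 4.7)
The plan is to unpack each of the two effectiveness conditions from Definition~\ref{def:effectiveness_metric} into a tail probability on the noisy outcome $o$, then invoke Proposition~\ref{prop:lapcount_err} to compute that tail probability in closed form, and finally invert to solve for the smallest $\tau$ that keeps each probability at most $\delta$.

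First, I would handle condition~(1): $q(D) = q(D_s)$ places $q(D)$ exactly at the midpoint of $\intvl = (l,r)$, so case~(3) of Proposition~\ref{prop:lapcount_err} applies with $q(D) - l = r - q(D) = \tau$, giving $\Pr[o = 0] = e^{-\tau\epsilon}$. Requiring this to be at most $\delta$ reduces to one-line algebra and forces $\tau \geq \tfrac{1}{\epsilon}\ln\tfrac{1}{\delta}$.

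Next, I would handle condition~(2): by the symmetry of the Laplace noise it suffices to take $q(D) \geq q(D_s) + 2\tau$, which puts us in case~(2) of Proposition~\ref{prop:lapcount_err}. Since the stated upper bound on $\Pr[o=1]$ is decreasing in $q(D) - q(D_s)$, the worst case is $q(D) = q(D_s) + 2\tau$, where $q(D) - r = \tau$ and the bound simplifies to $\tfrac{1}{2}e^{-\tau\epsilon} - \tfrac{1}{2}e^{-3\tau\epsilon} \leq \tfrac{1}{2}e^{-\tau\epsilon}$. Inverting $\tfrac{1}{2}e^{-\tau\epsilon} \leq \delta$ yields exactly $\tau \geq \tfrac{1}{\epsilon}\ln\tfrac{1}{2\delta}$, matching the claimed upper bound.

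The effectiveness threshold is the smallest $\tau$ satisfying both constraints, so the final step is to take the maximum of the two $\tau$-lower-bounds produced above. The main subtlety I expect is reconciling condition~(1) with the stated bound: a naive application gives $\tau \geq \tfrac{1}{\epsilon}\ln\tfrac{1}{\delta}$, which is strictly larger than $\tfrac{1}{\epsilon}\ln\tfrac{1}{2\delta}$ when $\delta < 1/2$. To recover the paper's exact expression one must either split the $\delta$ budget between the two effectiveness conditions (so each constraint comes out tighter) or argue that condition~(2) is the dominant one in the target regime. Beyond that reconciliation, the proof is a direct manipulation of Laplace tails using equations~(\ref{eq:lap-geq})--(\ref{eq:lap-mod}) and the closed-form error expressions of Proposition~\ref{prop:lapcount_err}.
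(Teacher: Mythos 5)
Your calculations match the paper's proof step for step: condition (1) reduces to $Pr[|\nu_q|\geq\tau]=e^{-\tau\epsilon}\leq\delta$, i.e.\ $\tau\geq\frac{1}{\epsilon}\ln\frac{1}{\delta}$, and condition (2) reduces, after taking the worst admissible $q(D)$ (the paper works on the left side, $q(D)\leq q(D_s)-2\tau$, rather than your right side, but the two are symmetric), to a one-sided tail $\frac{1}{2}e^{-\tau\epsilon}\leq\delta$, i.e.\ $\tau\geq\frac{1}{\epsilon}\ln\frac{1}{2\delta}$. The only cosmetic difference is that you route condition (1) through case (3) of Proposition~\ref{prop:lapcount_err}, while the appendix proof re-derives both bounds directly from the Laplace tail inequalities (\ref{eq:lap-geq})--(\ref{eq:lap-mod}).

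The ``subtlety'' you flag at the end is, however, a genuine problem, and the paper does not resolve it. The appendix proof derives exactly the same two requirements you do, explicitly notes that $\frac{1}{\epsilon}\ln\frac{1}{\delta}>\frac{1}{\epsilon}\ln\frac{1}{2\delta}$, and then concludes the proposition anyway; no splitting of the $\delta$ budget and no dominance argument appears. Since effectiveness requires \emph{both} conditions, and condition (1) holds for \lapcount\ if and only if $\tau\geq\frac{1}{\epsilon}\ln\frac{1}{\delta}$ (the error at $q(D)=q(D_s)$ is exactly $e^{-\tau\epsilon}$, not merely bounded by it), the smallest $\tau$ satisfying both is $\frac{1}{\epsilon}\ln\frac{1}{\delta}$, which exceeds the stated bound for every $\delta<1/2$. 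So the honest conclusion of this argument---yours and the paper's alike---is $\tau_{min}^{\lapcount,\delta}\leq\frac{1}{\epsilon}\ln\frac{1}{\delta}$; recovering the stated $\frac{1}{\epsilon}\ln\frac{1}{2\delta}$ would require either weakening condition (1) or reapportioning the error budget as you suggest, neither of which the paper does. Your instinct not to paper over the reconciliation was correct.
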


We give the full proof of Proposition~\ref{prop:tau_min_lapcount} in Appendix~\ref{subsec:appendix_tau_min_lapcount}.

\subsection{A Direct Solution Using the EM}\label{subsec:emcount}
In our second algorithm, \emcount, instead of plugging in a DP estimate for $q(D)$ to reach a decision about the distance bound, directly returns whether $o = 1$ (``Distance bound satisfied'') or not.

\subsubsection{A straightforward EM-based per-query decider.~} We begin by describing a straightforward approach we call \emcountNaive\ that directly instantiates the EM (Definition~\ref{def:EM}) with the output range $\mathcal{R} = \{0, 1\}$ and the score function $u(D, D_s, q, \tau, o)$ given by:
\begin{align*}
    \footnotesize
    \text{If $q(D)\in \intvl$}
    \left\{
        \begin{array}{lr}
            u(D, D_s, q, \tau, o = 0) = 0\\
            u(D, D_s, q, \tau, o = 1) = 1
        \end{array}
    \right.
\end{align*}
\begin{align*}
    \footnotesize
    \text{If $q(D)\not\in \intvl$}
    \left\{
        \begin{array}{lr}
            u(D, D_s, q, \tau, o = 0) = 1\\
            u(D, D_s, q, \tau, o = 1) = 0
        \end{array}
    \right.
\end{align*}
where $\intvl = (l, r) = (q(D_s) - \tau, q(D_s) + \tau)$ (from (\ref{eq:interval})).
Intuitively, $u(\cdot)$ gives a non-zero score only to the correct outcome. 

We give the full proof of Proposition~\ref{prop:bin_u_sens} in Appendix~\ref{subsec:appendix_bin_u_sens}.

\begin{proposition}\label{prop:bin_u_sens}
    The global sensitivity of $u$ is $\Delta u = 1$.
\end{proposition}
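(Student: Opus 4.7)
The plan is to bound $\Delta u$ from above by the trivial range of $u$ and then exhibit a neighboring pair that actually realizes a change of $1$, so that the bound is tight.

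For the upper bound, I would start from the definition
\[
\Delta u \;=\; \max_{D \approx D'}\; \max_{o \in \{0,1\}}\; \bigl| u(D,D_s,q,\tau,o) - u(D',D_s,q,\tau,o) \bigr|,
\]
and observe that $u$ takes values only in $\{0,1\}$ by construction, so $|u(D,\cdot,o) - u(D',\cdot,o)| \le 1$ for any choice of $D, D'$, and $o$. This immediately gives $\Delta u \le 1$. To see when the difference is actually nonzero, I would then split on whether the count $q(D)$ lies in $\intvl = (l, r)$: if $q(D)$ and $q(D')$ both lie in $\intvl$, or both lie outside, then $u(D,\cdot,o) = u(D',\cdot,o)$ for both values of $o$; the difference is nonzero only when exactly one of $q(D), q(D')$ is inside $\intvl$, in which case the difference equals $1$ for both $o = 0$ and $o = 1$.

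For the matching lower bound, since $\countQuery$ has $\Delta q = 1$, neighboring databases can straddle the boundary of $\intvl$. Concretely, pick any $D$ whose count $q(D)$ lies strictly inside $\intvl$ and is achievable by removing/adding a single tuple to land on the boundary (e.g., $q(D) = \lceil l \rceil + 1$ if $\lceil l \rceil + 1 < r$, and let $D' = D \setminus \{t\}$ for some tuple $t$ contributing to the count so that $q(D') = q(D) - 1 \notin \intvl$). Then $u(D,\cdot,1) = 1$ while $u(D',\cdot,1) = 0$, so the difference is exactly $1$. Combined with the upper bound, this yields $\Delta u = 1$.

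The only mild subtlety is the boundary convention: since line~\ref{l:basic_if} of Algorithm~\ref{algo:plug_in} and the definition of $u$ both use the open interval $(l, r)$, I should be careful to pick the witness $D, D'$ so that one count is genuinely inside $(l, r)$ and the other is genuinely outside; this is straightforward because counts are integers and $\tau > 0$, so such a pair always exists whenever the interval contains at least one integer (the degenerate case where $\intvl$ contains no integer just means the sensitivity could be $0$, but the worst case over all inputs still gives $\Delta u = 1$, which is what global sensitivity requires). I do not expect any genuine obstacle here; the main care is just in articulating the case analysis cleanly.
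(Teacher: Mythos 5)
Your proposal is correct and follows essentially the same route as the paper's proof: bound $\Delta u \le 1$ from the binary range of $u$, then exhibit neighboring databases whose counts straddle the boundary of $\intvl$ (using $\Delta q = 1$) to show the bound is attained for both outcomes $o=0$ and $o=1$. One small slip in your concrete witness: when $l$ is not an integer, $\lceil l \rceil \in (l,r)$, so the pair $q(D) = \lceil l \rceil + 1$, $q(D') = \lceil l \rceil$ may both lie inside $\intvl$; you want $q(D) = \lfloor l \rfloor + 1$ (the smallest integer strictly greater than $l$) with $q(D') = \lfloor l \rfloor \le l \notin \intvl$, which always exists since $q(D_s)$ is an integer in $\intvl$ --- this does not affect the validity of the overall argument.
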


The problem with \emcountNaive\ is that it suffers from high error as sensitivity $1$ is too high. For instance, when $q(D)\in\intvl$ but $o=0$, $Pr[o = 0] = c\cdot e^{\epsilon \times 0/2}$ and $Pr[o = 1] = c\cdot e^{\epsilon \times 1/2}$ (Definition~\ref{def:EM}), where $c$ is a positive constant. Since there are only two outcomes 
\begin{align}
    Pr[o = 0] & = \frac{1}{1 + e^{\epsilon/2}}\label{eq:LM-naive-0}\\
    Pr[o = 1] & = \frac{e^{\epsilon/2}}{1 + e^{\epsilon/2}}\label{eq:LM-naive-1}
\end{align}
When $q(D) \in \intvl$, $o = 0$ is the wrong outcome and the error is given by (\ref{eq:LM-naive-0}).
Similarly for $q(D)\not\in \intvl$, error equals $1/(1 + e^{\epsilon/2})$.

In contrast, \lapcount\ shows that lower error may be achieved. 
For example, when $q(D) = q(D_s)$ but $o=0$ (case (3) in Proposition~\ref{prop:lapcount_err}), \lapcount's error, $e^{-\tau\epsilon}$, is smaller than \emcountNaive's error for values of $\epsilon$ and $\tau$ such as $0.1$ and $8$, respectively. 
We improve upon \emcountNaive\ by engineering a new score function with a smaller global sensitivity.

\subsubsection{An improved EM-based per-query decider.~} 
 To use the EM more reliably, we propose a new score function $u'(D, D_s, q, \tau, o)$: 
\begin{align}
    \footnotesize
    \text{If $q(D)\not\in (l - \tau, r + \tau)$}
    \left\{
        \begin{array}{lr}
            u'(D, D_s, q, \tau, o = 0) = 1\\
            u'(D, D_s, q, \tau, o = 1) = 0
        \end{array}
    \right. \label{eq:uprime_1}
\end{align}
\begin{align}
    \footnotesize
    \text{If $q(D)\in (l - \tau, q(D_s)]$}
    \left\{
        \begin{array}{lr}
            u'(D, D_s, q, \tau, o = 0) = 1 - \frac{q(D) - (l-\tau)}{2\tau}\\
            u'(D, D_s, q, \tau, o = 1) = \frac{q(D) - (l - \tau)}{2\tau}
        \end{array}
    \right. \label{eq:uprime_2}
\end{align}
\begin{align}
    \footnotesize
    \text{If $q(D)\in (q(D_s), r + \tau)$}
    \left\{
        \begin{array}{lr}
            u'(D, D_s, q, \tau, o = 0) = \frac{q(D) - q(D_s)}{2\tau}\\
            u'(D, D_s, q, \tau, o = 1) = 1 - \frac{q(D) - q(D_s)}{2\tau}
        \end{array}
    \right. \label{eq:uprime_3}
\end{align}
We illustrate $u'$ in Figure~\ref{fig:smooth_score_lin_q}. 
Note that $u'$ outputs a value in the range $[0, 1]$ by definition, and not just in the set $\{0, 1\}$. It allows for a more gradual transition between scores of $0$ and $1$.

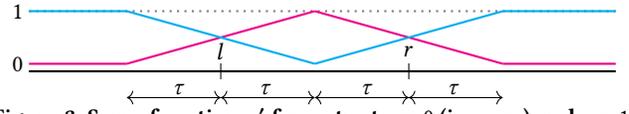
\begin{figure}[t] 
    \centering
    \begin{tikzpicture}
        \coordinate (A5) at ($(2.75cm,0.5)$) {};
        \draw ($(A5)+(0,3pt)$) -- ($(A5)-(0,3pt)$);
        \node at ($(A5)+(0,2ex)$) {$l$};
        
        \coordinate (A11) at ($(5.25cm,0.5)$) {};
        \draw ($(A11)+(0,3pt)$) -- ($(A11)-(0,3pt)$);
        \node at ($(A11)+(0,2ex)$) {$r$};
        
        \coordinate (L1) at ($(2.2cm,0.2)$) {};
        \node at ($(L1)+(0,0.5ex)$) {$\tau$};
        \coordinate (L2) at ($(3.35cm,0.2)$) {};
        \node at ($(L2)+(0,0.5ex)$) {$\tau$};
        
        \coordinate (R1) at ($(4.7cm,0.2)$) {};
        \node at ($(R1)+(0,0.5ex)$) {$\tau$};
        \coordinate (R2) at ($(5.85cm,0.2)$) {};
        \node at ($(R2)+(0,0.5ex)$) {$\tau$};
        
        \draw[thick, black, -] (0.2, 0.5) -- (8, 0.5);
        \coordinate (L0) at ($(0.1*0.5cm,0.6)$) {};
        \node at ($(L0)$) {$0$};
        
        \draw[dotted, thick, gray, -] (0.2, 1.3) -- (8, 1.3);
        \coordinate (L1) at ($(0.1*0.5cm,1.3)$) {};
        \node at ($(L1)$) {$1$};
        
        \draw[thick, magenta, -] (0.2, 0.6) -- (1.5, 0.6);
        \draw[thick, magenta, -] (1.5, 0.6) -- (4, 1.3);
        \draw[thick, magenta, -] (4, 1.3) -- (6.5, 0.6);
        \draw[thick, magenta, -] (6.5, 0.6) -- (8, 0.6);
        
        \draw[thick, cyan, -] (0.2, 1.3) -- (1.5, 1.3);
        \draw[thick, cyan, -] (1.5, 1.3) -- (4, 0.6);
        \draw[thick, cyan, -] (4, 0.6) -- (6.5, 1.3);
        \draw[thick, cyan, -] (6.5, 1.3) -- (8, 1.3);
        
        \draw[thin, <->] (1.5, 0.15) -- (2.75, 0.15);
        \draw[thin, <->] (2.75, 0.15) -- (4, 0.15);
        
        \draw[thin, <->] (4, 0.15) -- (5.25, 0.15);
        \draw[thin, <->] (5.25, 0.15) -- (6.5, 0.15);
    \end{tikzpicture}
    \caption{Score function $u'$ for output $o=0$ (in cyan) and $o=1$ (in pink). Recall that $r - l = 2\cdot \tau$.}
    \label{fig:smooth_score_lin_q}
\end{figure}

We give the full proof of Proposition~\ref{prop:uprime_sens} in Appendix~\ref{subsec:appendix_uprime_sens}.

\begin{proposition}\label{prop:uprime_sens}
    The global sensitivity of $u'$ is $1/2\tau$.
\end{proposition}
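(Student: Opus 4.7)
The plan is to view $u'(\cdot, o)$, for each fixed outcome $o \in \{0,1\}$, as a function of the scalar $q(D)$ and show it is Lipschitz in that scalar with constant $1/(2\tau)$; then use $\Delta q = 1$ for COUNT queries to conclude the global sensitivity is $1/(2\tau)$.

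First I would unpack the three cases (\ref{eq:uprime_1}), (\ref{eq:uprime_2}), (\ref{eq:uprime_3}) and note that on each piece, $u'(\cdot, o)$ is an affine function of $q(D)$ whose slope is either $0$ (on the outer pieces) or $\pm 1/(2\tau)$ (on the two middle pieces), directly by inspection of the formulas. Next I would verify continuity at the three breakpoints $q(D) = l - \tau$, $q(D) = q(D_s)$, and $q(D) = r + \tau$ by plugging in and checking the three pieces agree at the junctions; for example, at $q(D) = q(D_s)$ the second piece gives $u'(o=1) = (q(D_s) - (l - \tau))/(2\tau) = 1$, matching the third piece's value of $1 - 0 = 1$. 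Together, these two facts say that $u'(\cdot, o)$, as a function of $q(D)$, is continuous and piecewise linear with slope magnitude at most $1/(2\tau)$, hence Lipschitz with constant $1/(2\tau)$.

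Second, for any neighboring pair $D \approx D'$ and a COUNT query $q$, Definition~\ref{def:GS} gives $|q(D) - q(D')| \leq \Delta q = 1$. Applying the Lipschitz bound piecewise (and summing over any breakpoints crossed between $q(D)$ and $q(D')$) yields
\begin{align*}
|u'(D, D_s, q, \tau, o) - u'(D', D_s, q, \tau, o)| \;\leq\; \frac{1}{2\tau}\,|q(D) - q(D')| \;\leq\; \frac{1}{2\tau},
\end{align*}
which is the required upper bound on $\Delta u'$. Finally, to see the bound is tight, I would exhibit a neighboring pair achieving it: take $D$ with $q(D) = q(D_s)$ and $D' \approx D$ with $q(D') = q(D_s) + 1$ (both lying in the third piece, assuming $2\tau \geq 1$ so we stay inside $(q(D_s), r + \tau)$); then $u'(o=1)$ drops by exactly $1/(2\tau)$, witnessing the equality $\Delta u' = 1/(2\tau)$.

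The only delicate step is the continuity check: if the three pieces failed to agree at any breakpoint, a single unit step in $q(D)$ could cause a jump larger than $1/(2\tau)$, so the Lipschitz argument hinges on the piecewise definition being well-stitched at $l - \tau$, $q(D_s)$, and $r + \tau$. Beyond this case analysis, the rest is mechanical.
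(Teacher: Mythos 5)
Your proof is correct and follows essentially the same route as the paper's: the paper enumerates the cases where $q(D)$ and $q(D')=q(D)\pm 1$ fall in the same or in adjacent pieces of $u'$ and bounds the score difference in each, which amounts to exactly your verification that each piece has slope magnitude at most $1/(2\tau)$ together with agreement at the breakpoints. Your Lipschitz/continuity packaging handles the cross-piece cases a bit more systematically than the paper's (which only asserts they contribute less than $1/(2\tau)$), and the one minor slip --- $q(D)=q(D_s)$ sits at the right endpoint of the second piece in~(\ref{eq:uprime_2}), not in the third --- is harmless since the score there is $1$ either way.
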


We refer to the algorithm that directly instantiates the EM (Definition~\ref{def:EM}) with $\mathcal{R} = \{0, 1\}$ and score function $u'$ as \emcount. Since we post-process a DP estimate (Proposition~\ref{prop:DP-comp-post}), the following holds:

\begin{observation}
    \emcount\ satisfies $\epsilon$-DP.
\end{observation}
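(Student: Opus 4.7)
The plan is to reduce the claim to the standard privacy guarantee of the Exponential Mechanism. By construction, \emcount\ is precisely the EM of Definition~\ref{def:EM} instantiated with output range $\mathcal{R}=\{0,1\}$ and the score function $u'$ defined in (\ref{eq:uprime_1})--(\ref{eq:uprime_3}); the only input on which privacy must be protected is the private database $D$, while $D_s$, $q$, $\tau$, and $\epsilon$ are public parameters that do not depend on $D$.

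First I would note that the EM is known to satisfy $\epsilon$-DP whenever the sampling distribution is calibrated using the correct global sensitivity $\Delta u$ of the score function with respect to neighboring databases. Second, I would invoke Proposition~\ref{prop:uprime_sens}, which gives $\Delta u' = 1/(2\tau)$; substituting this value into Definition~\ref{def:EM} produces exactly the sampling rule that \emcount\ uses. Hence the generic privacy guarantee of the EM applies verbatim, yielding the $\epsilon$-DP property of \emcount.

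The main (and essentially only) subtlety is to be explicit that $q(D_s)$, which appears inside $u'$ through $l = q(D_s)-\tau$ and $r = q(D_s)+\tau$, is treated as a public quantity: it is a function of the synthetic release $D_s$, not of $D$, so the sensitivity computation in Proposition~\ref{prop:uprime_sens} legitimately captures how $u'$ changes when a single tuple of $D$ is inserted or removed. Once this is observed, the output $o$ is directly the EM's output and no further post-processing argument is needed beyond the textbook privacy proof of the Exponential Mechanism, so the observation follows immediately.
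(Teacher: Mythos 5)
Your argument is correct, and it is in fact the more apt justification. The paper disposes of this observation with a single sentence appealing to post-processing of a DP estimate (Proposition~\ref{prop:DP-comp-post}), the same sentence it uses for \lapcount; but \emcount\ does not compute a noisy estimate of $q(D)$ and then post-process it --- it directly samples $o\in\{0,1\}$ from the Exponential Mechanism, so the post-processing step is at most the identity map and the real content is exactly what you supply: the EM's own $\epsilon$-DP guarantee, valid once the sampling distribution $c\cdot e^{\epsilon u'(D,\cdot)/(2\Delta u')}$ of Definition~\ref{def:EM} is calibrated with the correct global sensitivity $\Delta u' = 1/(2\tau)$ from Proposition~\ref{prop:uprime_sens}. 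Your explicit remark that $q(D_s)$ (and hence $l$ and $r$) is a public quantity, so that the sensitivity of $u'$ is measured only over changes to $D$, is a worthwhile point that the paper leaves implicit; it is precisely what makes Proposition~\ref{prop:uprime_sens} the right sensitivity to plug in. In short, both routes are one step long, but yours names the step that actually carries the privacy guarantee.
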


We now give an upper bound for the effectiveness threshold (proof in Appendix~\ref{subsec:appendix_tau_min_emcount}).

\begin{proposition}\label{prop:tau_min_emcount}
    Given a private database $D$, 
    synthetic database $D_s$, 
    \countQuery\ query $q$, 
    privacy budget $\epsilon$, and 
    error probability $0 < \delta < 1$, 
    the effectiveness threshold of the per-query decider \emcount\ at $\delta$ (Definition~\ref{def:effectiveness_metric}) has the following upper bound:
    $\tau_{min}^{\emcount, \delta} \leq \frac{1}{\epsilon} \ln{\frac{1 - \delta}{\delta}}$.
\end{proposition}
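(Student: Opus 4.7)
\medskip

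The plan is to unwind the Exponential Mechanism with the engineered score function $u'$ at the two "critical" configurations of $q(D)$ named in Definition~\ref{def:effectiveness_metric}, and then solve a single scalar inequality in each case. By Definition~\ref{def:EM} together with Proposition~\ref{prop:uprime_sens}, we have $\Delta u' = 1/(2\tau)$, so the EM draws outcome $o \in \{0,1\}$ with probability proportional to
\[
    \exp\!\Bigl(\tfrac{\epsilon \, u'(D, D_s, q, \tau, o)}{2 \Delta u'}\Bigr)
    \;=\; \exp\bigl(\epsilon \tau \cdot u'(D, D_s, q, \tau, o)\bigr),
\]
which is the key simplification that makes the $\tau$ dependence collapse into a single exponent.

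Next I would verify Condition~(1) of Definition~\ref{def:effectiveness_metric}. When $q(D) = q(D_s)$, the point $q(D)$ sits at the apex of the piecewise-linear profile of Figure~\ref{fig:smooth_score_lin_q}, so evaluating (\ref{eq:uprime_2}) (or equivalently the left-limit of (\ref{eq:uprime_3})) yields $u'(o=1) = 1$ and $u'(o=0) = 0$. Combining with the display above gives $Pr[o=1] = e^{\epsilon \tau}/(1 + e^{\epsilon \tau})$. Requiring this to be at least $1-\delta$ and rearranging produces $\tau \geq \tfrac{1}{\epsilon} \ln \tfrac{1-\delta}{\delta}$.

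For Condition~(2), observe that whenever $q(D) \notin (q(D_s) - 2\tau, q(D_s) + 2\tau) = (l - \tau, r + \tau)$, the score function falls entirely into the first branch (\ref{eq:uprime_1}), so $u'(o=0) = 1$ and $u'(o=1) = 0$ uniformly over \emph{all} such $q(D)$ (i.e., there is no worst-case-over-$q(D)$ subtlety; every point outside that interval gives identical probabilities). By symmetry with Case~(1), $Pr[o=0] = e^{\epsilon \tau}/(1 + e^{\epsilon \tau})$, and the same algebraic rearrangement yields the same lower bound on $\tau$.

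Since the two conditions impose the identical requirement, the smallest $\tau$ making \emcount\ effective at level $\delta$ is at most $\tfrac{1}{\epsilon} \ln \tfrac{1-\delta}{\delta}$, as claimed. The only mild obstacle I anticipate is being careful at the boundary points $q(D) = l-\tau$ and $q(D) = r+\tau$, where one must choose which branch of the piecewise definition applies; but since the two adjacent branches agree at those points by construction (both give $u'(o=0)=1$, $u'(o=1)=0$), the argument goes through without ambiguity. Everything else is symbolic manipulation of a single exponential ratio.
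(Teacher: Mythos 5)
Your proposal is correct and follows essentially the same route as the paper's proof: evaluate $u'$ at the two configurations of Definition~\ref{def:effectiveness_metric} (score $1$ for the correct outcome, $0$ for the other), use $\Delta u' = 1/(2\tau)$ to collapse the EM exponent to $\epsilon\tau u'$, and solve $\frac{e^{\epsilon\tau}}{1+e^{\epsilon\tau}} \geq 1-\delta$ for $\tau$. The only cosmetic difference is that the paper handles Condition~(2) via a without-loss-of-generality reduction to the left side rather than your uniform "entire first branch" observation, but both yield the identical bound.
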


\paratitle{Comparison of \lapcount\ and \emcount}
Figure~\ref{fig:LM_Algo1_comparison} depicts the error (along the Y axis) for fixed $D_s, q$, and $\epsilon$ but varying $q(D)$ values (along the X axis) when $\tau = 10$ (on the left) and $\tau = 30$ (on the right).
We plot the error profile for \lapcount\ following Proposition~\ref{prop:lapcount_err}. Note that it gives upper bounds when $q(D)\leq l$ or $q(D)\geq r$. We plot the error profile for \emcount\ based on Definition~\ref{def:EM} for the EM with score function $u'$ ((\ref{eq:uprime_1}) to (\ref{eq:uprime_3})).
For $\tau = 10$ and $q(D)\in \intvl$, \emcount's error is smaller. At $q(D) = q(D_s)$, \lapcount's error $e^{-\epsilon\tau}$ is larger than \emcount's error $\frac{1}{1 + e^{\epsilon\tau}}$ ($\epsilon\tau > 0$).

\begin{figure}[t]
    \centering
    \includegraphics[width=0.88\columnwidth]{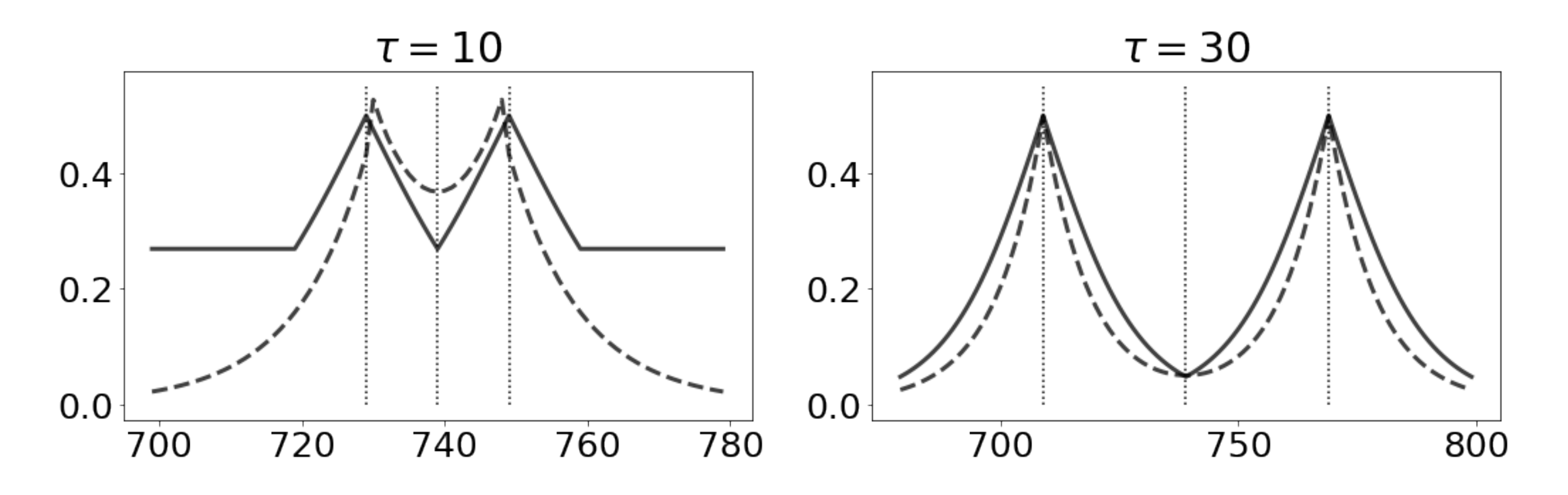}
    \caption{Error from \lapcount\ (dashed) and \emcount\ (solid) as $q(D)$ is varied along the X axis, where $D$ is the Adult dataset. The selection predicate is \texttt{$relationship$ LIKE `Unmarried' AND $sex$ LIKE `Male'}, $q(D_s) = 749, \epsilon = 0.1$ and $\tau = 10, 30$. \revm{The vertical lines mark the positions of $l, q(D_s)$, and $r$, respectively.}}
    \label{fig:LM_Algo1_comparison}
\end{figure}

\section{Solutions for \sumQuery\ query}\label{sec:sum}
We now discuss our solutions for \sumQuery\ query on attribute $A_i$: 
($1$) \lapsum\ (that instantiates Algorithm~\ref{algo:plug_in} with the Laplace Mechanism (LM)) in Section~\ref{subsec:lapsum}, 
($2$) \rtwotsum\ (that instantiates Algorithm~\ref{algo:plug_in} with the {\em Race-to-the-Top} (R2T) mechanism \cite{10.1145/3514221.3517844}) in Section~\ref{subsec:rtwotsum}, and 
($3$) \svtsum\  (that directly solves the problem using the {\em Sparse Vector Technique} (SVT) \cite{10.14778/3055330.3055331}) in Section~\ref{subsec:svtsum}. 
We derive upper bounds for their effectiveness thresholds and then compare them. 

Consider the \sumQuery\ query $q:$ \texttt{SELECT SUM($A_i$) FROM $D$} WHERE \whereClause, where \whereClause\ denotes the predicate in the WHERE clause and is empty. 
Its global sensitivity $\Delta q$ equals $\max dom(A_i)$ and is unbounded if the domain is unbounded. As done in previous work \cite{pmlr-v97-amin19a, DBLP:journals/corr/abs-1905-03871, NEURIPS2021_da54dd5a, 10.1145/3514221.3517844}, we assume a bound of \GSbound\ on $\Delta q$ and use it in the per-query deciders. Thus, we use \GSbound\ as the global sensitivity in the analysis.

\subsection{Laplace Mechanism-based Approach}\label{subsec:lapsum}
Our algorithm \lapsum\ works similarly to \lapcount. The only difference is that the scale of the Laplace distribution for noise $\nu_q$ is now $\GSbound/\epsilon$ instead of $1/\epsilon$. 
\lapsum\ works by running \BasicDecider\ (Algorithm~\ref{algo:plug_in}) with the LM (Definition~\ref{def:LM}) as \dpmech.
Since we post-process a DP estimate (Proposition~\ref{prop:DP-comp-post}):

\begin{observation}
    \lapsum\ satisfies $\epsilon$-DP.
\end{observation}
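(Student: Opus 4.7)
The plan is to show that \lapsum\ is a straightforward post-processing of a single application of the Laplace Mechanism, and hence inherits $\epsilon$-DP from it via Proposition~\ref{prop:DP-comp-post}(3). First I would unpack the algorithm: \lapsum\ is the instantiation of \BasicDecider\ (Algorithm~\ref{algo:plug_in}) where \dpmech\ is LM with noise scale $\GSbound/\epsilon$ and $\param = \emptyset$. Because we assumed $\Delta q \le \GSbound$ for the \sumQuery\ query $q$, the Laplace Mechanism that outputs $q(D) + \nu_q$ with $\nu_q \sim Lap(\GSbound/\epsilon)$ satisfies $\epsilon$-DP by Definition~\ref{def:LM}.

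Next I would argue that everything \lapsum\ does after invoking \dpmech\ is post-processing that does not touch $D$ again. Concretely, the only step that accesses the private database $D$ is line~\ref{l:basic_if} of Algorithm~\ref{algo:plug_in}, where $\dpmech(D, q, \epsilon, \emptyset)$ is evaluated. The subsequent comparison $-\tau < \dpmech(D, q, \epsilon, \emptyset) - q(D_s) < \tau$ is a deterministic function of the noisy output, $q(D_s)$, and $\tau$, all of which are either released DP quantities or inputs independent of $D$ (recall that \oursys\ treats $D_s$ as coming from a black-box SDG whose release is separate from the per-query decider's budget, per the discussion in Section~\ref{sec:prelim}). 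Therefore the map from the noisy estimate to $o \in \{0,1\}$ is a post-processing function.

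Finally, applying Proposition~\ref{prop:DP-comp-post}(3) to the $\epsilon$-DP Laplace output yields that the full output $o$ of \lapsum\ satisfies $\epsilon$-DP, establishing the observation. I do not anticipate any real obstacle: the entire content is an assembly of the privacy of LM with the post-processing closure property, with the only subtlety being the explicit invocation of the assumption that $\GSbound$ upper-bounds $\Delta q$ (justified in the setup of Section~\ref{sec:sum}) so that LM's advertised privacy guarantee is genuinely $\epsilon$-DP.
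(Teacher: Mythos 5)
Your proposal is correct and matches the paper's argument exactly: \lapsum\ runs the Laplace Mechanism with scale $\GSbound/\epsilon$ (valid since $\Delta q \leq \GSbound$ by assumption) and the interval check in Algorithm~\ref{algo:plug_in} is post-processing, so $\epsilon$-DP follows from Proposition~\ref{prop:DP-comp-post}(3). The paper states this in one line; your version just spells out the same reasoning more explicitly.
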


\lapsum\ suffers from the drawback that $GS_q$ is often large in practice, resulting in high variance in $\nu_q$.
For example, $A_i$ can represent incomes, distances, etc. and may contain large numbers.
This can cause \lapsum\ to make mistakes with higher probability. We derive an upper bound for the effectiveness threshold below.


\begin{proposition}\label{prop:tau_min_lapsum}
    Given a private database $D$, 
    synthetic database $D_s$, 
    \sumQuery\ query $q$ with global sensitivity \GSbound, 
    privacy budget $\epsilon$, and 
    error probability $0 < \delta < 1$, 
    the effectiveness threshold of the per-query decider \lapsum\ at $\delta$ (Definition~\ref{def:effectiveness_metric}) has the following upper bound: 
    $\tau_{min}^{\lapsum, \delta} \leq \frac{\GSbound}{\epsilon} \ln{\frac{1}{2\delta}}$.
\end{proposition}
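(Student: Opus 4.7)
My plan is to mimic the proof structure of Proposition~\ref{prop:tau_min_lapcount}, since \lapsum\ is essentially \lapcount\ with the Laplace scale changed from $1/\epsilon$ to $\GSbound/\epsilon$ (reflecting the fact that $\Delta q = \GSbound$ for the bounded-sensitivity \sumQuery\ query). Concretely, I will verify the two conditions of Definition~\ref{def:effectiveness_metric} separately, each time expressing the relevant error as a tail probability of $\nu_q \sim Lap(\GSbound/\epsilon)$, and then take the worse of the two resulting lower bounds on $\tau$.

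For condition~(1), assume $q(D)=q(D_s)$, so $q(D)$ lies strictly inside $\intvl=(l,r)$ with distance exactly $\tau$ from either endpoint. By Line~\ref{l:basic_if} of Algorithm~\ref{algo:plug_in}, the decider outputs $o=0$ precisely when $\nu_q \le -\tau$ or $\nu_q \ge \tau$. Applying (\ref{eq:lap-mod}) with $t=\tau\epsilon/\GSbound$ yields $Pr[o=0]\le e^{-\tau\epsilon/\GSbound}$, and requiring this to be at most $\delta$ gives a first lower bound on $\tau$. For condition~(2), consider (without loss of generality) $q(D)\le q(D_s)-2\tau = l-\tau$. Here $o=1$ forces $\nu_q \ge l - q(D) \ge \tau$, so by (\ref{eq:lap-geq}) with $t = (l-q(D))\epsilon/\GSbound \ge \tau\epsilon/\GSbound$ we get $Pr[o=1]\le \tfrac12 e^{-\tau\epsilon/\GSbound}$, and the symmetric case $q(D)\ge r+\tau$ is handled via (\ref{eq:lap-leq}). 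Requiring $\tfrac12 e^{-\tau\epsilon/\GSbound}\le \delta$ gives the second lower bound $\tau \ge \tfrac{\GSbound}{\epsilon}\ln\tfrac{1}{2\delta}$.

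Finally, I will take the maximum of the two thresholds obtained above; the larger (binding) one determines the effectiveness threshold, and I will report it as the claimed upper bound $\tau_{min}^{\lapsum,\delta}\le \tfrac{\GSbound}{\epsilon}\ln\tfrac{1}{2\delta}$, exactly in parallel with Proposition~\ref{prop:tau_min_lapcount} after substituting $\Delta q=\GSbound$ in place of $1$.

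The computation itself is essentially mechanical: the only real point of care is making sure the scale $\GSbound/\epsilon$ is substituted consistently into the Laplace tail identities (\ref{eq:lap-geq})--(\ref{eq:lap-mod}), which are stated for the specific scale $1/\epsilon$ used in the \countQuery\ analysis. Beyond that substitution and the sign bookkeeping that distinguishes the two sides of condition~(2), there is no substantive new idea to introduce, so the main ``obstacle'' is merely notational/arithmetic rather than conceptual.
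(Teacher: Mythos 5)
Your proposal follows the paper's proof essentially verbatim: restate the Laplace tail identities at scale $\GSbound/\epsilon$, verify the two conditions of Definition~\ref{def:effectiveness_metric} separately (using the symmetry of the Laplace distribution to handle condition~(2) without loss of generality on one side), and combine the two resulting thresholds on $\tau$. The one caveat --- which your write-up shares with the paper's own appendix proof --- is that condition~(1) actually forces $\tau \geq \frac{\GSbound}{\epsilon}\ln\frac{1}{\delta}$, which is \emph{larger} than $\frac{\GSbound}{\epsilon}\ln\frac{1}{2\delta}$, so your remark that the condition-(2) threshold is ``the larger (binding) one'' is reversed; taking the true maximum of the two sufficient thresholds yields the slightly weaker bound $\tau_{min}^{\lapsum,\delta}\leq\frac{\GSbound}{\epsilon}\ln\frac{1}{\delta}$ rather than the stated constant.
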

\begin{proof}[Proof Sketch]
    The proof is the same as that of Proposition~\ref{prop:tau_min_lapcount} (full proof given in Appendix~\ref{subsec:appendix_tau_min_lapsum}).
    Since $\nu_q\sim Lap(\GSbound/\epsilon)$, bounds (\ref{eq:lap-geq})-(\ref{eq:lap-mod}) now use $\frac{\GSbound}{\epsilon}$ than $\frac{1}{\epsilon}$.
\end{proof}

We now give an example to illustrate how $\tau_{min}^{\lapsum, \delta}$ is computed. We will refer to this example again in Sections~\ref{subsec:rtwotsum} and \ref{subsec:svtsum}.

\begin{example}\label{example:tau_bound_sum_LM}
    Let $D$ be the database derived from IPUMS-CPS \cite{IPUMS} with attributes $age,$ $sex, education$ and $income$-$total$, and $D_s$ be its synthetic copy from an SDG.  
    Consider the query $q$: \texttt{SELECT SUM($income$-$total$) FROM $D_s$ WHERE $age \leq 18$}. Suppose $\GSbound = 2M, \DLS = 9K$ and $\epsilon = 0.1$. 
    For $\delta = 0.05$, we get $\tau_{min}^{\lapsum, \delta} = \frac{\GSbound}{\epsilon} \ln{\frac{1}{2\delta}} = 4.605\times 10^7$.
\end{example}

Observe that $\tau_{min}^{\lapsum, \delta}$ is proportional to \GSbound, so it is large in part due to \GSbound\ being large. If \GSbound\ was $2$, then $\tau_{min}^{\lapsum, 0.05} = 46.052$. Thus, $\tau_{min}^{\lapsum, \delta}$ highly depends on $dom(A_i)$ for sum, as expected.


\subsection{R2T-based Approach}\label{subsec:rtwotsum}
Our next algorithm, \rtwotsum, uses a state-of-the-art DP mechanism for sum queries called {\em Race-to-the-Top} (R2T) \cite{10.1145/3514221.3517844} to obtain a DP estimate for $q(D)$. 
R2T constructs $\log(\GSbound)$ number of queries (described below) with global sensitivities bounded by $2, 4, 8, \ldots, \GSbound$. Then it computes a noisy estimate for each query using the LM (Definition~\ref{def:LM}) and returns the max of the largest noisy estimate and $0$ (for non-negative integer-valued domain).

In \rtwotsum, the $j$-th query (for $j = 1$ to $\log(\GSbound)$) works by running the original \sumQuery\ query $q$ on a truncated database constructed by removing tuples in $D$ with $A_i$ value greater than $t_j = 2^j$. The output is denoted by $q(D, t_j)$.
In Appendix~\ref{subsec:appendix_R2T_truncSatProperties_Thm5pt1},
we show that this choice of {\em truncation function} for constructing queries satisfies the required properties as stated in \cite{10.1145/3514221.3517844}. 

R2T \cite{10.1145/3514221.3517844} computes the noisy estimates for the constructed queries as follows (for a given parameter $0 < \beta < 1$, which corresponds to the confidence bound as equation (\ref{eq:R2T_thm5-1}) below will show):
\begin{footnotesize}
    \begin{align}
        \tilde{q}(D, t_j) = q(D, t_j) + Lap\left( \frac{t_j}{\epsilon/\log(\GSbound)} \right) - \frac{t_j}{\epsilon/\log(\GSbound)} \ln{\frac{\log(\GSbound)}{\beta}}\label{eq:R2T-queries}
    \end{align}
\end{footnotesize}
where $t_j = 2^j, j = 1$ to $\log(\GSbound)$. The final estimate for $q(D)$ is:
\begin{footnotesize}
    \begin{align}
        \tilde{q}(D) = \max \{\max_j \tilde{q}(D, t_j), q(D, 0)\}\label{eq:R2T-output}
    \end{align}
\end{footnotesize}

In our setting, the downward local sensitivity \DLS\ (Definition~\ref{def:DLS}) is the largest $A_i$ value from tuples in $D$ that satisfy the WHERE condition \whereClause\ in $q$, because removing the matching tuple gives the worst-case change for $D\approx D', D'\subseteq D$.
The following result from \cite{10.1145/3514221.3517844} holds (discussed further in Appendix~\ref{subsec:appendix_R2T_truncSatProperties_Thm5pt1})
\begin{footnotesize}
    \begin{align}
        Pr\left[ q(D) \geq \tilde{q}(D) \geq q(D) - 4\log(\GSbound) \ln{\left( \frac{\log(\GSbound)}{\beta} \right)} \frac{\DLS}{\epsilon} \right]\geq 1 - \beta\label{eq:R2T_thm5-1}
    \end{align}
\end{footnotesize}
In other words, the probability that the noisy estimate $\tilde{q}(D)$ for $q(D)$ in \rtwotsum\ lies in the range $[q(D) - 4\log(\GSbound) \ln{\left( \frac{\log(\GSbound)}{\beta} \right)}\cdot$ $\frac{\DLS}{\epsilon},$ $q(D)]$ is at least $1 - \beta$.

To summarize, \rtwotsum\ works by running \BasicDecider\ (Algorithm~\ref{algo:plug_in}) with R2T \cite{10.1145/3514221.3517844} (and the aforementioned truncation function) as \dpmech, with parameters $\GSbound$ and $0 < \beta < 1$ for confidence bound in (\ref{eq:R2T_thm5-1}). Since we post-process a DP estimate (Proposition~\ref{prop:DP-comp-post}), the following holds:

\begin{observation}
    \rtwotsum\ satisfies $\epsilon$-DP.
\end{observation}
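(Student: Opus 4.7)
The plan is to reduce the claim to two ingredients: the $\epsilon$-DP guarantee of the R2T subroutine and the post-processing property of differential privacy. Since \rtwotsum\ is defined as running \BasicDecider\ (Algorithm~\ref{algo:plug_in}) with \dpmech\ instantiated as R2T, and the only access \BasicDecider\ makes to the private database $D$ occurs inside the single call to \dpmech, everything after that call --- namely the subtraction of $q(D_s)$ and the threshold comparison on line~\ref{l:basic_if} --- is a deterministic function of that one noisy output together with the public inputs $q(D_s)$ and $\tau$. By Proposition~\ref{prop:DP-comp-post}(3), it therefore suffices to show that R2T as specified in (\ref{eq:R2T-queries})--(\ref{eq:R2T-output}) is itself $\epsilon$-DP at its declared budget.

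First I would verify that each of the $\log(\GSbound)$ intermediate noisy estimates $\tilde{q}(D, t_j)$ in~(\ref{eq:R2T-queries}) is $(\epsilon/\log(\GSbound))$-DP. The truncated query $q(D, t_j)$, which drops every tuple with $A_i > t_j = 2^j$ before summing, has global sensitivity at most $t_j$, since adding or removing a single tuple can change the truncated sum by at most $t_j$. The Laplace noise added has scale $t_j/(\epsilon/\log(\GSbound))$, matching this sensitivity, so by Definition~\ref{def:LM} each $\tilde{q}(D, t_j)$ is $(\epsilon/\log(\GSbound))$-DP. The deterministic offset subtracted in~(\ref{eq:R2T-queries}) depends only on the public parameters $t_j, \epsilon, \GSbound, \beta$, so it is post-processing and does not cost additional budget.

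Next I would apply sequential composition (Proposition~\ref{prop:DP-comp-post}(1)) across all $\log(\GSbound)$ sub-queries, yielding $\log(\GSbound)\cdot(\epsilon/\log(\GSbound)) = \epsilon$-DP for the tuple of noisy estimates. The combination step in~(\ref{eq:R2T-output}) --- taking the maximum of the $\tilde{q}(D, t_j)$ together with $q(D, 0)$, which equals $0$ since truncation at threshold $0$ admits no tuples and is therefore data-independent --- is a deterministic post-processing of the already $\epsilon$-DP tuple, so $\tilde{q}(D)$ satisfies $\epsilon$-DP. Chaining this with the outer post-processing argument from the first paragraph yields the observation.

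The main obstacle is purely bookkeeping rather than conceptual: one must confirm that the Laplace scale $t_j/(\epsilon/\log(\GSbound))$ really does match the global sensitivity of $q(D, t_j)$ under the neighboring relation $\approx$, and that $q(D, 0) = 0$ is truly data-independent so that it can appear inside the $\max$ without extra budget. Both facts follow from the truncation properties established in Appendix~\ref{subsec:appendix_R2T_truncSatProperties_Thm5pt1} and the original R2T analysis in~\cite{10.1145/3514221.3517844}, so I would cite these directly rather than reprove them here.
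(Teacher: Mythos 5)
Your proof is correct and follows the same route as the paper, which disposes of the observation in one line by noting that \BasicDecider\ merely post-processes the DP estimate returned by \dpmech\ (Proposition~\ref{prop:DP-comp-post}). The only difference is that you additionally re-derive R2T's own $\epsilon$-DP guarantee (per-subquery budget $\epsilon/\log(\GSbound)$, sequential composition, and the data-independence of $q(D,0)$), whereas the paper simply inherits that guarantee from~\cite{10.1145/3514221.3517844}; your accounting there is accurate and consistent with Proposition~\ref{prop:R2T-trunc-properties}.
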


We give an upper bound on the effectiveness threshold of \rtwotsum\ using (\ref{eq:R2T_thm5-1}) (proof in Appendix~\ref{subsec:appendix_tau_min_rtwotsum}).

\begin{proposition}\label{prop:tau_min_rtwotsum}
    Given a private database $D$, 
    synthetic database $D_s$, 
    \sumQuery\ query $q$ with global sensitivity \GSbound and downward local sensitivity \DLS, 
    privacy budget $\epsilon$, and 
    error probability $0 < \delta < 1$, 
    the effectiveness threshold of the per-query decider \rtwotsum\ at $\delta$ (Definition~\ref{def:effectiveness_metric}) has the following upper bound:  
    $\tau_{min}^{\rtwotsum, \delta} \leq 4\log(\GSbound) \ln{\left( \frac{\log(\GSbound)}{\delta} \right)} \frac{\DLS}{\epsilon}$.
\end{proposition}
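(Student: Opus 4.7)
The plan is to mirror the structure of the proofs for Propositions~\ref{prop:tau_min_lapcount} and~\ref{prop:tau_min_lapsum}, but replacing the symmetric Laplace tail bounds with the one-sided R2T confidence bound from (\ref{eq:R2T_thm5-1}). First I would set $\beta = \delta$ in that bound, so that with probability at least $1 - \delta$ the noisy estimate $\tilde{q}(D)$ produced inside $\rtwotsum$ lies in the interval $[q(D) - B,\, q(D)]$, where $B := 4\log(\GSbound)\ln(\log(\GSbound)/\delta)\DLS/\epsilon$ is exactly the right-hand side of the claimed bound. Call this good event $E$. Since $\rtwotsum$ instantiates Algorithm~\ref{algo:plug_in}, on line~\ref{l:basic_if} the decision is $o = 1$ iff $\tilde{q}(D) \in (q(D_s) - \tau,\, q(D_s) + \tau)$, so verifying effectiveness reduces to showing that whenever $\tau \geq B$ (with slight slack to accommodate strict inequalities), event $E$ forces the correct outcome under each of the two conditions in Definition~\ref{def:effectiveness_metric}.

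For Condition (1), when $q(D) = q(D_s)$, conditioning on $E$ gives $q(D_s) - B \leq \tilde{q}(D) \leq q(D_s)$, and taking $\tau$ slightly above $B$ places $\tilde{q}(D)$ strictly inside $(q(D_s) - \tau,\, q(D_s) + \tau)$, so $Pr[o = 1] \geq 1 - \delta$. For Condition (2), I would split on the sign of $q(D) - q(D_s)$. If $q(D) \leq q(D_s) - 2\tau$, then under $E$ we directly get $\tilde{q}(D) \leq q(D) \leq q(D_s) - 2\tau \leq q(D_s) - \tau$, so $o = 0$ regardless of the relation between $\tau$ and $B$; this case is free precisely because R2T's one-sided bias is in the useful direction. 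The binding case is $q(D) \geq q(D_s) + 2\tau$: here $E$ gives $\tilde{q}(D) \geq q(D) - B \geq q(D_s) + 2\tau - B$, and requiring this to be at least $q(D_s) + \tau$ forces $\tau \geq B$, which is exactly the claimed threshold.

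Combining the two cases, for any $\tau$ slightly exceeding $B$ both effectiveness conditions hold at confidence $1 - \delta$, so $\tau_{min}^{\rtwotsum, \delta} \leq B$. The main obstacle I anticipate is nothing substantive but rather the strict-versus-non-strict boundary at $\tau = B$ coming from Algorithm~\ref{algo:plug_in}'s open interval on line~\ref{l:basic_if}: the R2T bound is non-strict ($\tilde{q}(D) \geq q(D) - B$), but the decider uses strict inequalities. I plan to handle this exactly as the earlier two propositions do, phrasing the argument in terms of an infimum over $\tau > B$, so the stated $\leq$ bound follows. No fresh tail computation is required; the entire probabilistic content is already encapsulated in (\ref{eq:R2T_thm5-1}), and the proof is mostly arithmetic manipulation of the interval inclusions induced by event $E$.
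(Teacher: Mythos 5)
Your proposal is correct and follows essentially the same route as the paper's own proof: instantiate the R2T guarantee (\ref{eq:R2T_thm5-1}) with $\beta = \delta$, observe that the case $q(D) \leq q(D_s) - 2\tau$ is handled for free by the one-sided (downward-biased) nature of $\tilde{q}(D)$, and extract the threshold $\tau \geq 4\log(\GSbound)\ln(\log(\GSbound)/\delta)\DLS/\epsilon$ from the matched case and the case $q(D) \geq q(D_s) + 2\tau$. Your explicit handling of the strict-versus-non-strict boundary is a minor refinement the paper elides, but the substance is identical.
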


We now illustrate how to compute $\tau_{min}^{\rtwotsum, \delta}$ for Example~\ref{example:tau_bound_sum_LM}.

\begin{example}\label{example:tau_bound_sum_R2T}
    Recall the setting in Example~\ref{example:tau_bound_sum_LM}.
    As a result, 
    \begin{align*}
        \tau_{min}^{\rtwotsum, \delta} = 4\log(\GSbound) \ln{\left( \frac{\log(\GSbound)}{\delta} \right)} \frac{\DLS}{\epsilon} = 4.549\times 10^7
    \end{align*}
    which is less than $\tau_{min}^{\lapsum, \delta} = 4.605 \times 10^7$.
\end{example}

\revb{
Observe that $\tau_{min}^{\rtwotsum, \delta}$ is proportional to $\log{(\GSbound)} \log{\log{(\GSbound)}}\cdot$ $\DLS$, so it is large in part due to \GSbound\ being large. In Example~\ref{example:tau_bound_sum_R2T}, $\tau_{min}^{\rtwotsum, \delta} < \tau_{min}^{\lapsum, \delta}$ due to the large gap in \GSbound\ and \DLS. In our experiments (Section~\ref{sec:experiments}), we empirically show how the errors for \sumQuery\ queries vary in datasets with larger and smaller \GSbound\ values.
}

\subsection{SVT-based Approach}\label{subsec:svtsum}
The {\em Sparse Vector Technique} (SVT) \cite{DP_book, 10.14778/3055330.3055331} is a DP mechanism to report whether the output of a query $q$ on $D$ exceeds its specified threshold. 
It compares noisy versions of $q(D)$ and the threshold, and gives a yes ($\top$ = noisy query answer exceeded noisy threshold) or no ($\bot$ = noisy query answer did not exceed noisy threshold) answer. An advantage of using SVT is that it consumes privacy budget only if output = $\top$. 
The SVT algorithm is applied on a sequence of queries, each with its own threshold, resulting in an output given by $\{\top, \bot\}^l$, where $l$ is the number of queries answered. The balance on the privacy budget degrades with the number of queries with output = $\top$ until the privacy budget runs out \cite{10.14778/3055330.3055331} and SVT stops. 

We denote the per-query decider for \sumQuery\ query $q$ that runs SVT on a sequence of sum queries (described below) as \svtsum\ (Algorithm~\ref{algo:err_svtsum}). 
Given privacy budget $\epsilon$, \svtsum\ begins by sampling a noise term $\rho$ distributed as $Lap(\frac{1}{\epsilon/2})$ (line~\ref{l:div_budget}) and uses it to get the noisy thresholds for all queries. 
The remaining $\epsilon/2$ privacy budget is used towards the first $\top$ in the output, after which \svtsum\ stops.
We modify the queries from \rtwotsum\ (Section~\ref{subsec:rtwotsum}): $q_j(D)$ $= q(D, t_j)/t_j$, for $t_j = 2^j, j = 1$ to $\log(\GSbound)$. We divide by $t_j$ so that $\forall j, \Delta q_j = 1$ 
\revc{
in the worst-case (i.e. the max over the sensitivities of the input queries now does not exceed $1$), allowing the noise scales to be proportional to $1$ than \GSbound. The relevant threshold values are $l$ and $r$, which must also be divided by $t_j$.}
First, SVT is run on queries $q_j(D)$ with threshold $T_j = r/t_j$ (loop in line~\ref{l:1st_SVT}), adding independent Laplace noise distributed as $Lap(\frac{1}{\epsilon/2})$ to each $q_j(D)$. If any noisy $q_j(D)$ value results in $\top$, then $o = 0$ is returned (line~\ref{l:exceed_r}). 
If not, only $\epsilon/2$ privacy budget has been consumed by $\rho$ and SVT is run on queries $q_j(D)$ with threshold $T_j = (l + 1)/t_j$ (loop in line~\ref{l:2nd_SVT}), again adding independent Laplace noise distributed as $Lap(\frac{1}{\epsilon/2})$ to each $q_j(D)$. 
If any noisy $q_j(D)$ value results in $\top$, then $o = 1$ is returned (line~\ref{l:exceed_lplus1}). 
Otherwise, $o = 0$ is returned at the end (line~\ref{l:not_in_interval}). 

\begin{definition}\label{def:SVT_monotonic}
    In the context of SVT \cite{10.14778/3055330.3055331}, the sequence of queries used is called monotonic if, in going from $D$ to a neighboring database $D'$, all query answers that are different change in the same direction, i.e., they all increase or they all decrease. 
\end{definition}

Observe that $q_j(D)$s are monotonic because $\forall j, q(D, t_j)\geq q(D',$ $t_j)$ or $\forall j, q(D, t_j)\leq q(D', t_j)$. This helps save a factor of $2$ in the noise scale for $\nu_j$ (discussed further in Appendix~\ref{subsec:appendix_MonoQueries_isEpsDP}).
Unlike the previous per-query deciders that post-process a DP estimate, here we need to show that \svtsum\ preserves DP.

\IncMargin{1em}
\begin{algorithm}[!ht]
    \caption{Per-query decider \svtsum\ for \sumQuery}
    \label{algo:err_svtsum}
    
    \SetKwInOut{Input}{Input}\SetKwInOut{Output}{Output}
    \LinesNumbered
    \Input{$q$ - \sumQuery\ query, $D$ - private database, $D_s$ - synthetic database, $\tau$ - distance bound, $\epsilon$ - privacy budget}
    \Output{$o = 1$ if the desired distance bound from $q(D_s)$ is satisfied for $q(D)$, else $o = 0$.}
    
    \BlankLine
    
    \SetKwFunction{FMain}{\svtsum}
    \SetKwProg{Fn}{Function}{:}{}
    \Fn{\FMain{$q, D, D_s, \tau, \epsilon$}}{
        $A_i\gets$ aggregate attribute in $q$\;
        $l\gets q(D_s) - \tau, r\gets q(D_s) + \tau$\;
        $\rho \gets Lap(\frac{1}{\epsilon/2})$\; \label{l:div_budget}
        \For{$j\in \{1, 2, 3, \ldots, \lceil \log(\GSbound) \rceil \}$}
        {
            \label{l:1st_SVT}
            $t_j \gets 2^j, q(D, t_j) \gets$ \texttt{SELECT SUM($A_i$) FROM $D$ WHERE \whereClause\ AND $A_i \leq t_j$}\;
            $q_j(D) \gets q(D, t_j)/t_j, \nu_{j} \gets Lap(\frac{1}{\epsilon/2})$\;
            \If{$q_j(D) + \nu_j \geq r/t_j + \rho$}
            {\label{l:check_r}
                \Return $o = 0$ (``Distance bound unmet'')\; \label{l:exceed_r}
            }
        }
        \For{$j\in \{1, 2, 3, \ldots, \lceil \log(\GSbound) \rceil \}$}
        {
            \label{l:2nd_SVT}
            $t_j \gets 2^j, q(D, t_j) \gets$ \texttt{SELECT SUM($A_i$) FROM $D$ WHERE \whereClause\ AND $A_i \leq t_j$}\;
            $q_j(D) \gets q(D, t_j)/t_j, \nu_{j} \gets Lap(\frac{1}{\epsilon/2})$\;
            \If{$q_j(D) + \nu_j \geq (l + 1)/t_j + \rho$}
            {\label{l:check_l}
                \Return $o = 1$ (``Distance bound satisfied'')\; \label{l:exceed_lplus1}
            }
        }
        \Return $o = 0$ (``Distance bound unmet'')\; \label{l:not_in_interval}
    }
\end{algorithm}
\DecMargin{1em}

\begin{theorem} \label{thm:SVTsum_is_DP}
    \svtsum\ is $\epsilon$-DP.
\end{theorem}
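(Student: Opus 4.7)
The plan is to recognize \svtsum\ as a single invocation of the standard Sparse Vector Technique over a concatenated sequence of normalized queries with shared threshold noise, and then invoke the well-known privacy guarantee of SVT for monotonic query streams that halts upon the first above-threshold answer. Concretely, I would view the algorithm as one SVT run over the $2\lceil \log(\GSbound)\rceil$ queries $q_1,\ldots,q_{\lceil \log(\GSbound)\rceil}, q_1,\ldots,q_{\lceil \log(\GSbound)\rceil}$ with public thresholds $T_j = r/t_j$ in phase~1 and $T_j = (l+1)/t_j$ in phase~2. Since $\rho$ is sampled once and reused across both loops while each $\nu_j$ is drawn fresh and independently, this view is structurally identical to a standard SVT execution that halts on its first $\top$ (matching $c=1$). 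The final binary output $o$ is a deterministic function of whether and in which phase this $\top$ appears, so by Proposition~\ref{prop:DP-comp-post} it is enough to argue that the SVT run itself is $\epsilon$-DP.

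Next I would verify the two preconditions required by the monotonic SVT analysis. For sensitivity, inserting or deleting a tuple with attribute value $v$ changes $q(D,t_j)$ by at most $\min(v,t_j)\le t_j$ (values larger than $t_j$ are truncated away), hence $q_j(D)=q(D,t_j)/t_j$ changes by at most $1$, giving $\Delta q_j\le 1$. For monotonicity in the sense of Definition~\ref{def:SVT_monotonic}, an insertion either leaves $q(D,t_j)$ unchanged (if $v>t_j$) or increases it by $v$ (if $v\le t_j$), so every affected query moves in the same non-negative direction across all $j$ in both phases; deletions are symmetric. Thus all $2\lceil \log(\GSbound)\rceil$ queries form one monotonic stream with sensitivity at most $1$.

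With these facts in hand, the standard SVT privacy argument applies directly: condition on the whole output trace and couple neighboring executions by shifting $\rho$ to absorb the uniform monotonic shift in all query answers. This shift makes every $\bot$ event identical across $D$ and $D'$, costs $\epsilon/2$ in density for $\rho\sim \mathrm{Lap}(1/(\epsilon/2))$, and costs a further $\epsilon/2$ for the single $\top$ event coming from $\nu_j\sim \mathrm{Lap}(1/(\epsilon/2))$ (the factor-of-two savings promised by monotonicity are exactly what make the query-noise scale equal to $1/(\epsilon/2)$ rather than $2/(\epsilon/2)$). Summing yields the claimed $\epsilon$-DP bound.

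The main obstacle is justifying the ``single SVT'' viewpoint cleanly, i.e.\ ensuring that the transition from phase~1 to phase~2 is not charged as a sequential composition (which would double the budget). The key observation to make carefully is that phase~2 is entered only when phase~1 produced all $\bot$s, so across both phases combined there is at most $c=1$ $\top$ output; because the thresholds $T_j$ are public quantities (functions of $q(D_s),\tau,t_j$) and the $\nu_j$ in phase~2 are independent of phase~1, the concatenated stream satisfies the hypotheses of the standard monotonic-SVT theorem verbatim, without invoking Proposition~\ref{prop:DP-comp-post} to compose two separate SVT runs. A secondary subtlety worth flagging is that the scaling $q_j = q(D,t_j)/t_j$ achieves $\Delta q_j\le 1$ only when the thresholds $r/t_j$ and $(l+1)/t_j$ are treated as public constants; since they depend only on $q(D_s)$ and on $t_j=2^j$, this holds.
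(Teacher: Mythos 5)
Your proposal is correct and follows essentially the same route as the paper: both view the two loops as a single SVT execution sharing one threshold noise $\rho$, verify that the normalized truncated sums are monotonic with sensitivity at most $1$, and then run the standard shift-of-$\rho$ coupling so that the $\bot$ events cost nothing extra, the threshold noise costs $\epsilon/2$, and the single $\top$ costs $\epsilon/2$. The paper simply writes this out explicitly by case analysis on the two output-string forms ($\bot_1,\ldots,\top_1$ versus $\bot_1,\ldots,\bot_2,\ldots,\top_2$) and on the two monotone directions, which is exactly the argument you outline.
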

\begin{proof}[Proof Sketch]
    Let $\top_1$ and $\top_2$ denote the `yes' answers from SVT, i.e., when the noisy query answer exceeds the noisy threshold in lines~\ref{l:exceed_r} and \ref{l:exceed_lplus1}, respectively. Let $\bot_1$ and $\bot_2$, respectively, denote when these checks fail. Since the algorithm stops as soon as a $\top_1$ or $\top_2$ is returned in lines~\ref{l:exceed_r} or \ref{l:exceed_lplus1}, the output string $a$ is either of the form $\bot_1, \ldots, \bot_1, \top_1$ or $\bot_1, \ldots, \bot_1, \bot_2, \ldots, \bot_2, \top_2$. For both forms, we show in the full version that  $Pr[\svtsum(q, D, D_s, \tau, \epsilon) = a] \leq e^{\epsilon} Pr[\svtsum(q, D', D_s, \tau, \epsilon) = a]$ for $D\approx D'$ holds adapting ideas from \cite{10.14778/3055330.3055331}. Intuitively, \svtsum\ is equivalent to running SVT once, with the first half mapped to $o = 0$ and the remaining half mapped to $o = 1$. A complete proof is given in Appendix~\ref{subsec:appendix_MonoQueries_isEpsDP}.
\end{proof}

\svtsum\ incurs high error if $q(D)$ and $r$ (or $l$) are close because in the later iterations where $t_j$ values are large, the check is easily influenced by noise.
Note that $\forall t_k\geq \DLS, q(D, t_k) = q(D)$, where \DLS\ is the downward local sensitivity (Definition~\ref{def:DLS}).
We present an optimization 
in Algorithm~\ref{algo:upBound_DS_fromSVT}
in Appendix~\ref{subsec:appendix_SVT_optimization}
aimed at improving the accuracy of \svtsum\ by obtaining a private bound for \DLS\ to be used as the largest truncation threshold.
In the rest of the paper, by \svtsum\ we will refer to the improved algorithm using the bound from 
Algorithm~\ref{algo:upBound_DS_fromSVT}.
We analyze the error of \svtsum\ in Appendix~\ref{subsec:appendix_tau_min_svtsum}.\\

\paratitle{Comparison of \lapsum, \rtwotsum\ and \svtsum}
As demonstrated by Examples~\ref{example:tau_bound_sum_LM} and \ref{example:tau_bound_sum_R2T}, when the difference between \DLS\ and \GSbound\ is large, the effectiveness threshold for \rtwotsum\ is likely to be smaller than that of \lapsum. We show that \svtsum\ can achieve smaller error than \rtwotsum\ in the experiments (Section~\ref{sec:experiments}).


\section{Solutions for \medianQuery\ query}\label{sec:median}
We present two solutions and analyze their errors for \medianQuery\ query $q$ on attribute $A_i$: \texttt{SELECT MEDIAN($A_i$) FROM $D$ WHERE \whereClause}. ($1$) \emmed\ (that instantiates Algorithm~\ref{algo:plug_in} with the Exponential Mechanism (EM)) in Section~\ref{subsec:emmed}, and ($2$) \histmed\ (that directly solves the problem using a noisy histogram) in Section~\ref{subsec:histmed}.
The true output of the median query $q(D)$ is the $\lceil \frac{n'}{2} \rceil$-th element in the sorted list of $A_i$ values among tuples that satisfy the WHERE clause, where $n'$ is the (private) number of tuples in $D$ satisfying $\varphi$.

\subsection{Exponential Mechanism-based approach}\label{subsec:emmed}
Let $rank_{\whereClause}(D, e)$ be the output of the query: \texttt{SELECT COUNT(*) FROM $D$ WHERE \whereClause\ AND $A_i < e$}.
Our approach \emmed\ uses the algorithm from \cite{DBLP:journals/corr/abs-1103-5170} that computes a noisy estimate for $q(D)$.
\emmed\ runs \BasicDecider\ with the EM as \dpmech, with additional parameters $\mathcal{R} = dom(A_i)$ and score function $u(D, e) = -|rank_{\whereClause}(D, e) - \frac{n'}{2}|$, $\forall e\in \mathcal{R}$. The sensitivity of the score function equals $1$ because rank of any $e$ either stays the same or changes in the same direction as $n'$ between databases $D\approx D'$.
Since we post-process a DP estimate (Proposition~\ref{prop:DP-comp-post}), the following observation holds:

\begin{observation}
    \emmed\ satisfies $\epsilon$-DP.
\end{observation}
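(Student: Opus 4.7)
The plan is to decompose \emmed\ into two pieces and handle them separately: the EM invocation nested inside \BasicDecider\ (Algorithm~\ref{algo:plug_in}), and the deterministic comparison that \BasicDecider\ performs on the EM's output. Step one shows the inner EM call is $\epsilon$-DP, step two argues the surrounding comparison is a pure post-processing step, and the two combine via Proposition~\ref{prop:DP-comp-post}(3) to give the claim.

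\textbf{Step 1 (EM call is $\epsilon$-DP).} By Definition~\ref{def:EM}, the Exponential Mechanism with privacy budget $\epsilon$ and score-function sensitivity $\Delta u$ is $\epsilon$-DP, so the only substantive task is to verify the stated bound $\Delta u = 1$ for $u(D, e) = -|rank_{\whereClause}(D, e) - n'/2|$. Note that both $rank_{\whereClause}(D, e)$ and $n'$ are functions of $D$, so both can shift when we move to a neighbor $D'$. I would case-split on the differing tuple $t$: (i) $t$ fails $\whereClause$, so nothing moves and $u(D, e) = u(D', e)$; (ii) $t$ satisfies $\whereClause$ with $t.A_i < e$, in which case $rank_{\whereClause}(\cdot, e)$ and $n'$ change by $\pm 1$ \emph{together}; (iii) $t$ satisfies $\whereClause$ with $t.A_i \geq e$, in which case only $n'$ changes. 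In each case I would first bound the change in the quantity $rank_{\whereClause}(D, e) - n'/2$ itself and then push the bound through the outer $|\cdot|$ via the reverse triangle inequality, obtaining a uniform bound of $1$ on $|u(D, e) - u(D', e)|$ for every $e \in dom(A_i)$. Taking the supremum over $e$ and over neighbors yields $\Delta u \leq 1$, and the paper's earlier remark that rank and $n'$ ``change in the same direction'' is precisely what keeps case (ii) from producing a larger jump.

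\textbf{Step 2 (Post-processing).} \BasicDecider\ calls $\dpmech$ once to obtain the noisy estimate $\tilde{m} \in dom(A_i)$, and then returns $o = 1$ iff $\tilde{m} - q(D_s) \in (-\tau, \tau)$, else $o = 0$. Since $D_s$, $\tau$, and $q$ are inputs that do not depend on the private database $D$, the map $\tilde{m} \mapsto o$ is a deterministic function of the DP release $\tilde{m}$ alone, so the indicator check is a valid post-processing. Proposition~\ref{prop:DP-comp-post}(3) then lifts $\epsilon$-DP from $\tilde{m}$ to $o$, finishing the proof.

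\textbf{Main obstacle.} The only non-routine piece is case (ii) of Step~1, where both $rank_{\whereClause}(D, e)$ and $n'/2$ shift simultaneously; one must show that these two shifts partially cancel rather than reinforce. A clean argument needs an explicit case split on the sign of $rank_{\whereClause}(D, e) - n'/2$ together with the reverse triangle inequality, so that the half-unit drift in the denominator is correctly offset by the unit drift in the numerator. Everything else is an invocation of Definition~\ref{def:EM} and Proposition~\ref{prop:DP-comp-post}(3).
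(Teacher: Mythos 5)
Your proposal is correct and follows essentially the same route as the paper: verify that the rank-based score function has global sensitivity at most $1$ (via the observation that $rank_{\whereClause}(D,e)$ and $n'$ move in the same direction), invoke the Exponential Mechanism's privacy guarantee, and then treat the interval check in \BasicDecider\ as post-processing under Proposition~\ref{prop:DP-comp-post}(3). As a minor aside, your case analysis actually yields the tighter bound $\Delta u \leq 1/2$ under the add/remove neighbor definition, which only strengthens the $\epsilon$-DP conclusion when the mechanism is run with $\Delta u = 1$.
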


We analyze the error of \emmed\ in Appendix~\ref{subsec:appendix_emmed_err}.

\subsection{Histogram-Based Algorithm}\label{subsec:histmed}
We next propose a histogram-based approach called \histmed\ (Algorithm~\ref{algo:err_1_q_med}), which uses the intuition that if at least half the values in $A_i$ (from tuples satisfying \whereClause) either are less than or equal to $l$, or are greater than or equal to $r$, then $q(D)\not\in \intvl$. 
These bounds $l = q(D_s) - \tau$ and $r = q(D_s) + \tau$ are compared with a DP estimate, say $m$, for $\lceil \frac{n'}{2} \rceil$ obtained using the LM (line~\ref{l:n_private}, where $n'$ is the number of tuples in $D$ satisfying \whereClause ). 
If neither count exceeds $m$, then $o = 1$ (line \ref{l:satisfy_return}). Otherwise, $o = 0$ (lines~\ref{l:q1_return} and \ref{l:q2_return}).


\IncMargin{1em}
\begin{algorithm}[!ht]
    \caption{Per-query decider \histmed\ for \medianQuery}
    \label{algo:err_1_q_med}
    
    \SetKwInOut{Input}{Input}\SetKwInOut{Output}{Output}
    \LinesNumbered
    \Input{$q$ - \medianQuery\ query, $D$ - private database, $D_s$ - synthetic database, $\tau$ - distance bound, $\epsilon$ - privacy budget for error analysis}
    \Output{Whether the distance bound is met for $q$}
    
    \BlankLine
    
    \SetKwFunction{FMain}{\histmed}
    \SetKwProg{Fn}{Function}{:}{}
    \Fn{\FMain{$q, D, D_s, \tau, \epsilon$}}{
        $n'\gets$ number of tuples in $D$ that satisfy \whereClause\ in $q$\;
        $\nu_q\gets Lap(\frac{1}{\epsilon/2}), \tilde{n}\gets n' + \nu_q$\; \label{l:n_private}
        $A_i\gets$ attribute for median used in $q$\;
        $q_1(D)\gets$ \texttt{SELECT COUNT(*) FROM $D$ WHERE \whereClause\ AND $A_i\leq q(D_s) - \tau$}\; \label{l:q1_def}
        $q_2(D)\gets$ \texttt{SELECT COUNT(*) FROM $D$ WHERE \whereClause\ AND $A_i\geq q(D_s) + \tau$}\; \label{l:q2_def}
        $\nu_{q_1}, \nu_{q_2}\gets Lap(\frac{1}{\epsilon/2})$\;
        \If{$q_1(D) + \nu_{q_1} \geq \lceil \tilde{n}/2 \rceil $}
        {   \label{l:q1_return}
            \Return $o = 0$ (``Distance bound unmet'')\;
        }
        \ElseIf{$q_2(D) + \nu_{q_2} \geq \lceil \tilde{n}/2 \rceil $}
        {
            \label{l:q2_return}
            \Return $o = 0$ (``Distance bound unmet'')\;
        }
        \Else
        {
            \Return $o = 1$ (``Distance bound satisfied'')\; \label{l:satisfy_return}
        }
    }
\end{algorithm}
\DecMargin{1em}

We next show that \histmed\ is $\epsilon$-DP.

\begin{proposition}\label{prop:properties_histmed}
    \histmed\ satisfies $\epsilon$-DP.
\end{proposition}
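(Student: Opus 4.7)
The plan is to decompose \histmed\ into its constituent DP primitives and then use the composition and post-processing properties from Proposition~\ref{prop:DP-comp-post} to account for the total privacy budget. The algorithm accesses the private database $D$ only through three noisy counts: the noisy total $\tilde{n} = n' + \nu_q$ computed on line~\ref{l:n_private}, and the two noisy counts $q_1(D) + \nu_{q_1}$ and $q_2(D) + \nu_{q_2}$ used in the comparisons on lines~\ref{l:q1_return} and~\ref{l:q2_return}. Each of $n', q_1(D), q_2(D)$ is a plain \countQuery\ query and hence has global sensitivity $\Delta q = 1$. Each of $\nu_q, \nu_{q_1}, \nu_{q_2}$ is sampled independently from $Lap(1/(\epsilon/2))$, so applying Definition~\ref{def:LM} to each of the three noisy releases yields an $\epsilon/2$-DP mechanism individually.

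The next step is to argue that the \emph{joint} release of these three noisy values is only $\epsilon$-DP rather than the naive $3\epsilon/2$-DP one would obtain by sequential composition alone. The key observation is that the predicates defining $q_1$ and $q_2$ on lines~\ref{l:q1_def} and~\ref{l:q2_def} are mutually exclusive: a tuple contributes to $q_1$ only if $A_i \le q(D_s) - \tau$, and to $q_2$ only if $A_i \ge q(D_s) + \tau$, and since $\tau > 0$ these sets are disjoint. Therefore by parallel composition (Proposition~\ref{prop:DP-comp-post}(2)), the pair $(q_1(D)+\nu_{q_1},\, q_2(D)+\nu_{q_2})$ together satisfies $\max(\epsilon/2,\epsilon/2) = \epsilon/2$-DP. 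Combining this with the $\epsilon/2$-DP release of $\tilde n$ via sequential composition (Proposition~\ref{prop:DP-comp-post}(1)) gives $\epsilon/2 + \epsilon/2 = \epsilon$-DP for the tuple $(\tilde n,\, q_1(D)+\nu_{q_1},\, q_2(D)+\nu_{q_2})$.

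Finally, the output $o \in \{0,1\}$ of \histmed\ is a deterministic function of these three noisy quantities (together with the non-private inputs $q, D_s, \tau, \epsilon$, which are independent of the neighboring-database choice). Thus by post-processing (Proposition~\ref{prop:DP-comp-post}(3)), the overall mechanism remains $\epsilon$-DP.

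The main conceptual obstacle is recognizing that the disjointness of $q_1$ and $q_2$ is exactly what allows us to stay within the stated $\epsilon$ budget; without invoking parallel composition, a uniform split of $\epsilon/3$ across all three Laplace draws would be required, which does not match the noise scales of $1/(\epsilon/2)$ actually used in Algorithm~\ref{algo:err_1_q_med}. The remaining verification is routine: checking that each count query has sensitivity $1$ and that $n'$ is accessed only through $\tilde n$ (so no additional queries secretly consume budget).
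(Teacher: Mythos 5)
Your proof is correct and follows essentially the same route as the paper's own argument: an $\epsilon/2$ Laplace release of $n'$, parallel composition over the disjoint count queries $q_1$ and $q_2$ (disjoint because $\tau>0$), sequential composition to total $\epsilon$, and post-processing to conclude. The only difference is that you spell out the sensitivity and disjointness checks that the paper's proof sketch leaves implicit.
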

\begin{proof}[Proof Sketch]
    We spend $\epsilon/2$ to obtain an estimate for $n'$, and the remaining $\epsilon/2$ on $q_1(D)$ and $q_2(D)$ (lines~\ref{l:q1_def}-\ref{l:q2_def}) that use disjoint sets of tuples ($\tau > 0$). Hence \histmed\ satisfies $\epsilon$-DP by sequential and parallel composition, and post-processing (Proposition~\ref{prop:DP-comp-post}).
\end{proof}

\paratitle{Comparison of \emmed\ and \histmed}
Suppose $q$ is a query that computes the median on attribute $age, q(D) = 37, \epsilon = 0.1, \tau = 5$, and $\intvl = (28, 38)$ (as defined in (\ref{eq:interval})). \emmed\ returns $e = 38$ with probability equal to $0.9995$. $38\not\in \intvl$ but $q(D)\in \intvl$. 
\histmed\ can be the better choice (discussed further in Appendix~\ref{subsec:appendix_histmed_err}).\\

\revm{
\paratitle{Extending the framework to other aggregates} Our solutions can be used to support some other aggregates, e.g., 
average can be expressed as the output of a \sumQuery\ query divided by the output of a \countQuery\ query, each consuming some $\epsilon$. 
The solutions for \medianQuery\ can be generalized to work for other quantiles. For example, to compute the first quartile, we change the score function in \emmed\ to use $n'/4$ instead of $n'/2$, and change $\lceil \tilde{n}/2 \rceil$ to $\lceil \tilde{n}/4 \rceil$ and $\lceil 3\tilde{n}/4 \rceil$ in lines~\ref{l:q1_return} and \ref{l:q2_return} (Algorithm~\ref{algo:err_1_q_med}), respectively.
However, supporting more complex 
aggregates needs careful analysis to establish bounds.
}
\section{Experiments}\label{sec:experiments}
In this section, we analyze the accuracy and efficiency of our proposed per-query deciders for \countQuery, \sumQuery\ and \medianQuery\ queries with the following questions:
\begin{enumerate}
    \item How is the accuracy of each proposed solution affected when $\tau$ and $\epsilon$ are varied separately?
    
    \item For each proposed solution, what type of queries benefit most in terms of accuracy?
    
    \item How does the performance of the specialized solutions compare with that of the solutions bases on Algorithm~\ref{algo:plug_in}?
\end{enumerate}

We have implemented the per-query deciders in Python $3.8.8$ using Pandas \cite{reback2020pandas} and NumPy \cite{harris2020array} libraries. All experiments were run on Apple M$1$ CPU $@3.2$ GHz with $16$ GB of RAM.

\subsection{Experimental Setup}\label{subsec:setup}
We describe the datasets, queries, error measures, and parameters.

\paratitle{Dataset} \reva{We consider two datasets as the \textbf{private database $D$}.}
\begin{sloppypar}
\reva{(1) The first dataset is}
derived from the {\bf IPUMS-CPS survey data } \cite{IPUMS}, 
an individual-level population database, for the years $2011$-$2019$ with $1,340,703$ tuples and $10$ attributes: $relate$, $age$, $sex$, $race$, $marst$, $citizen$, $workly$, $classwkr$, $educ$ and $inctot$. 
The only 
numerical attributes are $age$ and $inctot$ with domains $\{0, 1, \ldots, 80, 85\}$ and $\{0, 1, \ldots,$ $999999999\}$, respectively. We only include tuples with $inctot$ value less than or equal to $500$K.
The domain sizes of the categorical attributes vary from $3$ to $36$. 
\reva{(2) The second dataset is derived from the {\bf NYC Yellow Taxi Trip data} \cite{TLC} for January $2022$ with $2,177,719$ tuples and $10$ attributes: $vendorID$, $passenger\_count$, $trip\_distance$, $rateCodeID$, $store\_and\_fwd\_flag$, $payment\_type$, $fare\_amount$, $tip\_amount$, $total\_amount$ and $congestion\_surcharge$ (with some pre-processing as discussed in Appendix~\ref{sec:appendix_allQueryErrorPlots}).
The domain sizes of the categorical attributes vary from $2$ to $6$.}

We generate a \textbf{synthetic database $D_s$ for $D$} using PrivBayes \cite{10.1145/3134428}, a Bayesian network based DP-SDG. Nodes and edges in the network represent attributes in $D$ and conditional independence relations between attributes in $D$. 
PrivBayes first learns a differentially private Bayesian network \DPbayes\ and then uses it to derive a factored form of the joint tuple probabilities based on the noisy conditional probabilities. 
Note that \DPbayes\ can make incorrect conditional independence assumptions between attributes in $D$.
\end{sloppypar}

\smallskip
\paratitle{Queries} We refer to the Summary File $1$ (SF-$1$) \cite{2010SF1} released by the U.S. Census Bureau to construct 
queries \reva{for the IPUMS-CPS dataset}. 
We analyze $12$ \countQuery, $9$ \sumQuery\ \reva{(on $inctot$)} and $9$ \medianQuery\ \reva{(on $age$)} queries. 
\reva{For the second dataset, we analyze $10$ \countQuery, $10$ \sumQuery\ (on $total\_amount$), and $10$ \medianQuery\ (on $trip\_distance$) queries. 
Due to space constraints, here we present results on $4$ representative queries of each aggregate on the IPUMS-CPS data, and $2$ \sumQuery\ queries on the NYC Taxi Trip data with \DLS\ much smaller than \GSbound\ (Table~\ref{tbl:exp_rep_queries}), while the full list of queries and results for the other queries on both datasets are shown in Appendix~\ref{sec:appendix_allQueryErrorPlots}.}

\begin{table}[t] 
    \centering \scriptsize
    \begin{tabular}{| c | c | c |}
        \rowcolor[HTML]{C0C0C0} \hline {\bf Query} & {\bf WHERE clause} & $q(D), q(D_s)$\\
        \hline \cellcolor[HTML]{D7D7D7} & \begin{tabular}{@{}c@{}} $q_1$: \texttt{$sex$ LIKE `Female' AND $race$ LIKE `White-}\\ \texttt{American Indian-Asian' AND $workly$ LIKE `Yes'} \end{tabular} & \begin{tabular}{@{}c@{}} $34$ \\ $34$ \end{tabular}\\
        \cline{2-3} \cellcolor[HTML]{D7D7D7} & \begin{tabular}{@{}c@{}} $q_3$: \texttt{$sex$ LIKE `Male' AND $educ$ LIKE `Doctorate}\\ \texttt{degree' AND $marst$ LIKE `Separated'} \end{tabular} & \begin{tabular}{@{}c@{}} $87$ \\ $91$ \end{tabular}\\
        \cline{2-3} \cellcolor[HTML]{D7D7D7} & \begin{tabular}{@{}c@{}} $q_5$: \texttt{$sex$ LIKE `Female' AND $educ$ LIKE `Grades $5$}\\ \texttt{or $6$' AND $marst$ LIKE `Never married/single'} \end{tabular} & \begin{tabular}{@{}c@{}} $1560$ \\ $1606$ \end{tabular}\\
        \cline{2-3} \multirow{-8}{*}{\cellcolor[HTML]{D7D7D7} 1. \countQuery} &  \begin{tabular}{@{}c@{}} $q_{12}$: \texttt{$race$ LIKE `White' AND $marst$ LIKE `Married,}\\ \texttt{spouse present' AND $citizen$ LIKE `Born in U.S'} \end{tabular} & \begin{tabular}{@{}c@{}} $471994$ \\ $470483$ \end{tabular}\\
        
        \hline \cellcolor[HTML]{D7D7D7} & \begin{tabular}{@{}c@{}} $q_{13}$: \texttt{$sex$ LIKE `Female' AND $race$ LIKE}\\ \texttt{`White-Black' AND $workly$ LIKE `No'} \end{tabular} & \begin{tabular}{@{}c@{}} $6915340$ \\ $6942866$ \end{tabular}\\
        \cline{2-3} \cellcolor[HTML]{D7D7D7} & \begin{tabular}{@{}c@{}} $q_{16}$: \texttt{$race$ LIKE `White' AND $marst$ LIKE `Divorced'}\\ \texttt{AND $citizen$ LIKE `Not a citizen'} \end{tabular} & \begin{tabular}{@{}c@{}} $123543040$ \\ $128497757$ \end{tabular}\\
        \cline{2-3} \cellcolor[HTML]{D7D7D7} & \begin{tabular}{@{}c@{}} $q_{20}$: \texttt{$sex$ LIKE `Female' AND $educ$ LIKE `High}\\ \texttt{school diploma or equivalent' AND $marst$}\\ \texttt{LIKE `Never married/single'} \end{tabular} &  \begin{tabular}{@{}c@{}} $685635093$ \\ $690711885$ \end{tabular}\\
        \cline{2-3} \multirow{-8}{*}{\cellcolor[HTML]{D7D7D7} {\begin{tabular}{@{}c@{}} 2. \sumQuery\\ on\\ $inctot$ \end{tabular}}} & \begin{tabular}{@{}c@{}} $q_{21}$: \texttt{$race$ LIKE `White' AND $marst$ LIKE `Married,}\\ \texttt{spouse present' AND $citizen$ LIKE `Born in U.S'} \end{tabular} &  \begin{tabular}{@{}c@{}} $23542765109$ \\ $23434676868$ \end{tabular}\\
        
        \hline \cellcolor[HTML]{D7D7D7} & \begin{tabular}{@{}c@{}} $q_{22}$: \texttt{$workly$ LIKE `Yes' AND $classwkr$ LIKE}\\ \texttt{`Wage/salary, private' AND $educ$ LIKE}\\ \texttt{`Bachelor's degree'} \end{tabular} & \begin{tabular}{@{}c@{}} $41$ \\ $41$ \end{tabular}\\
        \cline{2-3} \cellcolor[HTML]{D7D7D7} & \begin{tabular}{@{}c@{}} $q_{23}$: \texttt{$sex$ LIKE `Male' AND $race$ LIKE `White-}\\ \texttt{Black' AND $relate$ LIKE `Spouse'} \end{tabular} & \begin{tabular}{@{}c@{}} $40$ \\ $40$ \end{tabular}\\
        \cline{2-3} \cellcolor[HTML]{D7D7D7} & \begin{tabular}{@{}c@{}} $q_{28}$: \texttt{$race$ LIKE `Asian only' AND $marst$ LIKE}\\ \texttt{`Separated' AND $citizen$ LIKE `Born in U.S'} \end{tabular} & \begin{tabular}{@{}c@{}} $39$ \\ $39$ \end{tabular}\\
        \cline{2-3} \multirow{-8}{*}{\cellcolor[HTML]{D7D7D7} {\begin{tabular}{@{}c@{}} 3. \medianQuery\\ on\\ $age$ \end{tabular}}} & \begin{tabular}{@{}c@{}} $q_{29}$: \texttt{$sex$ LIKE `Male' AND $race$ LIKE `White'}\\ \texttt{AND $classwkr$ LIKE `Wage/salary, private'} \end{tabular} & \begin{tabular}{@{}c@{}} $40$ \\ $40$ \end{tabular}\\

        \hline \cellcolor[HTML]{D7D7D7}& \begin{tabular}{@{}c@{}} \reva{$q_{45}$:} \reva{\texttt{$passenger\_count < 2$ AND $trip\_distance = 8$}}\\\reva{\texttt{AND $tip\_amount \leq 2$}} \end{tabular} & \begin{tabular}{@{}c@{}} \reva{$218047$}\\\reva{$213685$} \end{tabular}\\
        \cline{2-3} \multirow{-3}{*}{\cellcolor[HTML]{D7D7D7} {\begin{tabular}{@{}c@{}} \reva{4. \sumQuery\ on}\\ \reva{$total\_$}\\\reva{$amount$} \end{tabular}}}& \begin{tabular}{@{}c@{}} \reva{$q_{50}$:} \reva{\texttt{$trip\_distance \leq 1$ AND $fare\_amount \leq 6$}}\\\reva{\texttt{AND $congestion\_surcharge = 2$}} \end{tabular} & \begin{tabular}{@{}c@{}} \reva{$5680776$}\\\reva{$5624898$} \end{tabular}\\
        
        \hline
    \end{tabular}
    \caption{Queries used in experiments. Blocks 1, 2, 3 are for the IPUMS-CPS data, block 4 is for the NYC Taxi Trip data. }\label{tbl:exp_rep_queries}
    \vspace{-4mm}
\end{table}

\paratitle{Error measure} We measure the error of per-query deciders as follows. We run each of our per-query deciders 100 times to decide whether $q(D)$ lies in $(q(D_s)\cdot (1-\tau), q(D_s)\cdot (1+\tau))$, where $\tau$ is a percentage of the query answer on the synthetic data, $q(D_s)$.
We measure error as the fraction of times the algorithm makes an error in determining whether $q(D)$ lies in $(q(D_s)\cdot (1-\tau), q(D_s)\cdot (1+\tau))$.

\paratitle{Parameter settings} We set $\beta = 0.05$ in \rtwotsum\ (Section~\ref{subsec:rtwotsum}) and $\theta = 0.95$ in 
Algorithm $4$
(Section~\ref{subsec:svtsum}).
Default $\epsilon = 0.25$ and $\tau = 3.2\%$ of $q(D_s)$. 
In our experiments, we vary $\tau = 0.2\%, 0.8\%, 3.2\%,$ $12.8\%, 51.2\%$, and vary $\epsilon = 0.0625, 0.125, 0.25, 0.5, 1$.

\subsection{Accuracy and Performance Analysis}\label{subsec:findings}

\subsubsection{Accuracy analysis}\label{subsec:accuracy}
We present our analysis of the impact on accuracy as $\epsilon$ and $\tau$ are varied individually.
We also investigate which queries benefit the most for each per-query decider.
In the following discussion, we use $\intvl = (q(D_s)\cdot (1-\tau), q(D_s)\cdot (1+\tau))$, where $\tau$ is a percentage of $q(D_s)$.

\begin{figure*}[ht]
  \includegraphics[width=0.9\textwidth]{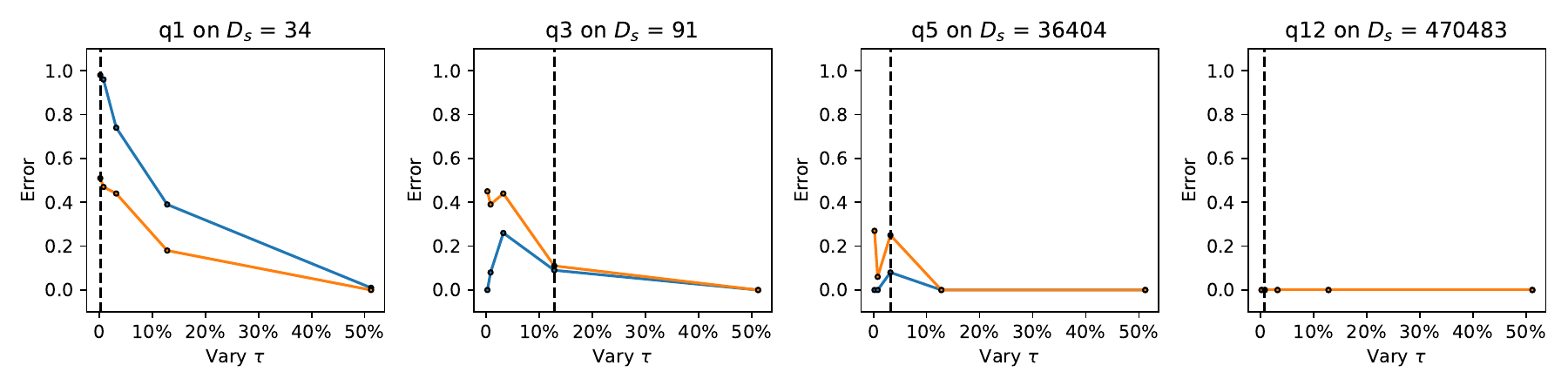}
  \caption{\reva{IPUMS-CPS data:} Error for \countQuery\ queries from \lapcount\ (in blue) and \emcount\ (in orange) as $\tau$ varies. 
  \revc{The dotted line marks the smallest $\tau$ value considered such that the query answer on the private data belongs in the interval \intvl.}
  }
  \label{fig:count_pick4_vary_tau}
\end{figure*}

\begin{figure*}[ht]
  \includegraphics[width=0.9\textwidth]{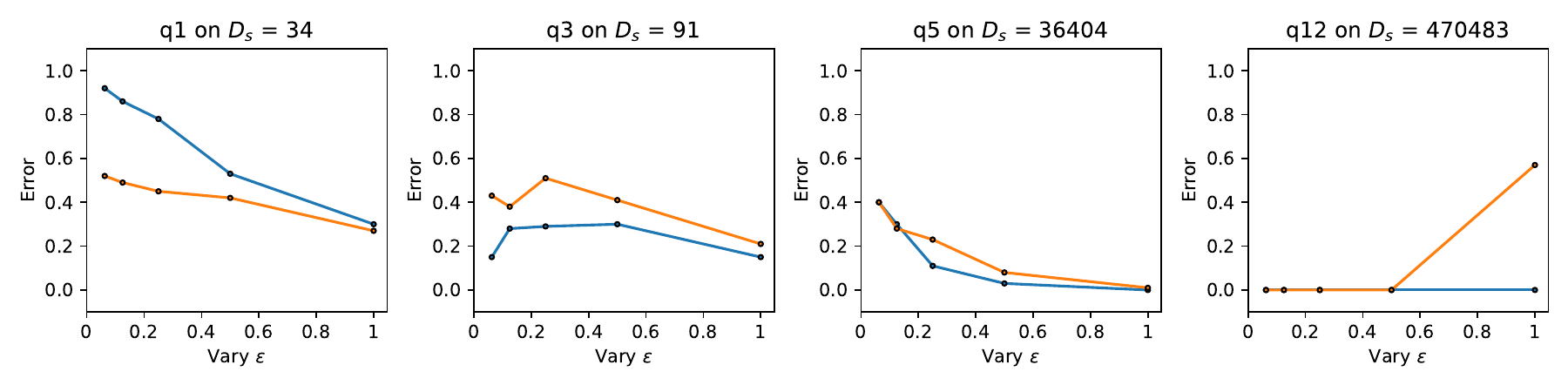}
  \caption{\reva{IPUMS-CPS data:} Error for \countQuery\ queries from \lapcount\ (in blue) and \emcount\ (in orange) as $\epsilon$ varies. 
  }
  \label{fig:count_pick4_vary_eps}
\end{figure*}

\begin{figure*}[ht]
  \includegraphics[width=0.9\textwidth]{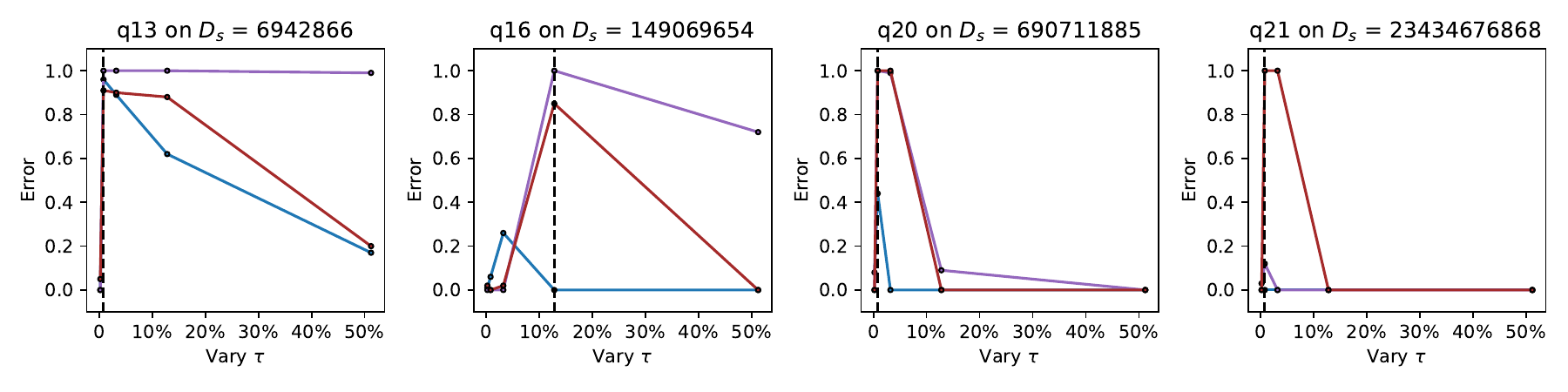}
  \caption{\reva{IPUMS-CPS data:} Error for \sumQuery\ queries from \lapsum\ (in blue), \rtwotsum\ (in purple) and \svtsum\ (in brown) as $\tau$ varies. 
  \revc{The dotted line marks the smallest $\tau$ value considered such that the query answer on the private data belongs in the interval \intvl.}
  }
  \label{fig:sum_pick4_vary_tau}
\end{figure*}

\begin{figure*}[ht]
  \includegraphics[width=0.9\textwidth]{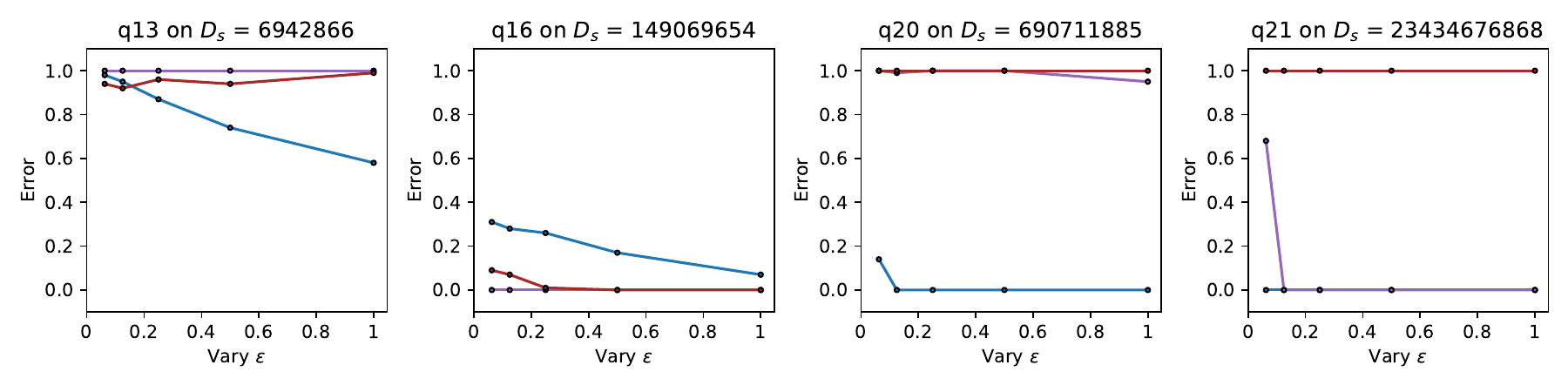}
  \caption{\reva{IPUMS-CPS data:} Error for \sumQuery\ queries from \lapsum\ (in blue), \rtwotsum\ (in purple) and \svtsum\ (in brown) as $\epsilon$ varies. 
  }
  \label{fig:sum_pick4_vary_eps}
\end{figure*}

\begin{figure*}[ht]
  \includegraphics[width=0.9\textwidth]{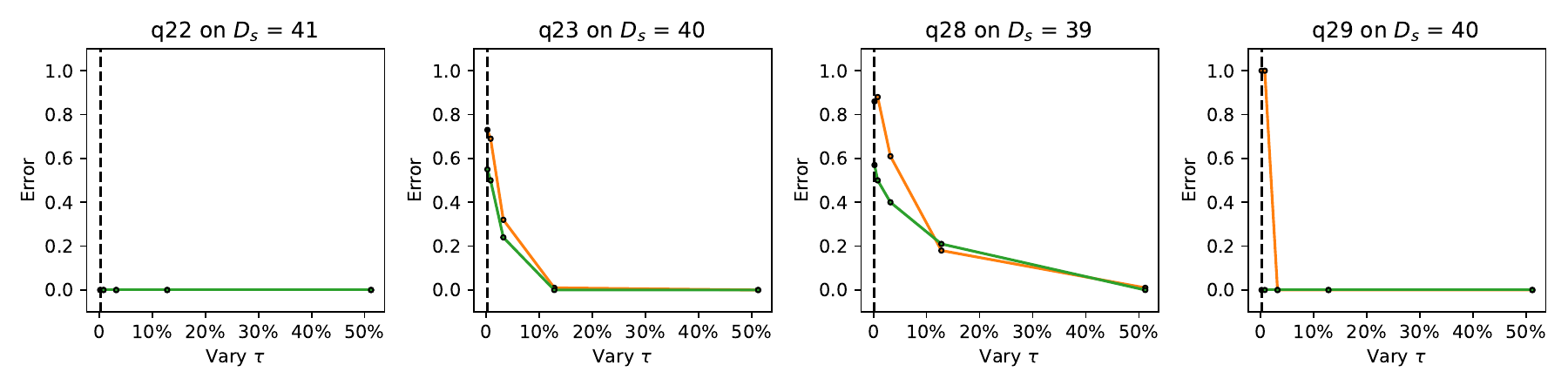}
  \caption{\reva{IPUMS-CPS data:} Error for \medianQuery\ queries from \emmed\ (in orange) and \histmed\ (in green) as $\tau$ varies.
  \revc{The dotted line marks the smallest $\tau$ value considered such that the query answer on the private data belongs in the interval \intvl.}
  }
  \label{fig:median_pick4_vary_tau}
\end{figure*}

\begin{figure*}[ht]
  \includegraphics[width=0.9\textwidth]{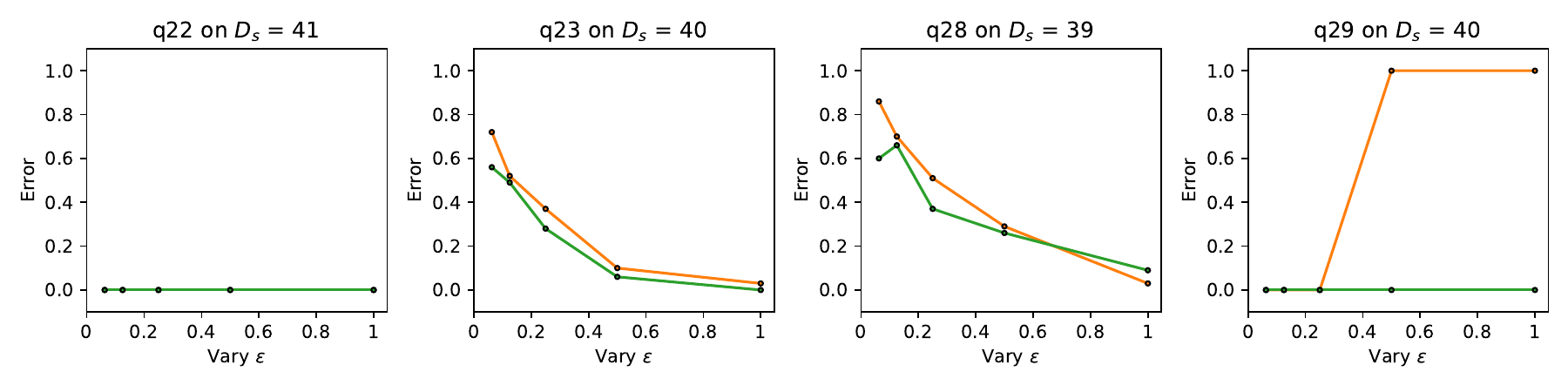}
  \caption{\reva{IPUMS-CPS data:} Error for \medianQuery\ queries from \emmed\ (in orange) and \histmed\ (in green) as $\epsilon$ varies.}
  \label{fig:median_pick4_vary_eps}
\end{figure*}

\begin{figure*}[ht]
  \includegraphics[width=0.9\textwidth]{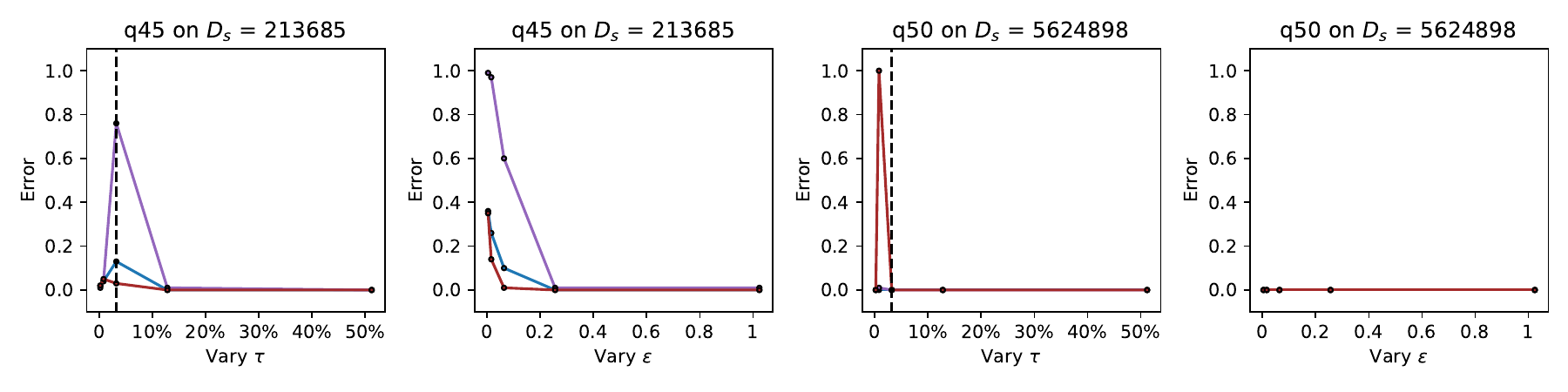}
  \caption{\reva{NYC Taxi data: Error for \sumQuery\ queries from \lapsum\ (in blue), \rtwotsum\ (in purple) and \svtsum\ (in brown) as $\epsilon$ and $\tau$ vary separately.} \revc{The dotted line marks the smallest $\tau$ value considered such that the query answer on the private data belongs in \intvl.}}
  \label{fig:sum_large_gap}
\end{figure*}

\smallskip
\paratitle{\countQuery\ queries} We present our analysis for $4$ queries: $q_1, q_3, q_5$ and $q_{12}$ (Figures~\ref{fig:count_pick4_vary_tau}-\ref{fig:count_pick4_vary_eps}).
Consider the setting where $\tau$ varies. 
$q_1(D)$ equals $q_1(D_s)$ and error from both \lapcount\ and \emcount\ decreases when $\tau$ increases, as expected. \emcount\ gives smaller error. 
For $\tau \leq 3.2\%, q_3(D)\not\in \intvl$. At $\tau = 3.2\%$, $q_3(D)$ is closest to one of \intvl's endpoints, so we see the error from \lapcount\ increase (before it goes to $0$) because there is a higher probability of $q_3(D)$'s noisy estimate being in \intvl. Similarly for \emcount\ on $q_3$, where error starts decreasing for $\tau > 3.2\%$. \emcount's error on $q_3$ for $\tau \leq 3.2\%$ ranges from $0.39$ to $0.45$.
For $\tau \leq 0.8\%, q_5(D)\not\in\intvl$. At $\tau = 3.2\%$, $q_5(D)\in\intvl$ and is closest to one of \intvl's endpoints, so \lapcount's error has the same trend as that for $q_3$.
In \emcount\ for $q_5$ and $\tau = 0.2\%$,  $c\cdot \exp(\epsilon u'(\cdot) \tau)$ (Definition~\ref{def:EM}) for $o = 0$ and $o = 1$ are closest, so $Pr[o = 1] = 0.27$ (after normalization). \emcount's error increases at $\tau = 3.2\%$ (before it goes to $0$) because $q_5(D)$ is close to an endpoint of \intvl.
\lapcount\ and \emcount\ give $0$ error for $q_{12}$ because the resulting $\tau$ values are such that $q(D)$ stays far from \intvl's endpoints. 

When $\epsilon$ is varied instead, the errors from both \lapcount\ and \emcount\ have a decreasing trend except for $q_3$ and $q_{12}$. As $\epsilon$ increases, \revb{the DP computations become} less noisy. 
\lapcount's error on $q_3$ increases initially because probability of $q_3(D)$'s noisy estimate being in \intvl\ increases despite the noise scale decreasing (before error goes to $0$). Similarly for \emcount\ on $q_3$.
\emcount's error on $q_{12}$ increases for $\epsilon > 0.5$ because $c\cdot \exp(\epsilon u'(\cdot) \tau)$ (Definition~\ref{def:EM}) for $o = 0$ and $o = 1$ is greater than $c\cdot \exp(7527 \cdot u'(\cdot))$, represented as infinity in Python. Either outcome is equally likely to be returned. 

\smallskip
\paratitle{\sumQuery\ queries} We present our analysis for $4$ queries: $q_{13}, q_{16}, q_{20}$ and $q_{21}$ (Figures~\ref{fig:sum_pick4_vary_tau}-\ref{fig:sum_pick4_vary_eps}). \DLS\ values are: $127764,$ $278011, 403353, 500000$.
Consider the setting where $\tau$ varies.
$q_{13}(D)\not\in \intvl$ (as defined in (\ref{eq:interval})) at $\tau = 0.2\%$ and the probability of \lapsum's noisy estimate being outside \intvl\ is high, so the error is low. \lapsum's error shoots up at $\tau = 0.8\%$ because $q_{13}(D)$ is now in \intvl, but the probability that $q_{13}(D)$'s noisy estimate is outside \intvl\ is high. Error decreases as $\tau$ increases further. Similarly for $q_{16}$ and $q_{20}$. \lapsum\ incurs an error of $0$ for $q_{21}$ in all cases because $q_{21}(D)$ is large and for the chosen $\tau$ values, it is far from \intvl's endpoints in comparison to the noise scale. 
\rtwotsum\ gives error close to $1$ whenever $q_{13}(D)\in \intvl$ because the noise in its estimate for $q_{13}(D)$ is large and this estimate falls outside \intvl. Similarly for $q_{16}$. \rtwotsum's error on $q_{20}$ is high when $q_{20}(D)\in \intvl$ but $\tau$ is small compared to the noise in \rtwotsum's estimate for $q_{20}(D)$. As $\tau$ increases, the error decreases. \rtwotsum's error on $q_{21}$ follows the same trend as that of $q_{20}$, except the error stays low because $q_{21}(D_s)$ is large and consequently $\tau$ values are large.
\svtsum's error on $q_{13}$ is low at $\tau = 0.2\%$ because $q_{13}(D) < l$ and the chance of any noisy truncated sum query exceeding its noisy threshold is low (lines~\ref{l:check_r} and \ref{l:check_l} in Algorithm~\ref{algo:err_svtsum}). For larger $\tau$, $q_{13}(D)\in \intvl$ and moves away from \intvl's endpoints, which increases the chance of the threshold in line~\ref{l:check_r} not being exceeded and in line~\ref{l:check_l} being exceeded. Similarly for $q_{16}$, $q_{20}$ and $q_{21}$, where the last two have large outputs on $D_s$ and so error decreases sooner.

When $\epsilon$ is varied, \lapsum's error on the $4$ queries shows a decreasing trend because the noise scale decreases as $\epsilon$ increases.
$q_{13}(D), q_{20}(D)$ and $q_{21}(D)$ are in the associated \intvl\ at (default) $\tau = 3.2\%$. For the first two, increasing $\epsilon$ does not change \rtwotsum's error 
\revb{
because the noise is $O(\log{(\GSbound)} \log{\log{(\GSbound)}}\cdot \DLS)$ (from (\ref{eq:R2T_thm5-1})), keeping $\tilde{q}(D)$ outside \intvl\ with high probability}. Since $q_{21}(D_s)$ is the largest, \rtwotsum's error decreases \revb{because the noise in $\tilde{q}(D)$ decreases as $\epsilon$ increases, while \intvl's width stays the same}.
For queries with answers on $D$ in \intvl, \svtsum's error does not change much as $\epsilon$ increases. 
\revb{
$q_{13}(D)$ and $q_{20}(D)$ are close to the left-endpoints of the respective intervals, and so \svtsum's error is high (line~\ref{l:check_l} in Algorithm~\ref{algo:err_svtsum}).}
In contrast, \svtsum's error on $q_{16}$ decreases as $\epsilon$ increases because the noise scale decreases and $q_{16}(D) < l$.

\revb{
Next, we analyze the impact of larger gaps in \DLS\ and \GSbound, for small \GSbound\ of $377$ (Figure~\ref{fig:sum_large_gap}). 
When $\tau$ varies, \lapsum\ and \rtwotsum\ give highest error for $q_{45}$ at $\tau = 3.2\%$ as $q_{45}(D)\in \intvl$ (closer to $r$), but the noise in their estimates for $q_{45}(D)$ is large while interval width is small, causing the estimates to not be in \intvl. 
At $\tau = 0.8\%$, $q_{50}(D)\not\in\intvl$ (closer to $r$). \lapsum\ and \rtwotsum\ give low error as \intvl's width is large and their estimates for $q_{50}(D)$ are in \intvl\ with high probability.
\svtsum\ gives low error for both queries, except for $q_{50}$ at $\tau = 0.8\%$ as the truncation threshold from 
Algorithm $4$
is often smaller than \DLS, causing the check in line~\ref{l:check_r} to incorrectly fail.

At default $\tau = 3.2\%$, $q_{45}(D)$ and $q_{50}(D)$ are in their associated intervals (closer to $r$). For $q_{45}$, the interval is small, so all $3$ solutions give error at least $0.35$ when $\epsilon$ is small, with \rtwotsum\ giving the highest error due to large noise in its estimate. For $q_{50}$, the interval width is larger, so all $3$ solutions give $0$ error.
}

\smallskip
\paratitle{\medianQuery\ queries} We present our analysis for $4$ queries: $q_{22}, q_{23}, q_{28}$ and $q_{29}$ (Figures~\ref{fig:median_pick4_vary_tau}-\ref{fig:median_pick4_vary_eps}).
Consider the setting where $\tau$ varies.
Let us first look at \emmed. For $q_{22}$ and $q_{29}$, the support, i.e., tuples in $D$ that satisfy the predicates in the \texttt{WHERE} clause, is large. As a result, the probability distribution used to sample the noisy estimate for the median is more concentrated around the correct value. The opposite is true for $q_{23}$ and $q_{28}$. We see error decrease as $\tau$ increases.
\histmed's error has a decreasing trend in all cases because the noisy bin counts are far from the query's noisy estimate for majority, i.e., $\lceil n'/2 \rceil$ (lines~\ref{l:q1_return} and \ref{l:q2_return} in Algorithm~\ref{algo:err_1_q_med}).

When $\epsilon$ increases, the variance of the probability distribution used to sample the noisy estimate in \emmed\ decreases. \emmed's error has a decreasing trend except for $q_{29}$ because the scores are negatives numbers with large magnitudes because the support is large (Definition~\ref{def:EM}). 
As discussed above for \histmed, the noisy bin counts are far from the query's noisy estimate for majority. As $\epsilon$ increases, the error decreases because the noise scale decreases. 

\subsubsection{Performance analysis}
We compare average runtimes of per-query deciders for \countQuery, \sumQuery\ and \medianQuery\ queries in Table~\ref{tbl:runtime}. The reported average is per query and per $1$ run (out of the $100$ trials).

\subsubsection{Discussion}\label{subsec:discussion}

We summarize our findings in this section.
The experiments suggest the following comparative trends between algorithms \revm{(Table~\ref{tbl:summary})}. 
In \countQuery\ queries, when $q(D) = q(D_s)$, \emcount's error is smaller than \lapcount's error for different $\epsilon$ and $\tau$ values. When $q(D)\in \intvl$ and $q(D)\not\in \intvl$, \lapcount\ is superior.
In \sumQuery\ queries, when $q(D)\in \intvl$ and $\tau$ varies, \lapsum\ was the best choice followed by \svtsum. Note that \DLS\ plays an important role here. When $q(D)\not\in \intvl$, \rtwotsum\ and \svtsum\ were the better choices.
In \medianQuery\ queries, \histmed\ generally gives smaller error than \emmed\ no matter the relationship between $q(D)$ and \intvl.

There is not a clear pattern for when the error is high except for the condition when $q(D)$ is close to one of the endpoints of the interval, or when the downward local sensitivity is not far from the global sensitivity for the given \sumQuery\ query. 

Typically, we expect error of DP algorithms to decrease with higher $\epsilon$. But, we see the reverse in some experiments (see \lapcount\ for $q_3$, \emcount\ for $q_{12}$, \svtsum\ and \rtwotsum\ for queries with answer on $D$ in the interval, and \emmed\ for $q_{29}$). This is because the error function for the per-query deciders is not monotonic in $\epsilon$. This may suggest that there exist smarter ways to design algorithms that only use a portion of the overall budget available to get better accuracy.

\lapcount\ and \emcount\ were not effective for queries at smaller $\tau$, except when $q(D_s)$ and $\tau$ are large, in which case we do not know if either approach is effective. They were effective for the same settings. 
Similarly for \lapsum. \rtwotsum\ and \svtsum\ were not effective for queries with large answer on $D$ at small $\tau$.
\revb{
In general for all solutions, the accuracy can be improved with a larger privacy budget $\epsilon$, or a larger distance bound $\tau$ between $q(D)$ and $q(D_s)$, which may not always be feasible. 
}

\section{related work}\label{sec:related}
We have used several existing DP mechanisms for count, sum, and median from the literature \cite{DP_book, 10.1145/3514221.3517844, 10.14778/3055330.3055331, DBLP:journals/corr/abs-1103-5170} for \oursys.
To the best of our knowledge, most existing works on SDGs do not give per-query error bounds to the user.
\reva{ 
AIM \cite{DBLP:journals/corr/abs-2201-12677} is a novel differentially private SDG that generates $D_s$ while minimizing average error over all the input marginal queries, and only gives probabilistic upper-bounds on the error for marginal queries in the \textit{downward closure} of the input workload. It}
follows the {\em select-measure-generate} paradigm \cite{10.1145/1807085.1807104, 10.14778/3231751.3231769, DBLP:journals/corr/abs-2106-12118}: ($1$) selects a set of queries, ($2$) privately measures chosen queries, and ($3$) generates synthetic data from the noisy measurements. 
\reva{
However, it differs from our model wherein we are given $D_s$ from some black-box SDG and the goal is to decide if the distance between the given (count, sum, or median) query $q$'s output on $D$ and $D_s$ is less than the user-provided threshold $\tau$.
We do not make assumptions about the SDG and do not require that the given query should be well approximated by the SDG (or $D_s$).

We also found that the \textit{parametric bootstrap approach} \cite{param_boot} does not work well because the underlying assumption about the difference between bootstrap estimates and $q(D_s)$ being representative of the difference between $q(D_s)$ and $q(D)$ does not hold when the SDG (e.g. PrivBayes \cite{10.1145/3134428}) uses techniques like post-processing.
}



\begin{table}[t] 
    \centering \scriptsize
    \begin{tabular}{| c | c | c |}
        \rowcolor[HTML]{C0C0C0} 
        \hline {\bf } & {\bf \lapcount} & {\bf \emcount}\\
        \hline \cellcolor[HTML]{D7D7D7} $Time(s)$ & $0.312$ & $0.321$\\
        \hline
    \end{tabular}

    \begin{tabular}{| c | c | c | c |}
        \rowcolor[HTML]{C0C0C0} 
        \hline {\bf }  & {\bf \lapsum} & {\bf \rtwotsum} & {\bf \svtsum}\\
        \hline \cellcolor[HTML]{D7D7D7} $Time(s)$ & $0.450$ & $4.022$ & $9.131$ \\
        \hline
    \end{tabular}

    \begin{tabular}{| c | c | c | }
        \rowcolor[HTML]{C0C0C0} 
        \hline {\bf } & {\bf \emmed} & {\bf \histmed}\\
        \hline \cellcolor[HTML]{D7D7D7} $Time(s)$ & $ 0.533$ & $1.043$\\
        \hline
    \end{tabular}
    \caption{\reva{IPUMS-CPS data:} Average runtimes.}
    \label{tbl:runtime}
\end{table}

\begin{table}[t] 
    \centering \scriptsize
    \revm{
        \begin{tabular}{| c | c | c | c |}
            \rowcolor[HTML]{C0C0C0} 
            \hline {\bf Query} & {\bf Solution} & {\bf $\tau_{min}^{\alg, \delta}$ (upper bound)} & {\bf Conclusions}\\
    
            \hline \cellcolor[HTML]{D7D7D7} & \begin{tabular}{@{}c@{}} \lapcount\\(plug-in) \end{tabular} & $\frac{1}{\epsilon} \ln{\frac{1}{2\delta}}$ & \\
            \cline{2-3} \multirow{-3}{*}{\cellcolor[HTML]{D7D7D7} \countQuery} & \begin{tabular}{@{}c@{}} \emcount\\(direct) \end{tabular} & $\frac{1}{\epsilon} \ln{\frac{1 - \delta}{\delta}}$ & \multirow{-3}{*}{\begin{tabular}{@{}c@{}} \lapcount\ is the better choice,\\for $\delta < \frac{1}{2}$ unless $q(D)=q(D_s)$. \end{tabular}}\\
    
            \hline \cellcolor[HTML]{D7D7D7} & \begin{tabular}{@{}c@{}} \lapsum\\(plug-in) \end{tabular} & $\frac{\GSbound}{\epsilon} \ln{\frac{1}{2\delta}}$ & \\
            \cline{2-3} \cellcolor[HTML]{D7D7D7} & \begin{tabular}{@{}c@{}} \rtwotsum\\(plug-in) \end{tabular} & \begin{tabular}{@{}c@{}} $4\log(\GSbound) \cdot$\\$\ln{\left( \frac{\log(\GSbound)}{\delta} \right)} \frac{\DLS}{\epsilon}$ \end{tabular} & \\
            \cline{2-3} \multirow{-6}{*}{\cellcolor[HTML]{D7D7D7} \sumQuery} & \begin{tabular}{@{}c@{}} \svtsum\\(direct) \end{tabular} & -- & \multirow{-6}{*}{\begin{tabular}{@{}c@{}} If approx. \DLS\ value is known\\to the user, and \DLS\ is much\\smaller than \GSbound, then choose\\\svtsum\ or \rtwotsum. Otherwise,\\choose \lapsum\ unless $\epsilon$ is small. \end{tabular}}\\
    
            \hline \cellcolor[HTML]{D7D7D7} & \begin{tabular}{@{}c@{}} \emmed\\(plug-in) \end{tabular} & -- & \\
            \cline{2-3} \multirow{-3}{*}{\cellcolor[HTML]{D7D7D7} \medianQuery} & \begin{tabular}{@{}c@{}} \histmed\\(direct) \end{tabular} & -- & \multirow{-3}{*}{\begin{tabular}{@{}c@{}} \histmed\ empirically gives\\smaller error than \emmed. \end{tabular}}\\
            
            \hline
        \end{tabular}
    }
    \caption{\revm{Summary of proposed solutions and our recommendations based on theoretical and empirical results. Those marked by (plug-in) are based on Algorithm~\ref{algo:plug_in}, whereas the rest solve the problem without plugging-in an estimate for $q(D)$. Some theoretical upper bounds remain open problems. }
    \label{tbl:summary}}
\end{table}
\section{Conclusions and Future Work}\label{sec:conclusions}
We have presented the problem of measuring the per-query distance between the output on private data and synthetic data, and detailed our error analysis for \countQuery, \medianQuery\ and \sumQuery\ queries.
Our proposed solutions fall in two classes: ($1$) use a DP algorithm to answer the query and check if the noisy answer is close to the answer on the synthetic data, and ($2$) design specialized algorithm. 
In addition to analyzing the error, we also introduce the notion of effectiveness of a per-query decider and derive upper bounds on the effectiveness thresholds of solutions for \countQuery\ and \sumQuery\ queries (except \svtsum). 
Deriving such bounds for \medianQuery\ query is 
future work. 
We find that some mechanisms work better for smaller $\tau$. 
\revm{Extending our work to other useful queries involving other aggregate functions, joins, subqueries, and group-by is 
future work.}
\reva{Designing baselines and benchmarks for the problem in this work is future work, and may be of interest to the synthetic data 
and data privacy communities.}

\begin{acks}
We thank the anonymous reviewers for their helpful comments. 
This work was supported in part by NSF awards IIS-2147061,  IIS-2008107, IIS-1703431, and IIS-1552538. 

\end{acks}

\clearpage

\bibliographystyle{ACM-Reference-Format}
\bibliography{bibtex}

\clearpage
\appendix

\section{Proofs from Section~\ref{sec:count}}

\subsection{Proof of Proposition~\ref{prop:lapcount_err}}\label{subsec:appendix_lapcount_err}
\begin{proof}
    For ($1$): When we get an error due to $q(D)\leq l$ and $o = 1$, the if condition (line~\ref{l:basic_if}) in \BasicDecider\ (Algorithm \ref{algo:plug_in}) is satisfied, i.e., $l < q(D) + \nu_q < r$. 
    Re-arranging the terms gives $0\leq l - q(D) < \nu_q < r - q(D)$. Now, instantiating (\ref{eq:lap-geq}) with $t = (l - q(D))\epsilon$ and then with $t = (r - q(D))\epsilon$ gives:
    \begin{align}
        Pr[\nu_q > l - q(D)] &\leq Pr[\nu_q \geq l - q(D)] = \frac{1}{2}e^{-(l - q(D))\epsilon}\label{eq:4pt2a_1}\\
        Pr[\nu_q \geq r - q(D)] &= \frac{1}{2}e^{-(r - q(D))\epsilon}\label{eq:4pt2a_2}
    \end{align}

    Where the first inequality in (\ref{eq:4pt2a_1}) stems from the property of DP that every output has a positive probability.
    Subtracting (\ref{eq:4pt2a_2}) from (\ref{eq:4pt2a_1}), $Pr[l - q(D) < \nu_q < r - q(D)] \leq \frac{1}{2} \left( e^{-(l - q(D))\epsilon} - e^{-(r - q(D))\epsilon} \right)$.
    
    ($2$): When we get an error due to $q(D)\geq r$ and $o = 1$, the if condition (line~\ref{l:basic_if}) in \BasicDecider\ is satisfied, i.e., $l - q(D) < \nu_q < r - q(D)\leq 0$ (after re-arranging the terms). Now,
    \begin{align}
        Pr[\nu_q > l - q(D)] &= 1 - Pr[\nu_q \leq -(q(D) - l)]\nonumber\\
        &= 1 - \frac{1}{2} e^{-(q(D) - l)\epsilon}\label{eq:4pt2b_1}
    \end{align}
    which follows from (\ref{eq:lap-leq}) with $t = (q(D) - l)\epsilon$. 
    
    \begin{align}
        Pr[\nu_q \geq r - q(D)] &= 1 - Pr[\nu_q < r - q(D)] &\nonumber\\
        &\geq 1 - Pr[\nu_q \leq r - q(D)]\nonumber\\
        &= 1 - \frac{1}{2} e^{-(q(D) - r)\epsilon}\label{eq:4pt2b_2}
    \end{align}
    which follows from (\ref{eq:lap-leq}) with $t = (q(D) - r)\epsilon$.
    
    Note that $Pr[l - q(D) < \nu_q < r - q(D)] = Pr[\nu_q > l - q(D)] - Pr[\nu_q \geq r - q(D)]$, and from (\ref{eq:4pt2b_1}) and (\ref{eq:4pt2b_2}), this expression is less than or equal to $\frac{1}{2} \left( e^{-(q(D) - r)\epsilon} - e^{-(q(D) - l)\epsilon} \right)$.
    
    ($3$): When we get an error due to $l < q(D) < r$ and $o = 0$, the else condition (line~\ref{l:basic_else}) in \BasicDecider\ is satisfied.
    Here, error is given by the probability of the union of two disjoint events: (i) $\nu_q \leq l - q(D)$, which is calculated by instantiating (\ref{eq:lap-leq}) with $t = (q(D) - l)\epsilon$, and (ii) $\nu_q \geq r - q(D)$, which is calculated by instantiating (\ref{eq:lap-geq}) with $t = (r - q(D))\epsilon$. 
    Error equals $\frac{1}{2} e^{-(q(D) - l)\epsilon} + \frac{1}{2} e^{-(r - q(D))\epsilon}$.
\end{proof}

\subsection{Proof of Proposition~\ref{prop:tau_min_lapcount}} \label{subsec:appendix_tau_min_lapcount}
\begin{proof}
    We analyze the two cases in Definition~\ref{def:effectiveness_metric}.
    \begin{enumerate}
        \item Let $q(D) = q(D_s)$. \lapcount\ returns $o = 1$ when the if condition (line~\ref{l:basic_if}) in \BasicDecider\ is satisfied. 
        Therefore, 
        \begin{align}
            Pr[o = 1] &= Pr[|q(D) + \nu_q - q(D_s)| < \tau]\nonumber\\
            &= Pr[|\nu_q| < \tau] ~~\quad\text{(since $q(D) = q(D_s)$)}\nonumber\\
            &= 1 - Pr[|\nu_q| \geq \tau]\nonumber\\
            &= 1 - e^{-\tau \epsilon} ~~\quad\text{(from (\ref{eq:lap-mod}) with $t = \tau\epsilon$)}\nonumber\\
            &\geq 1 - \delta ~~\quad\text{(when $\tau \geq \frac{1}{\epsilon} \ln{\frac{1}{\delta}} > \frac{1}{\epsilon} \ln{\frac{1}{2\delta}}$)} \label{eq:lmcount-th-1}
        \end{align}
        
        \item Let $q(D)\not\in (q(D_s) - 2\tau, q(D_s) + 2\tau)$. Without loss of generality, suppose $q(D)\leq q(D_s) - 2\tau$, so
        \begin{align}
            \tau \leq q(D_s) - \tau - q(D)\label{eq:4pt3b_tau}
        \end{align}
        
       Recall that \lapcount\ returns $o = 0$ when the else condition (line~\ref{l:basic_else}) in \BasicDecider\ is satisfied. 
       Therefore, 
        \begin{align}
            Pr[o = 0] &= Pr[|q(D) + \nu_q - q(D_s)| \geq \tau] \nonumber\\
            &\geq Pr[q(D) + \nu_q - q(D_s) \leq - \tau]\nonumber\\
            &= Pr[\nu_q \leq q(D_s) - \tau - q(D)]\nonumber\\
            &= 1 - Pr[\nu_q > q(D_s) - \tau - q(D)]\nonumber\\
            &> 1 - Pr[\nu_q \geq q(D_s) - \tau - q(D)]\nonumber\\
            &= 1 - \frac{1}{2}e^{-(q(D_s) - \tau - q(D))\epsilon} ~~\quad\text{(from (\ref{eq:lap-geq}) with}\nonumber\\ 
            &\hspace{25mm}\text{$t = (q(D_s) -\tau - q(D))\epsilon$)}\nonumber\\
            &\geq 1 - \frac{1}{2}e^{-\tau \epsilon} ~~\quad\text{(from (\ref{eq:4pt3b_tau}))}\nonumber\\
            &\geq 1 - \delta ~~\quad\text{(when $\tau\geq \frac{1}{\epsilon} \ln{\frac{1}{2\delta}}$)} \label{eq:lmcount-th-2}
        \end{align}
    \end{enumerate}
    From (\ref{eq:lmcount-th-1}) and (\ref{eq:lmcount-th-2}), the upper bound of $\tau_{min}^{\lapcount, \delta}$ follows.
\end{proof}

\subsection{Proof of Proposition~\ref{prop:bin_u_sens}}\label{subsec:appendix_bin_u_sens}
\begin{proof}
    Note that since $u$ is binary, $\Delta u \leq 1$. We show that $\Delta u = 1$. 
    The global sensitivity equals the worst case change in the output of $u$ on neighboring databases $D\approx D'$ for $o\in \mathcal{R} = \{0, 1\}$.
    Consider an interval $\intvl$ such that $q(D) = p\in \intvl$ and $q(D') = p + 1\not\in \intvl$ (as $q$ is a count query and $\Delta q = 1$). $u(D, D_s, q, \tau, o = 1) = 1$, whereas $u(D', D_s, q, \tau, o = 1) = 0$. Instead if $o = 0$, the scores differ by $1$ again. Similarly for $q(D)\not\in \intvl$ and $q(D')\in \intvl$.
    Thus, $\Delta u = 1$.
\end{proof}

\subsection{Proof of Proposition~\ref{prop:uprime_sens}}\label{subsec:appendix_uprime_sens}
\begin{proof}
    The global sensitivity equals the worst case change in the output of $u'$ on neighboring databases $D\approx D'$ for $o\in \mathcal{R} = \{0, 1\}$. 
    Let $q(D) = p$ and $q(D') = p + 1$, for an integer $p$ (as $\Delta q = 1$ for count query $q$).
    We break down the analysis for $o = 1$ as follows:
    \begin{enumerate}
        \item If $q(D)$ and $q(D')$ both satisfy (\ref{eq:uprime_1}), then the scores from $u'$ are equal.

        \item If $q(D)$ and $q(D')$ both satisfy (\ref{eq:uprime_2}), then the scores from $u'$ differ by $1/2\tau$ because:
        \begin{align*}
            |\frac{q(D) - (l - \tau)}{2\tau} - \frac{q(D') - (l - \tau)}{2\tau}|
            = |\frac{q(D) - q(D')}{2\tau}|
        \end{align*}
        which equals $1/2\tau$ because $|q(D) - q(D')| = |p - (p + 1)| = 1$.

        \item If $q(D)$ and $q(D')$ both satisfy (\ref{eq:uprime_3}), then the scores from $u'$ differ by $1/2\tau$ because:
        \begin{align*}
            |1 - \frac{q(D) - q(D_s)}{2\tau} - \left( 1 - \frac{q(D') - q(D_s)}{2\tau} \right)|
            = |\frac{q(D) - q(D')}{2\tau}|
        \end{align*}
        which equals $1/2\tau$ because $|q(D) - q(D')| = |p - (p + 1)| = 1$.

        \item If $q(D)$ satisfies (\ref{eq:uprime_1}) and $q(D')$ satisfies (\ref{eq:uprime_2}), then $q(D)\leq l - \tau$ because $q(D)$ and $q(D')$ differ by $1$ (given), and so the scores from $u'$ differ by less than $1/2\tau$. 
        Similarly for when: ($i$) $q(D)$ satisfies (\ref{eq:uprime_2}) and $q(D')$ satisfies (\ref{eq:uprime_3}), and ($ii$) $q(D)$ satisfies (\ref{eq:uprime_3}) and $q(D')$ satisfies (\ref{eq:uprime_1}). 
    \end{enumerate}
    The other case, i.e., $o = 0$, is proven similarly. Hence, the worst case change is $1/2\tau$ and the global sensitivity of $u'$ equals $1/2\tau$.
\end{proof}

\subsection{Proof of Proposition~\ref{prop:tau_min_emcount}} \label{subsec:appendix_tau_min_emcount}
\begin{proof}
    We analyze the two cases in Definition~\ref{def:effectiveness_metric}.
    $c$ is a positive constant (Definition~\ref{def:EM}).
    Proposition~\ref{prop:uprime_sens} gives $\Delta u' = 1/2\tau$.
    \begin{enumerate}
        \item Let $q(D) = q(D_s)$. Here, the score assigned by $u'$ to $o = 0$ is $0$ and to $o = 1$ is $1$ (by (\ref{eq:uprime_2})). From Definition~\ref{def:EM}, we get
        \begin{align}
            &Pr[o = 0] = c\cdot e^{\epsilon \times 0 \times \tau} \text{ and } Pr[o = 1] = c\cdot e^{\epsilon \times 1 \times \tau}\nonumber\\
            \implies &Pr[o = 1] = \frac{e^{\epsilon \tau}}{1 + e^{\epsilon \tau}} \geq 1 - \delta ~~\quad\text{(when $\tau\geq \frac{1}{\epsilon} \ln{\frac{1 - \delta}{\delta}}$)}\label{eq:tau-em-1}
        \end{align}
        
        \item Let $q(D)\not\in (q(D_s) - 2\tau, q(D_s) + 2\tau)$. Without loss of generality, let $q(D)\leq q(D_s) - 2\tau = l - \tau$. 
        Here, the score assigned by $u'$ to $o = 0$ is $1$ and to $o = 1$ is $0$ (by (\ref{eq:uprime_1})). From Definition~\ref{def:EM}, we get
        \begin{align}
            &Pr[o = o] = c\cdot e^{\epsilon \times 1 \times \tau} \text{ and } Pr[o = 1] = c\cdot e^{\epsilon \times 0 \times \tau}\nonumber\\
            \implies &Pr[o = 0] = \frac{e^{\epsilon \tau}}{1 + e^{\epsilon \tau}} \geq 1 - \delta ~~\quad\text{(when $\tau\geq \frac{1}{\epsilon} \ln{\frac{1 - \delta}{\delta}}$)}\label{eq:tau-em-0}
        \end{align}
    \end{enumerate}
    From (\ref{eq:tau-em-1}) and (\ref{eq:tau-em-0}), the upper bound of $\tau_{min}^{\emcount, \delta}$ follows.
\end{proof}

\section{Proofs from Section~\ref{sec:sum}}

\subsection{Proof of Proposition~\ref{prop:tau_min_lapsum}} \label{subsec:appendix_tau_min_lapsum}
\begin{proof}
    Since $\nu_q\sim Lap(\GSbound/\epsilon)$, bounds (\ref{eq:lap-geq})-(\ref{eq:lap-mod}) become 
    \begin{align}
        Pr \left[ \nu_q \geq t\cdot \frac{\GSbound}{\epsilon} \right] = \frac{e^{-t}}{2}\label{eq:lap-sum-geq}\\
        Pr \left[ \nu_q \leq -t\cdot \frac{\GSbound}{\epsilon} \right] = \frac{e^{-t}}{2}\label{eq:lap-sum-leq}\\
        Pr \left[ |\nu_q| \geq t\cdot \frac{\GSbound}{\epsilon} \right] = e^{-t}\label{eq:lap-sum-mod}
    \end{align}
    
    We analyze the two cases in Definition~\ref{def:effectiveness_metric}.
    \begin{enumerate}
        \item Let $q(D) = q(D_s)$. \lapsum\ returns $o = 1$ when the if condition (line~\ref{l:basic_if}) in \BasicDecider\ is satisfied. 
        Therefore, 
        \begin{align}
            Pr[o = 1] &= Pr[|q(D) + \nu_q - q(D_s)| < \tau]\nonumber\\
            &= Pr[|\nu_q| < \tau] ~~\quad\text{(since $q(D) = q(D_s)$)}\nonumber\\
            &= 1 - Pr[|\nu_q| \geq \tau]\nonumber\\
            &= 1 - e^{-\tau \epsilon/\GSbound} ~~\quad\text{(from (\ref{eq:lap-sum-mod}) with $t = \tau\epsilon/\GSbound$)}\nonumber\\
            &\geq 1 - \delta ~~\quad\text{(when $\tau \geq \frac{\GSbound}{\epsilon} \ln{\frac{1}{\delta}} > \frac{\GSbound}{\epsilon} \ln{\frac{1}{2\delta}}$)} \label{eq:lmsum-th-1}
        \end{align}
        
        \item Let $q(D)\not\in (q(D_s) - 2\tau, q(D_s) + 2\tau)$. Without loss of generality, suppose $q(D)\leq q(D_s) - 2\tau$, so
        \begin{align}
            \tau \leq q(D_s) - \tau - q(D)\label{eq:5pt2b_tau}
        \end{align}
        
       Recall that \lapsum\ returns $o = 0$ when the else condition (line~\ref{l:basic_else}) in \BasicDecider\ is satisfied. 
       Therefore, 
        \begin{align}
            Pr[o = 0] &= Pr[|q(D) + \nu_q - q(D_s)| \geq \tau] \nonumber\\
            &\geq Pr[q(D) + \nu_q - q(D_s) \leq - \tau]\nonumber\\
            &= Pr[\nu_q \leq q(D_s) - \tau - q(D)]\nonumber\\
            &= 1 - Pr[\nu_q > q(D_s) - \tau - q(D)]\nonumber\\
            &> 1 - Pr[\nu_q \geq q(D_s) - \tau - q(D)]\nonumber\\
            &= 1 - \frac{1}{2}e^{-(q(D_s) - \tau - q(D))\cdot\frac{\epsilon}{\GSbound}} ~~\quad\text{(from (\ref{eq:lap-sum-geq}) with}\nonumber\\ 
            &\hspace{25mm}\text{$t = (q(D_s) -\tau - q(D))\cdot \frac{\epsilon}{\GSbound}$)}\nonumber\\
            &\geq 1 - \frac{1}{2}e^{-\tau \epsilon/\GSbound} ~~\quad\text{(from (\ref{eq:5pt2b_tau}))}\nonumber\\
            &\geq 1 - \delta ~~\quad\text{(when $\tau\geq \frac{\GSbound}{\epsilon} \ln{\frac{1}{2\delta}}$)} \label{eq:lmsum-th-2}
        \end{align}
    \end{enumerate}
    From (\ref{eq:lmsum-th-1}) and (\ref{eq:lmsum-th-2}), the upper bound of $\tau_{min}^{\lapsum, \delta}$ follows.
\end{proof}

\subsection{R2T: Validity of Truncation Function and Known Confidence Bound} \label{subsec:appendix_R2T_truncSatProperties_Thm5pt1}
We are given a \sumQuery\ query $q$ on (aggregate) attribute $A_i$.
Recall from Section~\ref{subsec:rtwotsum} that we use a simple truncation function which removes tuples in $D$ with $A_i$ value greater than $t_j = 2^j$, for $j = 1$ to $\log(\GSbound)$.
$q(D, t_j)$ denotes the output of running $q$ on the corresponding truncated database.
\cite{10.1145/3514221.3517844} lays down three properties that a {\em valid} truncation function must satisfy.
We show next that the simple truncation function meets the criteria.

\begin{proposition}\label{prop:R2T-trunc-properties}
    The simple truncation function satisfies the properties required by R2T (as stated below) and is valid:
    \begin{enumerate}
        \item For any $t_j$, the global sensitivity of $q(\cdot, t_j)\leq t_j$.
        
        \item For any $t_j$, $q(D, t_j)\leq q(D)$.
        
        \item For any $D$, $\exists$ non-negative integer $t^*(D)\leq \GSbound$ such that $\forall t_j\geq t^*(D), q(D, t_j) = q(D)$.
    \end{enumerate}
\end{proposition}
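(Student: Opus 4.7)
The plan is to verify the three properties one by one. Throughout, I assume (as is standard in the R2T setting of \cite{10.1145/3514221.3517844} and as is implicit in Section~\ref{subsec:rtwotsum}) that the aggregate attribute $A_i$ takes non-negative values, and I write $D_{t_j}$ for the truncated database obtained from $D$ by deleting every tuple whose $A_i$ value exceeds $t_j$, so that $q(D, t_j)=\sum_{r\in D_{t_j},\, r\models\whereClause} r.A_i$.

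For property~(1), I would take any two neighboring databases $D \approx D'$ differing in a single tuple $r^*$, and analyze the effect on the truncated sum. If $r^*.A_i \le t_j$ and $r^*\models\whereClause$, then $r^*$ contributes $r^*.A_i \le t_j$ to exactly one of $q(D,t_j), q(D',t_j)$ and zero to the other, so the two truncated sums differ by at most $t_j$. If $r^*.A_i > t_j$, then $r^*$ is truncated out in both databases and contributes nothing, so the two sums are equal. If $r^*$ does not satisfy $\whereClause$, the two sums are again equal. In all cases $|q(D,t_j)-q(D',t_j)| \le t_j$, giving $\Delta q(\cdot,t_j) \le t_j$.

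For property~(2), since the truncation function only removes tuples (never adds or modifies any) and each $A_i$ value is non-negative, the sum over the truncated database cannot exceed the sum over the original. Formally, $D_{t_j} \subseteq D$, so $q(D,t_j)=\sum_{r\in D_{t_j},\, r\models\whereClause} r.A_i \le \sum_{r\in D,\, r\models\whereClause} r.A_i = q(D)$. This is a one-line argument.

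For property~(3), I would let $M(D)$ denote the largest $A_i$ value among tuples of $D$ satisfying $\whereClause$ (or $0$ if no such tuple exists), and take $t^*(D) := M(D)$. Since every $A_i$ value is bounded by $\GSbound$ in the R2T setting, $t^*(D) \le \GSbound$. For any $t_j \ge t^*(D)$, no tuple satisfying $\whereClause$ is removed by the truncation, so every such tuple contributes its full $A_i$ value, giving $q(D, t_j) = q(D)$. I do not anticipate any real obstacle in this proof: each property follows directly from the simple observation that the truncation deletes only those tuples whose $A_i$ values exceed the threshold, so the only minor subtlety is being explicit about the non-negativity assumption on $A_i$ used in property~(2) and about why $t^*(D) \le \GSbound$ in property~(3).
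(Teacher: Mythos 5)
Your proposal is correct and follows essentially the same argument as the paper: property (1) by noting the differing tuple contributes at most $t_j$ after truncation, property (2) because truncation only removes (non-negatively valued) tuples, and property (3) by exhibiting a witness $t^*(D)$ — your choice of the largest satisfying $A_i$ value is exactly the alternative witness ($\DLS$) the paper itself mentions alongside $\GSbound$. No gaps; your case analysis in (1) is just a slightly more explicit version of the paper's one-line justification.
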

\begin{proof}
    The simple truncation function is valid because:
    \begin{enumerate}
        \item For any $t_j$, the worst-case change in the output of the truncated sum over databases $D\approx D'$ occurs when we add or remove a tuple with $A_i = t_j$. Larger values get discarded.
    
        \item For any $t_j$, the result of applying $q$ on the truncated database cannot exceed $q(D)$ because we remove at least $0$ tuples in the truncated database.
    
        \item $t^*(D) = \GSbound$ satisfies the condition. \DLS\ (Definition~\ref{def:DLS}) is another such value and is a private quantity.
    \end{enumerate}
\end{proof}

Theorem $5.1$ in \cite{10.1145/3514221.3517844} states that:
\begin{footnotesize}
    \begin{align*}
        Pr\left[ q(D) \geq \tilde{q}(D) \geq q(D) - 4\log(\GSbound) \ln{\left( \frac{\log(\GSbound)}{\beta} \right)} \frac{t^*}{\epsilon} \right]\geq 1 - \beta
    \end{align*}
\end{footnotesize}
where $\tilde{q}(D)$ is R2T's final estimate for $q(D)$, $t^*$ is as given in Proposition~\ref{prop:R2T-trunc-properties} and $0 < \beta < 1$. We apply this with $t^* = \DLS$ in (\ref{eq:R2T_thm5-1}).

\subsection{Proof of Proposition~\ref{prop:tau_min_rtwotsum}} \label{subsec:appendix_tau_min_rtwotsum}
\begin{proof}
    We analyze the two cases in Definition~\ref{def:effectiveness_metric}.
    \begin{enumerate}
        \item Let $q(D) = q(D_s)$. \rtwotsum\ returns $o = 1$ when the if condition (line~\ref{l:basic_if}) in \BasicDecider\ is satisfied. 
        Therefore,
        \begin{align}
            Pr[o = 1] &= Pr[|\tilde{q}(D) - q(D_s)| < \tau]\nonumber\\
            &= Pr[|\tilde{q}(D) - q(D)| < \tau] ~~\quad\text{(since $q(D) = q(D_s)$)}\nonumber\\
            &\geq 1 - \delta \label{eq:rtwotsum-th-1}
        \end{align}
        from (\ref{eq:R2T_thm5-1}) for $\beta = \delta$ and $\tau\geq 4\log(\GSbound) \ln{\left( \frac{\log(\GSbound)}{\delta} \right)} \frac{\DLS}{\epsilon}$.
        
        \item Let $q(D)\not\in (q(D_s) - 2\tau, q(D_s) + 2\tau)$. Recall that \rtwotsum\ returns $o = 0$ when the else condition (line~\ref{l:basic_else}) in \BasicDecider\ is satisfied.
        If $q(D)\leq q(D_s) - 2\tau = l - \tau$, then 
        \begin{align}
            Pr[o = 0] &= Pr[|\tilde{q}(D) - q(D_s)|\geq \tau]\nonumber\\
            &\geq Pr[-\tau \geq \tilde{q}(D) - q(D_s)]\nonumber\\
            &= Pr[\tilde{q}(D)\leq l] ~~\quad\text{(since $l = q(D_s) - \tau$)}\nonumber\\
            &\geq Pr[\tilde{q}(D)\leq q(D)] ~~\quad\text{(since $q(D)\leq l - \tau < l$)}\nonumber\\
            &\geq 1 - \delta ~~\quad\text{(from (\ref{eq:R2T_thm5-1}) with $\beta = \delta$)} \label{eq:rtwotsum-th-2}
        \end{align}
        
        Otherwise, if $q(D)\geq q(D_s) + 2\tau = r + \tau$, then
        \begin{align}
            Pr[o = 0] &= Pr[|\tilde{q}(D) - q(D_s)|\geq \tau]\nonumber\\
            &\geq Pr[\tilde{q}(D) - q(D_s)\geq \tau]\nonumber\\
            &= Pr[\tilde{q}(D)\geq r] ~~\quad\text{(since $r = q(D_s) + \tau$)}\nonumber\\
            &\geq Pr[\tilde{q}(D)\geq q(D) - \tau] ~~\quad\text{(given $r\leq q(D) - \tau$)}\nonumber\\
            &\geq 1 - \delta \label{eq:rtwotsum-th-3}
        \end{align}
        from (\ref{eq:R2T_thm5-1}) for $\beta = \delta$ and $\tau\geq 4\log(\GSbound) \ln{\left( \frac{\log(\GSbound)}{\delta} \right)} \frac{\DLS}{\epsilon}$.
    \end{enumerate}
    From (\ref{eq:rtwotsum-th-1}), (\ref{eq:rtwotsum-th-2}) and (\ref{eq:rtwotsum-th-3}), the upper bound of $\tau_{min}^{\rtwotsum, \delta}$ follows.
\end{proof}

\subsection{SVT: Noise Scale in the Laplace Distribution for Monotonic Queries and Proof of Theorem~\ref{thm:SVTsum_is_DP}} \label{subsec:appendix_MonoQueries_isEpsDP} 
For monotonic queries (Definition~\ref{def:SVT_monotonic}), Theorem $3$ \cite{10.14778/3055330.3055331} shows that a factor of $2$ can be saved in the scale of the Laplace distribution used to sample independent noise for each noisy query answer. 
It analyzes Algorithm $7$ \cite{10.14778/3055330.3055331} that returns a DP estimate for each query with a `yes' answer (noisy query answer exceeded noisy threshold), which consumes additional privacy budget as simply returning the already computed noisy query answer violates $\epsilon$-DP. 
Algorithm $7$ \cite{10.14778/3055330.3055331} divides the given privacy budget into $\epsilon_1$ for the noisy thresholds, $\epsilon_2$ for the noisy query answers, and $\epsilon_3$ for any DP estimates to be returned.
However, we note that this analysis does not directly apply to our application of SVT in Algorithm~\ref{algo:err_svtsum} (or Algorithm~\ref{algo:upBound_DS_fromSVT}) because we do not return any DP estimates.
We set $\epsilon_1 = \epsilon_2 = \epsilon/2$.
Below, we prove Theorem~\ref{thm:SVTsum_is_DP}. 
{\bf We note that this proof is adapted from the proof of Theorem 3 \cite{10.14778/3055330.3055331}, but we repeat the steps here for the sake of completeness.}
\begin{proof}
    Let $\top_1$ and $\top_2$ denote the `yes' answers from SVT \cite{10.14778/3055330.3055331} in Algorithm~\ref{algo:err_svtsum}, i.e., when the noisy query answer exceeds the noisy threshold in lines~\ref{l:exceed_r} and \ref{l:exceed_lplus1}, respectively. 
    Let $\bot_1$ and $\bot_2$, respectively, denote when these checks fail. 
    Algorithm~\ref{algo:err_svtsum} stops as soon as either: ($i$) check in line~\ref{l:exceed_r} passes, so the sequence of `yes'/`no' answers is $\bot_1, \ldots, \bot_1, \top_1$, or ($ii$) check in line~\ref{l:exceed_lplus1} passes, so the sequence of `yes'/`no' answers is $\bot_1, \ldots, \bot_1, \bot_2, \ldots, \bot_2, \top_2$. We denote this sequence of `yes'/`no' answers by a string $a$, with length $|a|$.
    
    Let $\mathbb{I}_{\top_1} = \{i: a_i = \top_1\}$, where $a_i$ is the $i$-th character in $a$. Similarly $\mathbb{I}_{\top_2}, \mathbb{I}_{\bot_1}$, and $\mathbb{I}_{\bot_2}$ for $\top_2, \bot_1$, and $\bot_2$, respectively.
    Note that for ($i$), $\mathbb{I}_{\bot_1} = \{1, 2, \ldots, |a| - 1\}$, $\mathbb{I}_{\top_1} = \{|a|\}$, and $\mathbb{I}_{\bot_2} = \mathbb{I}_{\top_2} = \emptyset$.
    Note that for ($ii$), $\mathbb{I}_{\bot_1}\cup \mathbb{I}_{\bot_2} = \{1, 2, \ldots, |a| - 1\}$, $\mathbb{I}_{\top_1} = \emptyset$, and $\mathbb{I}_{\top_2} = \{|a|\}$.
    
    Motivated by \cite{10.14778/3055330.3055331}, let $f_j^{(1)} (D, z) = Pr[q_j(D) + \nu_j < r/t_j + z]$ (not satisfying the condition for line~\ref{l:exceed_r} in Algorithm~\ref{algo:err_svtsum}) and $g_j^{(1)} (D, z) = Pr[q_j(D) + \nu_j \geq r/t_j + z]$ (satisfying the condition for line~\ref{l:exceed_r} in Algorithm~\ref{algo:err_svtsum}), where the superscript represents the $1$st for loop (line~\ref{l:1st_SVT} in Algorithm~\ref{algo:err_svtsum}) and $z$ represents the different values $\rho$ can take (line~\ref{l:div_budget} in Algorithm~\ref{algo:err_svtsum}). Similarly, $f_j^{(2)}$ and $g_j^{(2)}$ denote the two events, respectively, when threshold equals $(l+1)/t_j$ instead of $r/t_j$ ($2$nd for loop in line~\ref{l:2nd_SVT} in Algorithm~\ref{algo:err_svtsum}).
    
    We denote by $\Delta$ the global sensitivity of the queries used: $q_j(D) = q(D, t_j)/t_j$, for $t_j = 2^j, j = 1$ to $\log(\GSbound)$. Recall from Section~\ref{subsec:rtwotsum} that $q(D, t_j)$ is a sum query on the truncated database obtained by removing tuples in $D$ with $A_i$ value greater than $t_j$. The global sensitivity of $q(\cdot, t_j)$ (i.e., $q$ on any database truncated with threshold $t_j$) is $t_j$ as larger values are discarded. Hence, the worst case change in the output of $q_j(\cdot)$ on $D\approx D'$ is $1$ due to division by $t_j$.
    
    {\bf When $a$ satisfies ($i$): } Given that $a = \bot_1, \ldots, \bot_1, \top_1$. Here, $Pr[\text{sequence of `yes'/`no' answers in } \svtsum(q, D, D_s, \tau, \epsilon) = a]$\footnote{Since $\rho\sim Lap(\frac{1}{\epsilon/2})$ (line~\ref{l:div_budget} in Algorithm~\ref{algo:err_svtsum}) is drawn from a continuous distribution with range $(-\infty, \infty)$, we use integration instead of sum.}
    \begin{align*}
        = &\int_{-\infty}^{\infty} Pr[((q_1(D) + \nu_1 < r/t_1 + \rho) \land ... \land (q_{|a|-1}(D) + \nu_{|a|-1} < \\
        &r/t_{|a|-1} + \rho) \land (q_{|a|}(D) + \nu_{|a|} \geq r/t_{|a|} + \rho))~|~\rho = z] Pr[\rho = z] dz\\
    \end{align*}

    \begin{align}
        = \int_{-\infty}^{\infty} Pr[\rho = z] \left( \prod_{j\in \mathbb{I}_{\bot_1}} f_j^{(1)} (D, z) \right) g_{|a|}^{(1)} (D, z) dz
    \end{align}
    conditioning on $\rho$ taking value $z$, 
    and then from the conditional independence of the i.i.d. Laplace noise terms $\nu_j$ given $\rho = z$.
    
    Now, we examine the two cases for monotonic queries, i.e., a sequence of queries for which all answers that change in going from $D$ to a neighboring database $D'$ (i.e. $D\approx D'$) either all increase or all decrease (Definition~\ref{def:SVT_monotonic}):
    \begin{enumerate}
        \item $\forall j,~ q(D, t_j)\geq q(D', t_j)$, where $D\approx D'$.
        \item $\forall j,~ q(D, t_j)\leq q(D', t_j)$, where $D\approx D'$.
    \end{enumerate}
    
    Firstly, suppose $\forall j,~ q(D, t_j)\geq q(D', t_j)$, where $D\approx D'$. Thus,
    \begin{align}
        \forall j,~ q_j(D)\geq q_j(D') ~~\quad\text{(dividing both sides by $t_j$)} \label{equn:case1} 
    \end{align}
    Since the global sensitivity of $q_j$ is $\Delta = 1$, we also have
    \begin{align}
        \forall j,~ q_j(D)\leq q_j(D') + \Delta\label{equn:case1-GS-1} 
    \end{align}
    Using (\ref{equn:case1}) we get, 
    \begin{align}
        f_j^{(1)} (D, z) &= Pr[q_j(D) + \nu_j < r/t_j + z]\nonumber\\
        &\leq Pr[q_j(D') + \nu_j < r/t_j + z] = f_j^{(1)} (D', z)
    \end{align}
    Next, we prove a useful property of the Laplace distribution for $\nu_j\sim Lap(\Delta/\epsilon_2)$, namely: $\forall y,~ Pr[\nu_j = y] \leq e^{\epsilon_2} Pr[\nu_j = y + \Delta]$.
    \begin{alignat}{2}
        &|y + \Delta| &&\leq |y| + \Delta ~~\quad\text{(since $\Delta = 1$)}\nonumber\\
        \implies &e^{|y + \Delta| \frac{\epsilon_2}{\Delta}} &&\leq e^{(|y| + \Delta) \frac{\epsilon_2}{\Delta}}\nonumber\\
        \implies &e^{|y + \Delta| \frac{\epsilon_2}{\Delta}} &&\leq e^{|y|\frac{\epsilon_2}{\Delta}} \cdot e^{\epsilon_2}\nonumber\\
        \implies &e^{-|y|/b} &&\leq e^{-|y + \Delta|/b}\cdot e^{\epsilon_2} ~~\quad\text{(for $b = \Delta/\epsilon_2$)}\nonumber\\
        \implies &Pr[\nu_j = y] &&\leq e^{\epsilon_2} Pr[\nu_j = y + \Delta] ~~\quad\text{(since $\nu_j\sim Lap(\Delta/\epsilon_2)$)}\label{equn:LapDistr-Property-nuj}
    \end{alignat}
    Furthermore, 
    \begin{align}
        g_j^{(1)} (D, z) &= Pr[q_j(D) + \nu_j \geq r/t_j + z]\nonumber\\
        &\leq Pr[q_j(D') + \Delta + \nu_j \geq r/t_j + z] ~~\quad\text{(from (\ref{equn:case1-GS-1}))}\nonumber\\
        & = Pr[q_j(D')  + \nu_j \geq r/t_j + z - \Delta]\nonumber\\
        &\leq e^{\epsilon_2} Pr[q_j(D') + \nu_j \geq r/t_j + z] ~~\quad\text{(from (\ref{equn:LapDistr-Property-nuj}))}\nonumber\\
        & = e^{\epsilon_2} g_j^{(1)} (D', z)
    \end{align}
    So, $Pr[\text{sequence of `yes'/`no' answers in } \svtsum(q, D, D_s, \tau, \epsilon) = a]$
    \begin{align}
        & = \int_{-\infty}^{\infty} Pr[\rho = z] \left( \prod_{j\in \mathbb{I}_{\bot_1}} f_j^{(1)} (D, z) \right) g_{|a|}^{(1)} (D, z) dz\nonumber\\
        & \leq \int_{-\infty}^{\infty} Pr[\rho = z] \left( \prod_{j\in \mathbb{I}_{\bot_1}} f_j^{(1)} (D', z) \right) e^{\epsilon_2} g_{|a|}^{(1)} (D', z) dz\nonumber\\
        & = e^{\epsilon_2} Pr[\text{sequence of `yes'/`no' answers in }\nonumber\\
        &\hspace{12mm}\svtsum(q, D', D_s, \tau, \epsilon) = a] 
    \end{align}
    
    Now, suppose $\forall j, q(D, t_j)\leq q(D', t_j)$, where $D\approx D'$. Thus,
    \begin{align}
        \forall j,~ q_j(D)\leq q_j(D') ~~\quad\text{(dividing both sides by $t_j$)} \label{equn:case2} 
    \end{align}
    Since the global sensitivity of $q_j$ is $\Delta = 1$, we also have
    \begin{align}
        \forall j,~ q_j(D)\geq q_j(D') - \Delta\label{equn:case1-GS-2} 
    \end{align}
    Using (\ref{equn:case1-GS-2}) we get,
    \begin{align}
        f_j^{(1)} (D, z - \Delta) &= Pr[q_j(D) + \nu_j < r/t_j + z - \Delta]\nonumber\\
        &\leq Pr[q_j(D') - \Delta + \nu_j < r/t_j + z - \Delta]\nonumber\\
        & = f_j^{(1)} (D', z)
    \end{align}
    Furthermore,
    \begin{align}
        g_j^{(1)} (D, z - \Delta) &= Pr[q_j(D) + \nu_j \geq r/t_j + z - \Delta]\nonumber\\
        &\leq Pr[q_j(D') + \nu_j \geq r/t_j + z - \Delta] ~~\quad\text{(from (\ref{equn:case2}))}\nonumber\\
        &\leq e^{\epsilon_2} Pr[q_j(D') + \nu_j \geq r/t_j + z] ~~\quad\text{(from (\ref{equn:LapDistr-Property-nuj}))}\nonumber\\
        & = e^{\epsilon_2} g_j^{(1)} (D', z)
    \end{align}
    Since $\rho\sim Lap(\Delta/\epsilon_1)$, by the property of the Laplace distribution
    \begin{align}
        \forall z, ~Pr[\rho = z]\leq e^{\epsilon_1} Pr[\rho = z + \Delta]\label{equn:LapDistr-Property-rho}
    \end{align}
    So, $Pr[\text{sequence of `yes'/`no' answers in } \svtsum(q, D, D_s, \tau, \epsilon) = a]$
    \begin{align}
        & = \int_{-\infty}^{\infty} Pr[\rho = z] \left( \prod_{j\in \mathbb{I}_{\bot_1}} f_j^{(1)} (D, z) \right) g_{|a|}^{(1)} (D, z) dz\nonumber\\
        & = \int_{-\infty}^{\infty} Pr[\rho = z - \Delta] \left( \prod_{j\in \mathbb{I}_{\bot_1}} f_j^{(1)} (D, z - \Delta) \right) g_{|a|}^{(1)} (D, z - \Delta) dz\nonumber\\
        &\hspace{20mm}\text{(by change of integration variable from $z$ to $z - \Delta$)}\nonumber\\
        & \leq \int_{-\infty}^{\infty} e^{\epsilon_1} Pr[\rho = z] \left( \prod_{j\in \mathbb{I}_{\bot_1}} f_j^{(1)} (D', z) \right) e^{\epsilon_2} g_{|a|}^{(1)} (D', z) dz\nonumber\\
        & = e^{\epsilon_1 + \epsilon_2} Pr[\text{sequence of `yes'/`no' answers in }\nonumber\\
        &\hspace{16mm}\svtsum(q, D', D_s, \tau, \epsilon) = a] 
    \end{align}
    
    {\bf When $a$ satisfies (ii):} Given that $a = \bot_1, \ldots, \bot_1, \bot_2, \ldots, \bot_2, \top_2$. $Pr[\text{sequence of `yes'/`no' answers in } \svtsum(q, D, D_s, \tau, \epsilon) = a]$\footnote{Since $\rho\sim Lap(\frac{1}{\epsilon/2})$ (line~\ref{l:div_budget} in Algorithm~\ref{algo:err_svtsum}) is drawn from a continuous distribution with range $(-\infty, \infty)$, we use integration instead of sum.}
    \begin{align}
        = \int_{-\infty}^{\infty} Pr[\rho = z] \left( \prod_{j\in \mathbb{I}_{\bot_1}} f_j^{(1)} (D, z) \right) \left( \prod_{j\in \mathbb{I}_{\bot_2}} f_j^{(2)} (D, z) \right) g_{|a|}^{(2)} (D, z) dz
    \end{align}
    conditioning on $\rho$ taking value $z$, 
    and then from the conditional independence of the i.i.d. Laplace noise terms $\nu_j$ given $\rho = z$.
    
    Again, we examine the two cases for monotonic queries.
    Firstly, suppose $\forall j,~ q(D, t_j)\geq q(D', t_j)$, where $D\approx D'$.
    Using (\ref{equn:case1}) we get,
    \begin{align}
        f_j^{(1)} (D, z) &= Pr[q_j(D) + \nu_j < r/t_j + z]\nonumber\\
        &\leq Pr[q_j(D') + \nu_j < r/t_j + z] = f_j^{(1)} (D', z)\\
        f_j^{(2)} (D, z) &= Pr[q_j(D) + \nu_j < (l+1)/t_j + z]\nonumber\\
        &\leq Pr[q_j(D') + \nu_j < (l+1)/t_j + z] = f_j^{(2)} (D', z)
    \end{align}
    \begin{align}
        g_j^{(2)} (D, z) &= Pr[q_j(D) + \nu_j \geq (l+1)/t_j + z]\nonumber\\
        &\leq Pr[q_j(D') + \Delta + \nu_j \geq (l+1)/t_j + z] ~~\quad\text{(from (\ref{equn:case1-GS-1}))}\nonumber\\
        & = Pr[q_j(D')  + \nu_j \geq (l+1)/t_j + z - \Delta]\nonumber\\
        &\leq e^{\epsilon_2} Pr[q_j(D') + \nu_j \geq (l+1)/t_j + z] ~~\quad\text{(from (\ref{equn:LapDistr-Property-nuj}))}\nonumber\\
        & = e^{\epsilon_2} g_j^{(2)} (D', z)
    \end{align}
    So, $Pr[\text{sequence of `yes'/`no' answers in } \svtsum(q, D, D_s, \tau, \epsilon) = a]$
    \begin{align}
        & = \int_{-\infty}^{\infty} Pr[\rho = z] \left( \prod_{j\in \mathbb{I}_{\bot_1}} f_j^{(1)} (D, z) \right) \left( \prod_{j\in \mathbb{I}_{\bot_2}} f_j^{(2)} (D, z) \right) g_{|a|}^{(2)} (D, z) dz\nonumber\\
        & \leq \int_{-\infty}^{\infty} Pr[\rho = z] \left( \prod_{j\in \mathbb{I}_{\bot_1}} f_j^{(1)} (D', z) \right) \left( \prod_{j\in \mathbb{I}_{\bot_2}} f_j^{(2)} (D', z) \right)\cdot\nonumber\\
        &\hspace{12mm} e^{\epsilon_2} g_{|a|}^{(2)} (D', z) dz\nonumber\\
        & = e^{\epsilon_2} Pr[\text{sequence of `yes'/`no' answers in }\nonumber\\
        &\hspace{12mm}\svtsum(q, D', D_s, \tau, \epsilon) = a]
    \end{align}

    Now, suppose $\forall j, q(D, t_j)\leq q(D', t_j)$, where $D\approx D'$.
    Using (\ref{equn:case1-GS-2}),
    \begin{align}
        f_j^{(1)} (D, z - \Delta) &= Pr[q_j(D) + \nu_j < r/t_j + z - \Delta]\nonumber\\
        &\leq Pr[q_j(D') - \Delta + \nu_j < r/t_j + z - \Delta]\nonumber\\
        & = f_j^{(1)} (D', z)\\
        f_j^{(2)} (D, z - \Delta) &= Pr[q_j(D) + \nu_j < (l+1)/t_j + z - \Delta]\nonumber\\
        &\leq Pr[q_j(D') - \Delta + \nu_j < (l+1)/t_j + z - \Delta]\nonumber\\
        & = f_j^{(2)} (D', z)
    \end{align}
    Furthermore,
    \begin{align}
        g_j^{(2)} (D, z - \Delta) &= Pr[q_j(D) + \nu_j \geq (l+1)/t_j + z - \Delta]\nonumber\\
        &\leq Pr[q_j(D') + \nu_j \geq (l+1)/t_j + z - \Delta] ~~\text{(from (\ref{equn:case2}))}\nonumber\\
        &\leq e^{\epsilon_2} Pr[q_j(D') + \nu_j \geq (l+1)/t_j + z] ~~\text{(from (\ref{equn:LapDistr-Property-nuj}))}\nonumber\\
        & = e^{\epsilon_2} g_j^{(2)} (D', z)
    \end{align}
    Recall from (\ref{equn:LapDistr-Property-rho}) that $\forall z, ~Pr[\rho = z]\leq e^{\epsilon_1} Pr[\rho = z + \Delta]$.
    So, $Pr[\text{sequence of `yes'/`no' answers in } \svtsum(q, D, D_s, \tau, \epsilon) = a]$
    \begin{align}
        & = \int_{-\infty}^{\infty} Pr[\rho = z] \left( \prod_{j\in \mathbb{I}_{\bot_1}} f_j^{(1)} (D, z) \right) \left( \prod_{j\in \mathbb{I}_{\bot_2}} f_j^{(2)} (D, z) \right) g_{|a|}^{(2)} (D, z) dz\nonumber\\
        & = \int_{-\infty}^{\infty} Pr[\rho = z - \Delta] \left( \prod_{j\in \mathbb{I}_{\bot_1}} f_j^{(1)} (D, z - \Delta) \right) \left( \prod_{j\in \mathbb{I}_{\bot_2}} f_j^{(2)} (D, z - \Delta) \right)\cdot \nonumber\\
        &g_{|a|}^{(2)} (D, z - \Delta) dz ~~\text{(by change of integration variable from $z$ to $z - \Delta$)}\nonumber\\
        & \leq \int_{-\infty}^{\infty} e^{\epsilon_1} Pr[\rho = z] \left( \prod_{j\in \mathbb{I}_{\bot_1}} f_j^{(1)} (D', z) \right) \left( \prod_{j\in \mathbb{I}_{\bot_2}} f_j^{(2)} (D', z) \right)\cdot \nonumber\\
        &\hspace{12mm}e^{\epsilon_2} g_{|a|}^{(2)} (D', z) dz\nonumber\\
        & = e^{\epsilon_1 + \epsilon_2} Pr[\text{sequence of `yes'/`no' answers in }\nonumber\\
        &\hspace{16mm}\svtsum(q, D', D_s, \tau, \epsilon) = a]
    \end{align}
\end{proof}

The sequence of `yes'/`no' answers for Algorithm~\ref{algo:upBound_DS_fromSVT} corresponds to $a = \bot_1, \ldots, \bot_1, \top_1$ in the above proof. In Algorithm~\ref{algo:upBound_DS_fromSVT}, SVT is run with two-thirds of $\epsilon'$, its input privacy budget, and $\epsilon_1 = \epsilon_2 = \epsilon'/3$. By sequential composition (Proposition~\ref{prop:DP-comp-post}), Algorithm~\ref{algo:upBound_DS_fromSVT} is $\epsilon'$-DP.

\subsection{Improvement on \svtsum\ to lower error}\label{subsec:appendix_SVT_optimization}
Recall from (\ref{eq:interval}) that $\intvl = (l, r) = (q(D_s) - \tau, q(D_s) + \tau)$.
\svtsum\ incurs high error if $q(D)$ and $r$ (or $l$) are close because in the later iterations where $t_j$ values are large, the check is easily influenced by noise.
Note that $\forall t_k\geq \DLS, q(D, t_k) = q(D)$, where \DLS\ is the downward local sensitivity (Definition~\ref{def:DLS}).
Thus, in Algorithm~\ref{algo:upBound_DS_fromSVT}, we bound $t_j$ by a private bound for \DLS\ \cite{7837832}.
In the loops in lines \ref{l:1st_SVT} and \ref{l:2nd_SVT} in Algorithm~\ref{algo:err_svtsum}, $t_j$ is bounded by (the next largest power of $2$ from) $\GSbound$. Instead, in the improved \svtsum, we use a private upper bound of \DLS\ to reduce the noise in Algorithm~\ref{algo:upBound_DS_fromSVT}. This bound is still $\GSbound$ in the worst case, but can take lower values. 
To achieve this, we spend $\epsilon'$ equal to a third of \svtsum's privacy budget $\epsilon$ on Algorithm~\ref{algo:upBound_DS_fromSVT}. The remaining $2\epsilon/3$ of the privacy budget is used in Algorithm~\ref{algo:err_svtsum}, hence each step in \svtsum\ uses $\epsilon/3$ instead of $\epsilon/2$.

Since the number of tuples $n$ in $D$ satisfying \whereClause\ (the predicates in the WHERE clause in $q$) is private, we spend $\epsilon'/3$ of  Algorithm~\ref{algo:upBound_DS_fromSVT}'s privacy budget $\epsilon'$ to compute a DP estimate for $n$ (line~\ref{l:DS_bound_ntilde}). 
SVT (loop in line~\ref{l:DS_bound_SVT}) is run with the remaining privacy budget, where $\epsilon'/3$ is consumed by $\rho$ that is used to get noisy thresholds for all queries and $\epsilon'/3$ is used to add independent Laplace noise to query answers. 
Query $q_j^c(D)$ computes the number of tuples in $D$ that satisfy \whereClause\ and have $A_i$ value less than or equal to $2^j$, for $j = 1$ to $\lceil \log(\GSbound) \rceil$.
Note that $q_j^c(D)$s (line~\ref{l:DS_bound_query}) are monotonic (Definition~\ref{def:SVT_monotonic}), which helps save a factor of $2$ in the noise scale for $\nu_j^c$ \cite{10.14778/3055330.3055331} (discussed further in Appendix~\ref{subsec:appendix_MonoQueries_isEpsDP}).
The return value becomes the largest $j$ used in lines~\ref{l:1st_SVT} and \ref{l:2nd_SVT} in Algorithm~\ref{algo:err_svtsum}.

\IncMargin{1em}
\begin{algorithm}[!ht]
    \caption{Private bound for \DLS}
    \label{algo:upBound_DS_fromSVT}
    
    \SetKwInOut{Input}{Input}\SetKwInOut{Output}{Output}
    \LinesNumbered
    \Input{$q$ - \sumQuery\ query, $D$ - private database, $\epsilon'$ - privacy budget (= $\epsilon/3$), $\theta$ - fraction in $(0, 1)$ (default $0.95$)}
    \Output{Private bound for \DLS}
    
    \BlankLine
    
    \SetKwFunction{FMain}{PrivateBoundDS}
    \SetKwProg{Fn}{Function}{:}{}
    \Fn{\FMain{$q, D, \epsilon', \theta$}}{
        $n\gets$ number of tuples in $D$ that satisfy \whereClause\ in $q$\;
        $\nu_q\gets Lap(\frac{1}{\epsilon'/3}), \tilde{n}\gets n + \nu_q$\; \label{l:DS_bound_ntilde}
        $A_i\gets$ aggregate attribute in $q$\;
        $\rho\gets Lap(\frac{1}{\epsilon'/3})$\;
        \For{$j\in \{1, 2, 3, \ldots, \lceil \log(\GSbound) \rceil \}$}
        {
            \label{l:DS_bound_SVT}
            $\nu_j^c \sim Lap(\frac{1}{\epsilon'/3})$, $q_j^c(D) \gets$ \texttt{SELECT COUNT(*) FROM $D$ WHERE \whereClause\ AND $A_i \leq 2^j$}\; \label{l:DS_bound_query}
            \If{$q_j^c(D) + \nu_j^c \geq \theta*\tilde{n} + \rho$}
            {
                \Return $j$\;
            }
        }
        \Return $\lceil \log(\GSbound) \rceil$\;
    }
\end{algorithm}
\DecMargin{1em}

By sequential composition and post-processing (Proposition~\ref{prop:DP-comp-post}), the following holds:

\begin{observation}
    Algorithm~\ref{algo:upBound_DS_fromSVT} satisfies $\epsilon/3$-DP, and consequently the improved \svtsum\ 
    (with Algorithm~\ref{algo:upBound_DS_fromSVT}) 
    satisfies $\epsilon$-DP.
\end{observation}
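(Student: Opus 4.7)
The plan is to decompose the proof into two parts corresponding to the two claims in the observation. First I would show that Algorithm~\ref{algo:upBound_DS_fromSVT} is $\epsilon'$-DP where $\epsilon' = \epsilon/3$, and then use this together with Theorem~\ref{thm:SVTsum_is_DP} to conclude that the improved \svtsum\ satisfies $\epsilon$-DP by sequential composition.

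For the first part, I would view Algorithm~\ref{algo:upBound_DS_fromSVT} as the sequential composition of three privacy-consuming steps. Step~(a) is the release of $\tilde{n}$ on line~\ref{l:DS_bound_ntilde}: this is the Laplace Mechanism applied to a count query with $\Delta q = 1$ and noise scale $1/(\epsilon'/3)$, so by Definition~\ref{def:LM} it is $(\epsilon'/3)$-DP. Step~(b) is the SVT loop on line~\ref{l:DS_bound_SVT} where $\rho \sim Lap(3/\epsilon')$ noises the threshold and each $\nu_j^c \sim Lap(3/\epsilon')$ noises the query answer. I would first observe that the count queries $q_j^c$ are monotonic in the sense of Definition~\ref{def:SVT_monotonic} (adding or removing a single tuple changes all of them in the same direction, or leaves them unchanged), so the factor-of-$2$ savings exploited in Appendix~\ref{subsec:appendix_MonoQueries_isEpsDP} applies. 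Then, because $\theta \cdot \tilde{n}$ has already been released in step~(a) and can be treated as a fixed constant offset by post-processing, the SVT analysis in the proof of Theorem~\ref{thm:SVTsum_is_DP} applies directly to the single-phase output of the form $\bot_1, \dots, \bot_1, \top_1$ (case~(i) in that proof), yielding privacy cost $\epsilon_1 + \epsilon_2 = \epsilon'/3 + \epsilon'/3 = 2\epsilon'/3$. Step~(c), returning $j$ (or $\lceil \log(\GSbound)\rceil$), is pure post-processing. By sequential composition (Proposition~\ref{prop:DP-comp-post}), Algorithm~\ref{algo:upBound_DS_fromSVT} is $(\epsilon'/3 + 2\epsilon'/3)$-DP $= \epsilon'$-DP $= (\epsilon/3)$-DP.

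For the second part, I would note that the improved \svtsum\ first runs Algorithm~\ref{algo:upBound_DS_fromSVT} with budget $\epsilon/3$ (shown $(\epsilon/3)$-DP above), and then runs Algorithm~\ref{algo:err_svtsum} with the remaining budget of $2\epsilon/3$, where the returned value $j$ from Algorithm~\ref{algo:upBound_DS_fromSVT} is used solely as the upper bound on the iteration index $t_j$. This use of $j$ is post-processing of a DP output, so it does not raise the privacy cost of that first stage. By Theorem~\ref{thm:SVTsum_is_DP} instantiated with budget $2\epsilon/3$, the second stage satisfies $(2\epsilon/3)$-DP. Sequential composition then yields $\epsilon/3 + 2\epsilon/3 = \epsilon$-DP overall.

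The main obstacle I anticipate is justifying that the SVT analysis of Theorem~\ref{thm:SVTsum_is_DP} transfers cleanly to Algorithm~\ref{algo:upBound_DS_fromSVT}: one must carefully argue that conditioning on the released $\tilde{n}$ lets us treat $\theta \tilde{n}$ as a data-independent constant for the remainder of the analysis (so that the only ``private'' part of the thresholds is $\rho$), and that the reduction in iteration count (one for-loop instead of two) corresponds exactly to case~(i) in the proof of Theorem~\ref{thm:SVTsum_is_DP}. The monotonicity check for $q_j^c$ and the resulting factor-of-$2$ saving in the Laplace scale are the other delicate pieces, but both follow from the same arguments already invoked in Appendix~\ref{subsec:appendix_MonoQueries_isEpsDP}.
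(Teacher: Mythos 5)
Your proposal is correct and follows essentially the same route as the paper: it decomposes Algorithm~\ref{algo:upBound_DS_fromSVT} into the Laplace release of $\tilde{n}$ (cost $\epsilon'/3$) plus a single-phase SVT run on the monotonic queries $q_j^c$ corresponding to case~(i) of the proof of Theorem~\ref{thm:SVTsum_is_DP} with $\epsilon_1=\epsilon_2=\epsilon'/3$, and then composes sequentially with Algorithm~\ref{algo:err_svtsum} run on the remaining $2\epsilon/3$. Your explicit remarks on conditioning on the released $\tilde{n}$ to treat $\theta\tilde{n}$ as a constant threshold, and on the adaptive use of the returned index $j$, only make explicit what the paper leaves implicit.
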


\subsection{Empirical Error Analysis of \svtsum} \label{subsec:appendix_tau_min_svtsum}

Our experiments suggest that \svtsum's error may not benefit from increasing $\epsilon$ when $\tau$ is small and the query answer of the given \sumQuery\ query $q$ on $D$ is inside the interval \intvl\ (as defined in (\ref{eq:interval})). We find that as $\tau$ increases, error decreases more noticeably. 

For several queries such as $q_{13}$ to $q_{18}$, \svtsum's error was in between that of \lapsum's error and \rtwotsum's error when $\tau$ was varied.
\svtsum's error was comparable to that of \rtwotsum's error for $q_{19}$ with the third largest output on $D$ and $\DLS = 276799$ (Definition~\ref{def:DLS}).
Queries $q_{20}$ and $q_{21}$ have output greater than $685M$, and we saw that the error decreased quicker, i.e., for smaller $\tau$ values, but stayed comparable to or worse than \rtwotsum's error. The \DLS\ values for $q_{20}$ and $q_{21}$ are $403353$ and $500000$, respectively, whereas the bound \GSbound\ on the global sensitivity is $500000$. Observe that this gap is not large.

\section{Proofs from Section~\ref{sec:median}}
\subsection{Error Analysis of \emmed} \label{subsec:appendix_emmed_err}
\begin{proposition} 
    Given a private database $D$, 
    synthetic database $D_s$, 
    \medianQuery\ query $q$ on aggregate attribute $A_i$, 
    distance bound $\tau$, and 
    privacy budget $\epsilon$.
    Interval $\intvl = (l, r) = (q(D_s) - \tau, q(D_s) + \tau)$ (\ref{eq:interval}).
    \emmed\ satisfies the following: 
    \begin{enumerate}
        \item If $q(D)\not\in \intvl$ but $o = 1$, then $err(\cdot)$ equals
        $$\sum\limits_{e\in dom(A_i)\cap \intvl} e^{\epsilon u(D, e)/2} / \sum\limits_{e\in dom(A_i)} e^{\epsilon u(D, e)/2}$$
        
        \item If $q(D)\in \intvl$ but $o = 0$, then $err(\cdot)$ equals
        $$\sum\limits_{e\in dom(A_i)\setminus \intvl} e^{\epsilon u(D, e)/2} / \sum\limits_{e\in dom(A_i)} e^{\epsilon u(D, e)/2}$$
    \end{enumerate}
\end{proposition}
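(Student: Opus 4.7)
The plan is to unwind the definition of \emmed\ as an instantiation of \BasicDecider\ (Algorithm~\ref{algo:plug_in}) with \dpmech\ being the Exponential Mechanism. Concretely, \emmed\ draws a noisy median estimate $\tilde m \in dom(A_i)$ according to $\mathcal{M}_E$ with range $\mathcal{R} = dom(A_i)$ and score $u(D,e) = -|rank_{\whereClause}(D,e) - n'/2|$, and then returns $o = 1$ iff $\tilde m - q(D_s) \in (-\tau, \tau)$, i.e., iff $\tilde m \in \intvl$. Hence, given any fixed private database $D$, the event $\{o = 1\}$ coincides with $\{\tilde m \in \intvl\}$ and $\{o = 0\}$ with $\{\tilde m \notin \intvl\}$.

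Next, I would invoke Definition~\ref{def:EM} together with the stated fact that $\Delta u = 1$ to write the normalized output distribution of $\mathcal{M}_E$ as
\begin{align*}
    Pr[\tilde m = e] \;=\; \frac{e^{\epsilon u(D,e)/2}}{\sum_{e' \in dom(A_i)} e^{\epsilon u(D,e')/2}}, \qquad e \in dom(A_i),
\end{align*}
which fixes the positive constant $c$ from Definition~\ref{def:EM} as the reciprocal of the denominator. Both claims then follow by summing this pmf over the relevant subset of the domain.

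For part (1), $q(D)\notin\intvl$ is the ``distance-bound-unmet'' situation, so the correct outcome is $o = 0$ and the error event is exactly $\{\tilde m \in \intvl\}$; summing the pmf above over $e \in dom(A_i)\cap \intvl$ yields the stated expression. For part (2), $q(D)\in\intvl$ makes $o = 1$ correct, so the error event is $\{\tilde m \notin \intvl\}$, and summing the pmf over $e \in dom(A_i)\setminus\intvl$ yields the second expression. No approximations or tail inequalities are needed, because the EM output distribution is known exactly; the only subtlety is ensuring that the ``if'' condition of \BasicDecider\ with the open interval $(l,r)$ is applied consistently, which amounts to deciding whether the boundary points $l, r$ are in $\intvl$ (I follow the convention used throughout the paper, namely $\intvl = (l,r)$ open, matching the strict inequalities in line~\ref{l:basic_if} of Algorithm~\ref{algo:plug_in}). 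Since this is a direct computation rather than a bound, there is no real obstacle; the main care point is simply to translate the generic \BasicDecider\ failure conditions into the EM's exact output probabilities.
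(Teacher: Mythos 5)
Your proposal is correct and follows essentially the same route as the paper's proof: both unwind \emmed\ as \BasicDecider\ instantiated with the Exponential Mechanism, identify the error event with the EM's output landing inside (resp.\ outside) $\intvl$, and sum the exact normalized EM probabilities $c\cdot e^{\epsilon u(D,e)/2}$ over the corresponding subset of $dom(A_i)$, with $c$ fixed by the normalization $\sum_{e} c\cdot e^{\epsilon u(D,e)/2}=1$. No substantive differences.
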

\begin{proof}
    Recall from Definition~\ref{def:EM} that the Exponential Mechanism (EM) picks an element $e$ in the output space $\mathcal{R}$ with probability given by $c\cdot e^{\frac{\epsilon u(D, e)}{2 \Delta u}}$, where $u(D, e)$ is the utility of $e\in \mathcal{R}$ on $D$ and $c$ is a positive constant.
    \emmed\ runs \BasicDecider\ with the EM as \dpmech, with additional parameters $\mathcal{R} = dom(A_i)$ and score function $u(D, e) = -|rank_{\whereClause}(D, e) - \frac{n'}{2}|$, $\forall e\in \mathcal{R}$ and $n'$ equal to the number of tuples in $D$ satisfying $\varphi$ (\texttt{WHERE} clause) in the given \medianQuery\ query $q$.
    The sensitivity of the score function equals $1$ because rank of any $e$ either stays the same or changes in the same direction as $n'$ on $D'\approx D$.
    
    For ($1$):
    Given that $q(D)\not\in \intvl$ but $o = 1$. 
    Note that \BasicDecider\ returns $o = 1$ when the condition in line~\ref{l:basic_if} is met, i.e., when the EM (i.e. $\dpmech(\cdot)$) uses $e\in dom(A_i)\cap \intvl$ as the noisy estimate so that  $|e - q(D_s)| < \tau$.
    The error (i.e. probability of this happening) is given by
    \begin{align}
        \sum\limits_{e\in dom(A_i)\cap \intvl} c\cdot e^{\epsilon u(D, e)/2}
    \end{align}
    We know that
    \begin{align}
        \sum\limits_{e\in dom(A_i)} c\cdot e^{\epsilon u(D, e)/2} = 1
    \end{align}
    Therefore, $err(\cdot)$ equals
    \begin{align}
        \sum\limits_{e\in dom(A_i)\cap \intvl} e^{\epsilon u(D, e)/2} / \sum\limits_{e\in dom(A_i)} e^{\epsilon u(D, e)/2}
    \end{align}
    
    For ($2$): 
    Given that $q(D)\in \intvl$ but $o = 0$. 
    Note that \BasicDecider\ returns $o = 0$ when the condition in line~\ref{l:basic_if} is unmet, i.e., when the EM (i.e. $\dpmech(\cdot)$) uses $e\in dom(A_i)\setminus \intvl$ as the noisy estimate so that  $|e - q(D_s)|\geq \tau$.
    The error (i.e. probability of this happening) is given by
    \begin{align}
        \sum\limits_{e\in dom(A_i)\setminus \intvl} c\cdot e^{\epsilon u(D, e)/2}
    \end{align}
    We know that
    \begin{align}
        \sum\limits_{e\in dom(A_i)} c\cdot e^{\epsilon u(D, e)/2} = 1
    \end{align}
    Therefore, $err(\cdot)$ equals
    \begin{align}
        \sum\limits_{e\in dom(A_i)\setminus \intvl} e^{\epsilon u(D, e)/2} / \sum\limits_{e\in dom(A_i)} e^{\epsilon u(D, e)/2}
    \end{align}
\end{proof}

\subsection{Empirical Analyses and Comparison of Errors of \histmed\ \& \emmed} \label{subsec:appendix_histmed_err}

\paratitle{Empirical Error Analysis of \emmed}
Let us revisit the empirical analysis for \emmed\ from Section~\ref{sec:experiments}.
For $q_{22}$ and $q_{29}$, the support, i.e., tuples in $D$ that satisfy the predicates in the \texttt{WHERE} clause, is large. As a result, the probability distribution used to sample the noisy estimate for the median is more concentrated around the correct value. The opposite is true for $q_{23}$ and $q_{28}$. We see error decrease as $\tau$ increases.
When $\epsilon$ increases, the variance of the probability distribution used to sample the noisy estimate in \emmed\ decreases. \emmed's error has a decreasing trend except for $q_{29}$ because the scores are negatives numbers with large magnitudes due to the support being large (Definition~\ref{def:EM}).\\

\paratitle{Empirical Error Analysis of \histmed}
Recall that $n'$ is the number of tuples in $D$ that satisfy $\varphi$ (\texttt{WHERE} clause) in the given \medianQuery\ query $q$.
Our experiments suggest that when the gap between $q_1(D)$ and the majority, i.e., $\lceil \frac{n'}{2} \rceil$, is large in comparison to the noise scales (in the Laplace distributions) used to get the respective noisy estimates, \histmed\ makes fewer mistakes in line~\ref{l:q1_return} in Algorithm~\ref{algo:err_1_q_med}. Similarly for the gap between $q_2(D)$ and $\lceil \frac{n'}{2} \rceil$ in line~\ref{l:q2_return} in Algorithm~\ref{algo:err_1_q_med}.
As $\epsilon$ increases, the noise scales decrease, reducing the chance of the noise terms causing \histmed\ to make a mistake.\\

\paratitle{Empirical Error Comparison of \histmed\ \& \emmed}
In contrast to \histmed, the outcome from \emmed\ is more dependent on the ranks (in $D$) of domain values that are in the neighborhood of the true median $q(D)$. For example, \histmed's error was smaller than that of \emmed\ on queries $q_{23}$ and $q_{28}$, even at small $\tau$ and $\epsilon$.

\begin{table*}[ht]
    \centering \small
    \begin{tabular}{| c | c | c | c |}
        \rowcolor[HTML]{C0C0C0} \hline \multicolumn{2}{| c |}{{\bf Query}} & {\bf WHERE clause} & $q(D), q(D_s)$\\
        \hline \cellcolor[HTML]{D7D7D7}& $q_1$ & \begin{tabular}{@{}c@{}} \texttt{$sex$ LIKE `Female' AND $race$ LIKE `White-}\\ \texttt{American Indian-Asian' AND $workly$ LIKE `Yes'} \end{tabular} & \begin{tabular}{@{}c@{}} $34$ \\ $34$ \end{tabular}\\
        \cline{2-4} \cellcolor[HTML]{D7D7D7}& $q_2$ & \begin{tabular}{@{}c@{}} \texttt{$race$ LIKE `White-Black' AND $marst$ LIKE `Never married/}\\ \texttt{single' AND $citizen$ LIKE `Naturalized citizen'} \end{tabular} & \begin{tabular}{@{}c@{}} $54$ \\ $54$ \end{tabular}\\
        \cline{2-4} \cellcolor[HTML]{D7D7D7}& $q_3$ & \begin{tabular}{@{}c@{}} \texttt{$sex$ LIKE `Male' AND $educ$ LIKE `Doctorate}\\ \texttt{degree' AND $marst$ LIKE `Separated'} \end{tabular} & \begin{tabular}{@{}c@{}} $87$ \\ $91$ \end{tabular}\\
        \cline{2-4} \cellcolor[HTML]{D7D7D7}& $q_4$ & \begin{tabular}{@{}c@{}} \texttt{$race$ LIKE `White-American Indian' AND $marst$}\\ \texttt{LIKE `Widowed' AND $citizen$ LIKE `Born in U.S'} \end{tabular} & \begin{tabular}{@{}c@{}} $624$ \\ $624$ \end{tabular}\\
        \cline{2-4} \cellcolor[HTML]{D7D7D7}& $q_5$ & \begin{tabular}{@{}c@{}} \texttt{$sex$ LIKE `Female' AND $educ$ LIKE `Grades $5$}\\ \texttt{or $6$' AND $marst$ LIKE `Never married/single'} \end{tabular} & \begin{tabular}{@{}c@{}} $1560$ \\ $1606$ \end{tabular}\\
        \cline{2-4} \cellcolor[HTML]{D7D7D7}& $q_6$ & \begin{tabular}{@{}c@{}} \texttt{$sex$ LIKE `Female' AND $race$ LIKE}\\ \texttt{`White-Black' AND $workly$ LIKE `Yes'} \end{tabular} & \begin{tabular}{@{}c@{}} $1893$ \\ $1884$ \end{tabular}\\
        \cline{2-4} \cellcolor[HTML]{D7D7D7}& $q_7$ & \begin{tabular}{@{}c@{}} \texttt{$sex$ LIKE `Female' AND $race$ LIKE}\\ \texttt{`Asian only' AND $workly$ LIKE `No'} \end{tabular} & \begin{tabular}{@{}c@{}} $18693$ \\ $18777$ \end{tabular}\\
        \cline{2-4} \cellcolor[HTML]{D7D7D7}& $q_8$ & \begin{tabular}{@{}c@{}} \texttt{$sex$ LIKE `Female' AND $educ$ LIKE `Master's degree'}\\ \texttt{AND $marst$ LIKE `Married, spouse present'} \end{tabular} & \begin{tabular}{@{}c@{}} $36756$ \\ $36404$ \end{tabular}\\
        \cline{2-4} \cellcolor[HTML]{D7D7D7}& $q_9$ & \begin{tabular}{@{}c@{}} \texttt{$race$ LIKE `Black' AND $marst$ LIKE `Married, spouse}\\ \texttt{present' AND $citizen$ LIKE `Born in U.S'} \end{tabular} & \begin{tabular}{@{}c@{}} $41691$ \\ $42455$ \end{tabular}\\
        \cline{2-4} \cellcolor[HTML]{D7D7D7}& $q_{10}$ & \begin{tabular}{@{}c@{}} \texttt{$sex$ LIKE `Female' AND $educ$ LIKE `Bachelor's}\\ \texttt{degree' AND $marst$ LIKE `Married, spouse present'} \end{tabular} & \begin{tabular}{@{}c@{}} $79723$ \\ $79283$ \end{tabular}\\
        \cline{2-4} \cellcolor[HTML]{D7D7D7}& $q_{11}$ & \begin{tabular}{@{}c@{}} \texttt{$sex$ LIKE `Female' AND $race$ LIKE}\\ \texttt{`White' AND $workly$ LIKE `Yes'} \end{tabular} & \begin{tabular}{@{}c@{}} $320934$ \\ $320001$ \end{tabular}\\
        \cline{2-4} \multirow{-24}{*}{\cellcolor[HTML]{D7D7D7} \countQuery}& $q_{12}$ &  \begin{tabular}{@{}c@{}} \texttt{$race$ LIKE `White' AND $marst$ LIKE `Married,}\\ \texttt{spouse present' AND $citizen$ LIKE `Born in U.S'} \end{tabular} & \begin{tabular}{@{}c@{}} $471994$ \\ $470483$ \end{tabular}\\
        
        \hline \cellcolor[HTML]{D7D7D7}& $q_{13}$ & \begin{tabular}{@{}c@{}} \texttt{$sex$ LIKE `Female' AND $race$ LIKE}\\ \texttt{`White-Black' AND $workly$ LIKE `No'} \end{tabular} & \begin{tabular}{@{}c@{}} $6915340$ \\ $6942866$ \end{tabular}\\
        \cline{2-4} \cellcolor[HTML]{D7D7D7}& $q_{14}$ & \begin{tabular}{@{}c@{}} \texttt{$race$ LIKE `Asian only' AND $marst$ LIKE}\\ \texttt{`Separated' AND $citizen$ LIKE `Born in U.S'} \end{tabular} & \begin{tabular}{@{}c@{}} $7653748$ \\ $7818555$ \end{tabular}\\
        \cline{2-4} \cellcolor[HTML]{D7D7D7}& $q_{15}$ & \begin{tabular}{@{}c@{}} \texttt{$sex$ LIKE `Male' AND $educ$ LIKE `Professional school}\\ \texttt{degree' AND $marst$ LIKE `Married, spouse absent'} \end{tabular} & \begin{tabular}{@{}c@{}} $9891949$ \\ $10129562$ \end{tabular}\\
        \cline{2-4} \cellcolor[HTML]{D7D7D7}& $q_{16}$ & \begin{tabular}{@{}c@{}} \texttt{$race$ LIKE `White' AND $marst$ LIKE `Divorced'}\\ \texttt{AND $citizen$ LIKE `Not a citizen'} \end{tabular} & \begin{tabular}{@{}c@{}} $123543040$ \\ $128497757$ \end{tabular}\\
        \cline{2-4} \cellcolor[HTML]{D7D7D7}& $q_{17}$ & \begin{tabular}{@{}c@{}} \texttt{$sex$ LIKE `Female' AND $educ$ LIKE `Master's}\\ \texttt{degree' AND $marst$ LIKE `Widowed'} \end{tabular} & \begin{tabular}{@{}c@{}} $138131505$ \\ $139557517$ \end{tabular}\\
        \cline{2-4} \cellcolor[HTML]{D7D7D7}& $q_{18}$ & \begin{tabular}{@{}c@{}} \texttt{$sex$ LIKE `Male' AND $race$ LIKE}\\ \texttt{`Asian only' AND $workly$ LIKE `No'} \end{tabular} &  \begin{tabular}{@{}c@{}} $143179279$ \\ $149069654$ \end{tabular}\\
        \cline{2-4} \cellcolor[HTML]{D7D7D7}& $q_{19}$ & \begin{tabular}{@{}c@{}} \texttt{$sex$ LIKE `Male' AND $race$ LIKE}\\ \texttt{`Black' AND $workly$ LIKE `No'} \end{tabular} &  \begin{tabular}{@{}c@{}} $327036909$ \\ $336130017$ \end{tabular}\\
        \cline{2-4} \cellcolor[HTML]{D7D7D7}& $q_{20}$ & \begin{tabular}{@{}c@{}} \texttt{$sex$ LIKE `Female' AND $educ$ LIKE `High school diploma}\\ \texttt{ or equivalent' AND $marst$ LIKE `Never married/single'} \end{tabular} &  \begin{tabular}{@{}c@{}} $685635093$ \\ $690711885$ \end{tabular}\\
        \cline{2-4} \multirow{-18}{*}{\cellcolor[HTML]{D7D7D7} {\begin{tabular}{@{}c@{}} \sumQuery\\ on\\ $inctot$ \end{tabular}}}& $q_{21}$ & \begin{tabular}{@{}c@{}} \texttt{$race$ LIKE `White' AND $marst$ LIKE `Married,}\\ \texttt{spouse present' AND $citizen$ LIKE `Born in U.S'} \end{tabular} &  \begin{tabular}{@{}c@{}} $23542765109$ \\ $23434676868$ \end{tabular}\\
        
        \hline \cellcolor[HTML]{D7D7D7}& $q_{22}$ & \begin{tabular}{@{}c@{}} \texttt{$workly$ LIKE `Yes' AND $classwkr$ LIKE `Wage/salary,}\\ \texttt{private' AND $educ$ LIKE `Bachelor's degree'} \end{tabular} & \begin{tabular}{@{}c@{}} $41$ \\ $41$ \end{tabular}\\
        \cline{2-4} \cellcolor[HTML]{D7D7D7}& $q_{23}$ & \begin{tabular}{@{}c@{}} \texttt{$sex$ LIKE `Male' AND $race$ LIKE `White-}\\ \texttt{Black' AND $relate$ LIKE `Spouse'} \end{tabular} & \begin{tabular}{@{}c@{}} $40$ \\ $40$ \end{tabular}\\
        \cline{2-4} \cellcolor[HTML]{D7D7D7}& $q_{24}$ & \begin{tabular}{@{}c@{}} \texttt{$sex$ LIKE `Male' AND $race$ LIKE `Black'}\\ \texttt{AND $classwkr$ LIKE `Wage/salary, private'} \end{tabular} & \begin{tabular}{@{}c@{}} $40$ \\ $40$ \end{tabular}\\
        \cline{2-4} \cellcolor[HTML]{D7D7D7}& $q_{25}$ & \begin{tabular}{@{}c@{}} \texttt{$sex$ LIKE `Female' AND $race$ LIKE}\\ \texttt{`Black' AND $workly$ LIKE `Yes'} \end{tabular} & \begin{tabular}{@{}c@{}} $41$ \\ $41$ \end{tabular}\\
        \cline{2-4} \cellcolor[HTML]{D7D7D7}& $q_{26}$ & \begin{tabular}{@{}c@{}} \texttt{$sex$ LIKE `Female' AND $educ$ LIKE `Some college but}\\ \texttt{no degree' AND $marst$ LIKE `Married, spouse absent'} \end{tabular} & \begin{tabular}{@{}c@{}} $40$ \\ $40$ \end{tabular}\\
        \cline{2-4} \cellcolor[HTML]{D7D7D7}& $q_{27}$ & \begin{tabular}{@{}c@{}} \texttt{$sex$ LIKE `Male' AND $citizen$ LIKE}\\ \texttt{`Born in U.S' AND $workly$ LIKE `Yes'} \end{tabular} & \begin{tabular}{@{}c@{}} $41$ \\ $41$ \end{tabular}\\
        \cline{2-4} \cellcolor[HTML]{D7D7D7}& $q_{28}$ & \begin{tabular}{@{}c@{}} \texttt{$race$ LIKE `Asian only' AND $marst$ LIKE}\\ \texttt{`Separated' AND $citizen$ LIKE `Born in U.S'} \end{tabular} & \begin{tabular}{@{}c@{}} $39$ \\ $39$ \end{tabular}\\
        \cline{2-4} \cellcolor[HTML]{D7D7D7}& $q_{29}$ & \begin{tabular}{@{}c@{}} \texttt{$sex$ LIKE `Male' AND $race$ LIKE `White'}\\ \texttt{AND $classwkr$ LIKE `Wage/salary, private'} \end{tabular} & \begin{tabular}{@{}c@{}} $40$ \\ $40$ \end{tabular}\\
        \cline{2-4} \multirow{-18}{*}{\cellcolor[HTML]{D7D7D7} {\begin{tabular}{@{}c@{}} \medianQuery\\ on\\ $age$ \end{tabular}}}& $q_{30}$ & \begin{tabular}{@{}c@{}} \texttt{$workly$ LIKE `Yes' AND $classwkr$ LIKE `Armed}\\ \texttt{forces' AND $educ$ LIKE `Doctorate degree'} \end{tabular} & \begin{tabular}{@{}c@{}} $40$ \\ $40$ \end{tabular}\\
        
        \hline
    \end{tabular}
    \caption{Queries used in experiments \reva{for the database derived from the IPUMS-CPS survey data}.} \label{tbl:exp_queries}
\end{table*}

\begin{table*}[ht]
    \centering \small
    \reva{
        \begin{tabular}{| c | c | c | c |}
            \rowcolor[HTML]{C0C0C0} \hline \multicolumn{2}{| c |}{{\bf Query}} & {\bf WHERE clause} & $q(D), q(D_s)$\\
            \hline \cellcolor[HTML]{D7D7D7}& $q_{31}$ & \texttt{$passenger\_count = 5$ AND $rateCodeID = 5$ AND $payment\_type = 1$} & \begin{tabular}{@{}c@{}} $11$ \\ $11$ \end{tabular}\\
            \cline{2-4} \cellcolor[HTML]{D7D7D7}& $q_{32}$ & \texttt{$passenger\_count = 1$ AND $trip\_distance = 1$ AND $tip\_amount = 9$} & \begin{tabular}{@{}c@{}} $56$ \\ $56$ \end{tabular}\\
            \cline{2-4} \cellcolor[HTML]{D7D7D7}& $q_{33}$ & \texttt{$trip\_distance = 12$ AND $total\_amount = 53$ AND $congestion\_surcharge = 0$} & \begin{tabular}{@{}c@{}} $92$ \\ $91$ \end{tabular}\\
            \cline{2-4} \cellcolor[HTML]{D7D7D7}& $q_{34}$ & \texttt{$trip\_distance = 2$ AND $fare\_amount = 18$ AND $congestion\_surcharge = 2$} & \begin{tabular}{@{}c@{}} $751$ \\ $748$ \end{tabular}\\
            \cline{2-4} \cellcolor[HTML]{D7D7D7}& $q_{35}$ & \texttt{$passenger\_count = 2$ AND $trip\_distance = 7$ AND $tip\_amount = 5$} & \begin{tabular}{@{}c@{}} $1064$ \\ $1026$ \end{tabular}\\
            \cline{2-4} \cellcolor[HTML]{D7D7D7}& $q_{36}$ & \texttt{$trip\_distance = 17$ AND $rateCodeID = 1$ AND $payment\_type = 1$} & \begin{tabular}{@{}c@{}} $1926$ \\ $1909$ \end{tabular}\\
            \cline{2-4} \cellcolor[HTML]{D7D7D7}& $q_{37}$ & \texttt{$trip\_distance = 5$ AND $payment\_type = 2$ AND $fare\_amount \geq 16$} & \begin{tabular}{@{}c@{}} $13549$ \\ $13207$ \end{tabular}\\
            \cline{2-4} \cellcolor[HTML]{D7D7D7}& $q_{38}$ & \texttt{$vendorID = 1$ AND $store\_and\_fwd\_flag$ LIKE `Y' AND $payment\_type = 1$} & \begin{tabular}{@{}c@{}} $35190$ \\ $35280$ \end{tabular}\\
            \cline{2-4} \cellcolor[HTML]{D7D7D7}& $q_{39}$ & \texttt{$trip\_distance \geq 4$ AND $fare\_amount \geq 22$ AND $congestion\_surcharge = 0$} & \begin{tabular}{@{}c@{}} $77872$ \\ $78774$ \end{tabular}\\
            \cline{2-4} \multirow{-20}{*}{\cellcolor[HTML]{D7D7D7} \countQuery}& $q_{40}$ & \texttt{$trip\_distance \leq 4$ AND $rateCodeID = 1$ AND $payment\_type = 2$} & \begin{tabular}{@{}c@{}} $346109$ \\ $346045$ \end{tabular}\\
            
            \hline \cellcolor[HTML]{D7D7D7}& $q_{41}$ & \texttt{$passenger\_count = 1$ AND $trip\_distance = 1$ AND $tip\_amount = 9$} & \begin{tabular}{@{}c@{}} $1357$ \\ $1357$ \end{tabular}\\
            \cline{2-4} \cellcolor[HTML]{D7D7D7}& $q_{42}$ & \texttt{$trip\_distance = 12$ AND $total\_amount = 53$ AND $congestion\_surcharge = 0$} & \begin{tabular}{@{}c@{}} $4876$ \\ $4823$ \end{tabular}\\
            \cline{2-4} \cellcolor[HTML]{D7D7D7}& $q_{43}$ & \texttt{$passenger\_count = 3$ AND $rateCodeID = 3$ AND $payment\_type = 2$} & \begin{tabular}{@{}c@{}} $8239$ \\ $8133$ \end{tabular}\\
            \cline{2-4} \cellcolor[HTML]{D7D7D7}& $q_{44}$ & \texttt{$vendorID = 1$ AND $store\_and\_fwd\_flag$ LIKE `Y' AND $payment\_type = 2$} & \begin{tabular}{@{}c@{}} $161408$ \\ $161706$ \end{tabular}\\
            \cline{2-4} \cellcolor[HTML]{D7D7D7}& $q_{45}$ & \texttt{$passenger\_count < 2$ AND $trip\_distance = 8$ AND $tip\_amount \leq 2$} & \begin{tabular}{@{}c@{}} $218047$ \\ $213685$ \end{tabular}\\
            \cline{2-4} \cellcolor[HTML]{D7D7D7}& $q_{46}$ & \texttt{$trip\_distance \leq 2$ AND $total\_amount \leq 15$ AND $congestion\_surcharge = 0$} & \begin{tabular}{@{}c@{}} $324730$ \\ $480688$ \end{tabular}\\
            \cline{2-4} \cellcolor[HTML]{D7D7D7}& $q_{47}$ & \texttt{$trip\_distance \geq 8$ AND $payment\_type = 2$ AND $fare\_amount \geq 50$} & \begin{tabular}{@{}c@{}} $1133124$ \\ $1120773$ \end{tabular}\\
            \cline{2-4} \cellcolor[HTML]{D7D7D7}& $q_{48}$ & \texttt{$trip\_distance \leq 1$ AND $rateCodeID = 1$ AND $payment\_type = 2$} & \begin{tabular}{@{}c@{}} $1651975$ \\ $1685971$ \end{tabular}\\
            \cline{2-4} \cellcolor[HTML]{D7D7D7}& $q_{49}$ & \texttt{$trip\_distance \geq 8$ AND $rateCodeID = 2$ AND $payment\_type = 1$} & \begin{tabular}{@{}c@{}} $3628304$ \\ $3633347$ \end{tabular}\\
            \cline{2-4} \multirow{-20}{*}{\cellcolor[HTML]{D7D7D7} {\begin{tabular}{@{}c@{}} \sumQuery\\ on\\ $total\_amount$ \end{tabular}}}& $q_{50}$ &  \texttt{$trip\_distance \leq 1$ AND $fare\_amount \leq 6$ AND $congestion\_surcharge = 2$} & \begin{tabular}{@{}c@{}} $5680776$ \\ $5624898$ \end{tabular}\\
            
            \hline \cellcolor[HTML]{D7D7D7}& $q_{51}$ & \texttt{$trip\_distance \leq 2$ AND $total\_amount \leq 15$ AND $congestion\_surcharge = 0$} & \begin{tabular}{@{}c@{}} $1$ \\ $1$ \end{tabular}\\
            \cline{2-4} \cellcolor[HTML]{D7D7D7}& $q_{52}$ & \texttt{$trip\_distance \leq 1$ AND $rateCodeID = 1$ AND $payment\_type = 2$} & \begin{tabular}{@{}c@{}} $1$ \\ $1$ \end{tabular}\\
            \cline{2-4} \cellcolor[HTML]{D7D7D7}& $q_{53}$ & \texttt{$trip\_distance \leq 2$ AND $rateCodeID = 2$ AND $payment\_type = 1$} & \begin{tabular}{@{}c@{}} $1$ \\ $1$ \end{tabular}\\
            \cline{2-4} \cellcolor[HTML]{D7D7D7}& $q_{54}$ & \texttt{$passenger\_count \leq 3$ AND $rateCodeID = 1$ AND $payment\_type = 1$} & \begin{tabular}{@{}c@{}} $2$ \\ $2$ \end{tabular}\\
            \cline{2-4} \cellcolor[HTML]{D7D7D7}& $q_{55}$ & \texttt{$vendorID = 2$ AND $store\_and\_fwd\_flag$ LIKE `N' AND $payment\_type = 2$} & \begin{tabular}{@{}c@{}} $2$ \\ $2$ \end{tabular}\\
            \cline{2-4} \cellcolor[HTML]{D7D7D7}& $q_{56}$ & \texttt{$trip\_distance \leq 8$ AND $fare\_amount \leq 25$ AND $congestion\_surcharge = 2$} & \begin{tabular}{@{}c@{}} $2$ \\ $2$ \end{tabular}\\
            \cline{2-4} \cellcolor[HTML]{D7D7D7}& $q_{57}$ & \texttt{$trip\_distance \leq 12$ AND $total\_amount \leq 34$ AND $congestion\_surcharge = 0$} & \begin{tabular}{@{}c@{}} $3$ \\ $2$ \end{tabular}\\
            \cline{2-4} \cellcolor[HTML]{D7D7D7}& $q_{58}$ & \texttt{$passenger\_count < 2$ AND $trip\_distance = 8$ AND $tip\_amount \leq 2$} & \begin{tabular}{@{}c@{}} $8$ \\ $8$ \end{tabular}\\
            \cline{2-4} \cellcolor[HTML]{D7D7D7}& $q_{59}$ & \texttt{$passenger\_count \geq 2$ AND $trip\_distance \geq 8$ AND $tip\_amount = 5$} & \begin{tabular}{@{}c@{}} $12$ \\ $12$ \end{tabular}\\
            \cline{2-4} \multirow{-20}{*}{\cellcolor[HTML]{D7D7D7} {\begin{tabular}{@{}c@{}} \medianQuery\\ on\\ $trip\_distance$ \end{tabular}}}& $q_{60}$ &  \texttt{$trip\_distance \geq 2$ AND $payment\_type = 2$ AND $fare\_amount \geq 50$} & \begin{tabular}{@{}c@{}} $18$ \\ $18$ \end{tabular}\\
            
            \hline
        \end{tabular}
    }
    \caption{\reva{Queries used in experiments for the database derived from the NYC Yellow Taxi Trip data}.} \label{tbl:exp2_queries}
\end{table*}

\section{Additional Experiments} \label{sec:appendix_allQueryErrorPlots}

\textbf{Accuracy Analysis: Error Plots.~}
We give the complete list of queries used in the experiments in \reva{Tables~\ref{tbl:exp_queries}--\ref{tbl:exp2_queries}}. 
\reva{For the database derived from the IPUMS-CPS survey data, we} used $12$ \countQuery, $9$ \sumQuery, and $9$ \medianQuery\ queries\reva{, whereas for the database derived from the NYC Yellow Taxi Trip data, we used $10$ of each of \countQuery, \sumQuery, and \medianQuery\ queries}. Note that the query answers on the private database $D$ are private (hidden from the user) and we include them here only for reference.
The \DLS\ values for queries $q_{13}$ to $q_{21}$ are: $127764, 359892, 439420, 500000,$ $384842, 278011, 276799, 403353$ and $500000$\reva{, and for $q_{41}$ to $q_{50}$ are: $73, 53, 138, 157, 82, 15, 266, 48, 155$ and $134$.}

\reva{For the database derived from the IPUMS-CPS survey data, we}
plot the errors from \lapcount\ and \emcount\ for the $12$ \countQuery\ queries as $\tau$ varies in Figure~\ref{fig:appendix_count_vary_tau}, and as $\epsilon$ varies in Figure~\ref{fig:appendix_count_vary_eps}.
We plot the errors from \lapsum, \rtwotsum\ and \svtsum\ for the $9$ \sumQuery\ queries as $\tau$ varies in \reva{Figure~\ref{fig:appendix_sum_vary_tau}}, and as $\epsilon$ varies in \reva{Figure~\ref{fig:appendix_sum_vary_eps}}.
We plot the errors from \emmed\ and \histmed\ for the $9$ \medianQuery\ queries as $\tau$ varies in \reva{Figure~\ref{fig:appendix_median_vary_tau}}, and as $\epsilon$ varies in \reva{Figure~\ref{fig:appendix_median_vary_eps}}.

\reva{For the database derived from the NYC Yellow Taxi Trip data,
we plot the errors from \lapcount\ and \emcount\ for the $10$ \countQuery\ queries as $\tau$ varies in Figure~\ref{fig:appendix_exp2_count_vary_tau}, and as $\epsilon$ varies in Figure~\ref{fig:appendix_exp2_count_vary_eps}.
We plot the errors from \lapsum, \rtwotsum\ and \svtsum\ for the $10$ \sumQuery\ queries as $\tau$ varies in Figure~\ref{fig:appendix_exp2_sum_vary_tau}, and as $\epsilon$ varies in Figure~\ref{fig:appendix_exp2_sum_vary_eps}.
We plot the errors from \emmed\ and \histmed\ for the $10$ \medianQuery\ queries as $\tau$ varies in Figure~\ref{fig:appendix_exp2_median_vary_tau}, and as $\epsilon$ varies in Figure~\ref{fig:appendix_exp2_median_vary_eps}.}

\reva{\textbf{Pre-processing of the NYC Taxi data.~} As a data cleaning step, for numerical attributes, we round to nearest integer and set domains as $[1, 6]$ for $passenger\_count$, $[1, 100]$ for $trip\_distance$, $[4, 250]$ for $fare\_amount$, $[0, 125]$ for $tip\_amount$, and $[8, 377]$ for $total\_amount$. }

\begin{figure*}[ht]
  \includegraphics[width=0.89\textwidth]{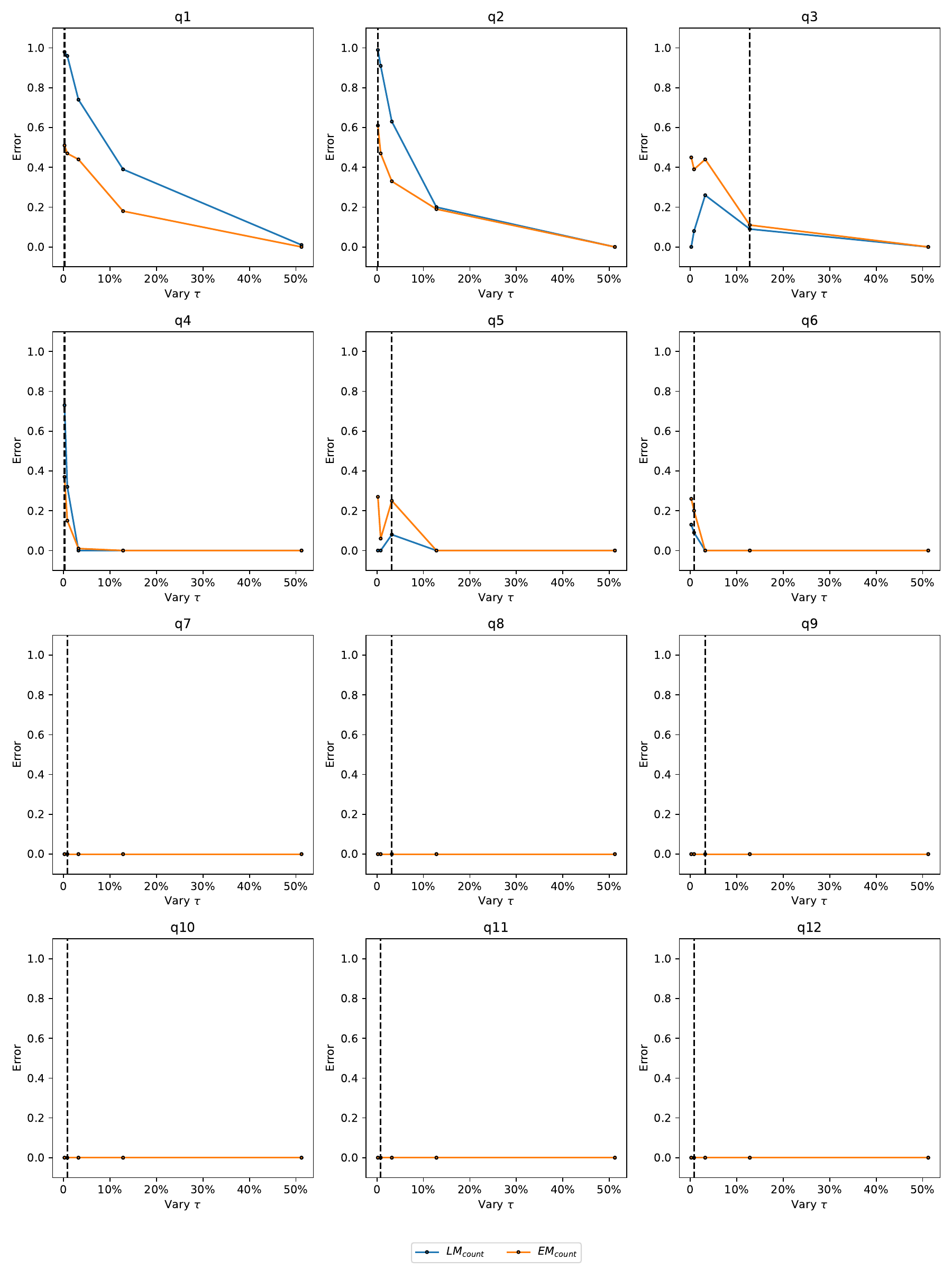}
  \caption{\reva{IPUMS-CPS:} Error for \countQuery\ queries from \lapcount\ and \emcount\ as $\tau$ varies. \revc{The dotted line marks the smallest $\tau$ value considered such that the query answer on the private data belongs in the interval \intvl.}}
  \label{fig:appendix_count_vary_tau}
\end{figure*}

\begin{figure*}[ht]
  \includegraphics[width=0.89\textwidth]{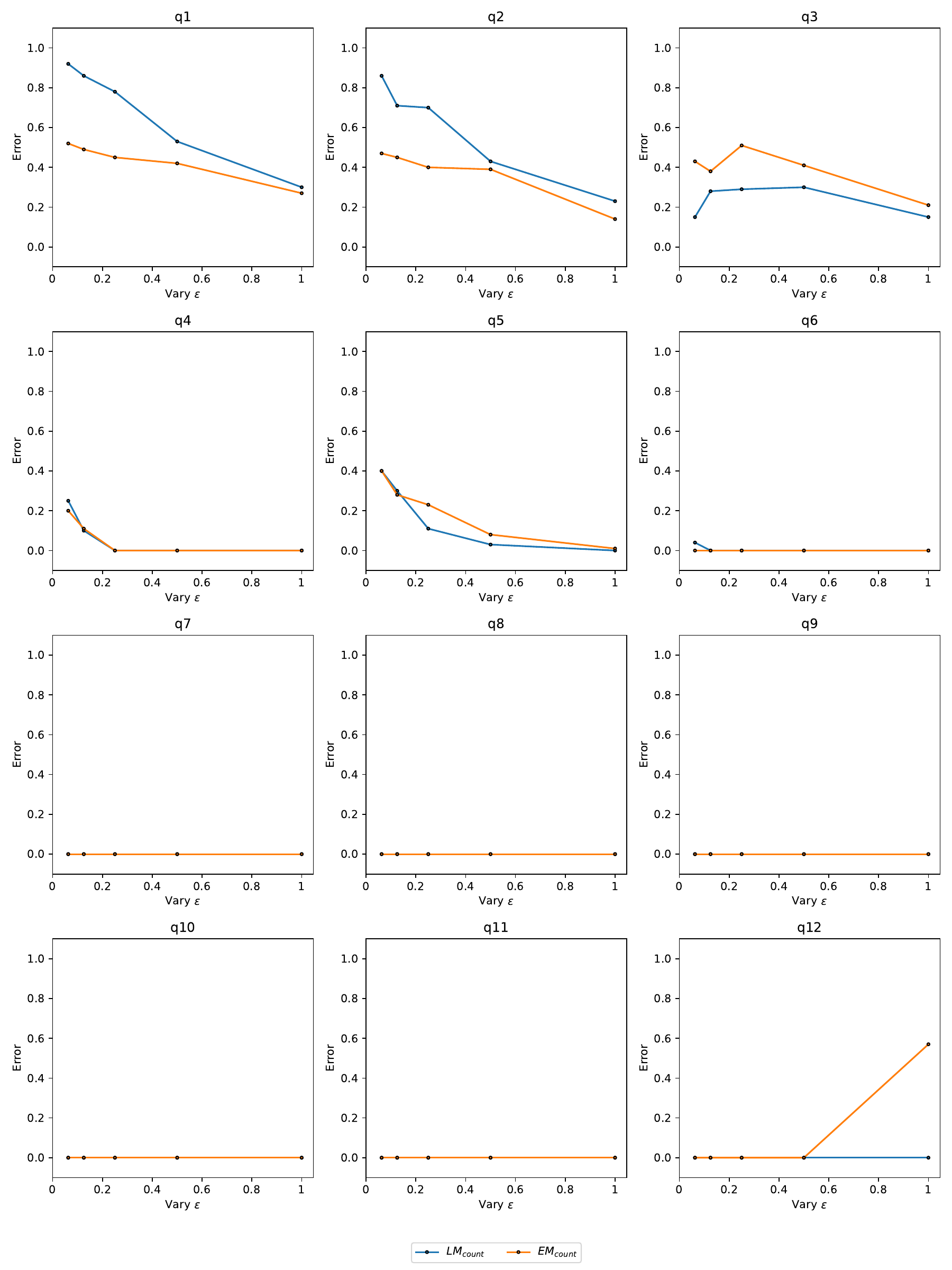}
  \caption{\reva{IPUMS-CPS:} Error for \countQuery\ queries from \lapcount\ and \emcount\ as $\epsilon$ varies.}
  \label{fig:appendix_count_vary_eps}
\end{figure*}

\begin{figure*}[ht]
  \includegraphics[width=0.89\textwidth]{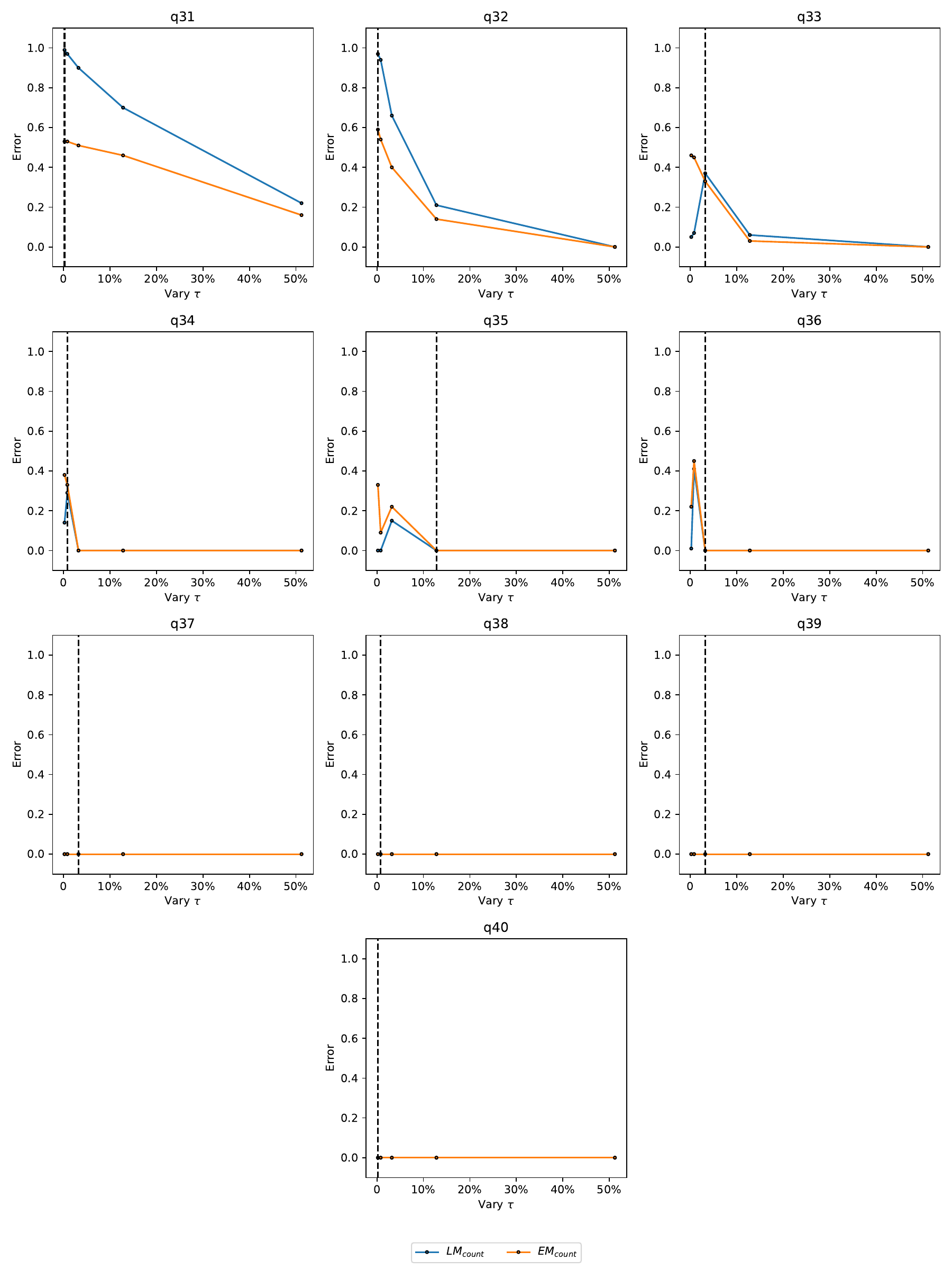}
  \caption{\reva{NYC Yellow Taxi Trip: Error for \countQuery\ queries from \lapcount\ and \emcount\ as $\tau$ varies.} \revc{The dotted line marks the smallest $\tau$ value considered such that the query answer on the private data belongs in the interval \intvl.}}
  \label{fig:appendix_exp2_count_vary_tau}
\end{figure*}

\begin{figure*}[ht]
  \includegraphics[width=0.89\textwidth]{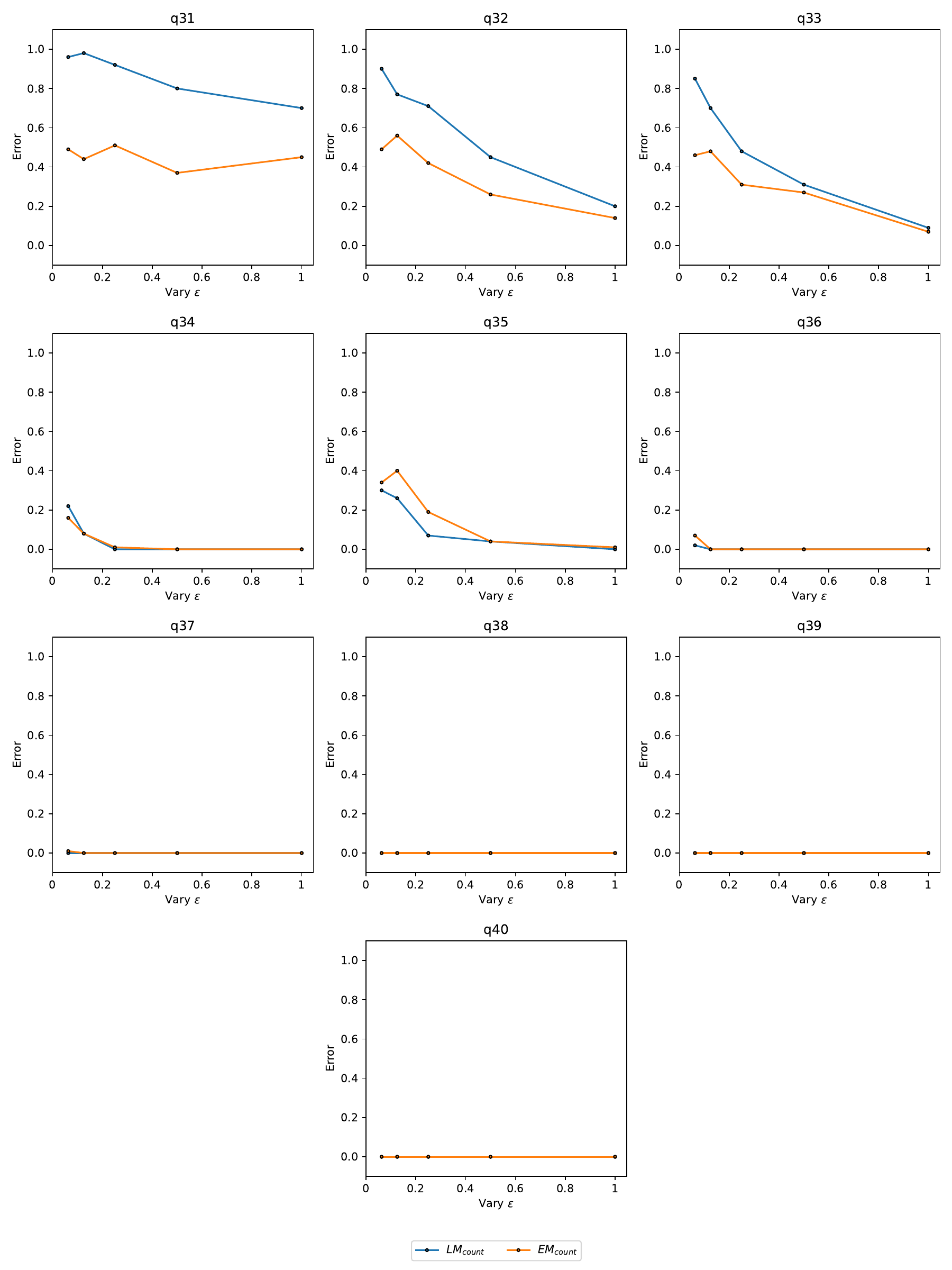}
  \caption{\reva{NYC Yellow Taxi Trip: Error for \countQuery\ queries from \lapcount\ and \emcount\ as $\epsilon$ varies.}}
  \label{fig:appendix_exp2_count_vary_eps}
\end{figure*}

\begin{figure*}[ht]
  \includegraphics[width=0.89\textwidth]{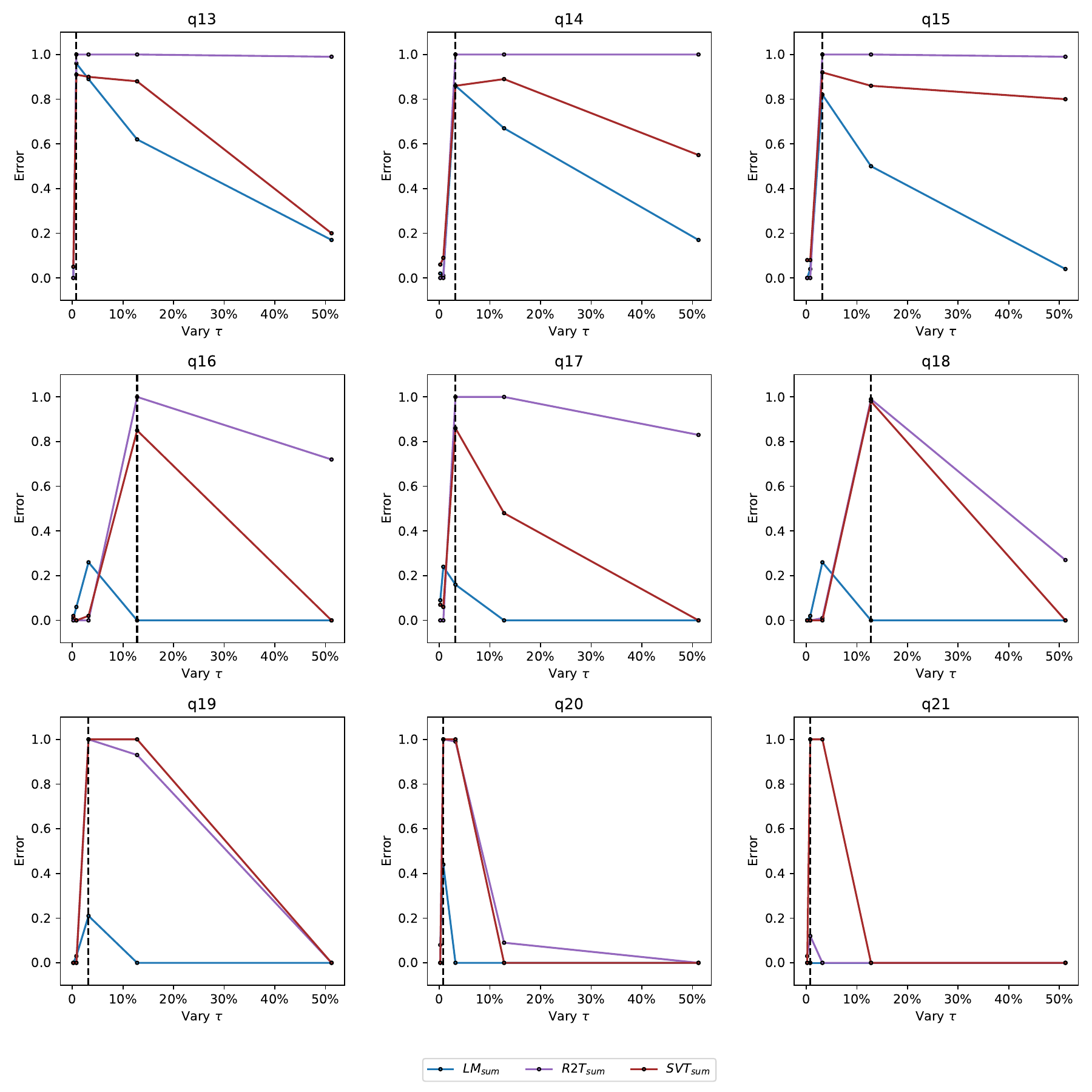}
  \caption{\reva{IPUMS-CPS:} Error for \sumQuery\ queries from \lapsum, \rtwotsum\ and \svtsum\ as $\tau$ varies. \revc{The dotted line marks the smallest $\tau$ value considered such that the query answer on the private data belongs in the interval \intvl.}}
  \label{fig:appendix_sum_vary_tau}
\end{figure*}

\begin{figure*}[ht]
  \includegraphics[width=0.89\textwidth]{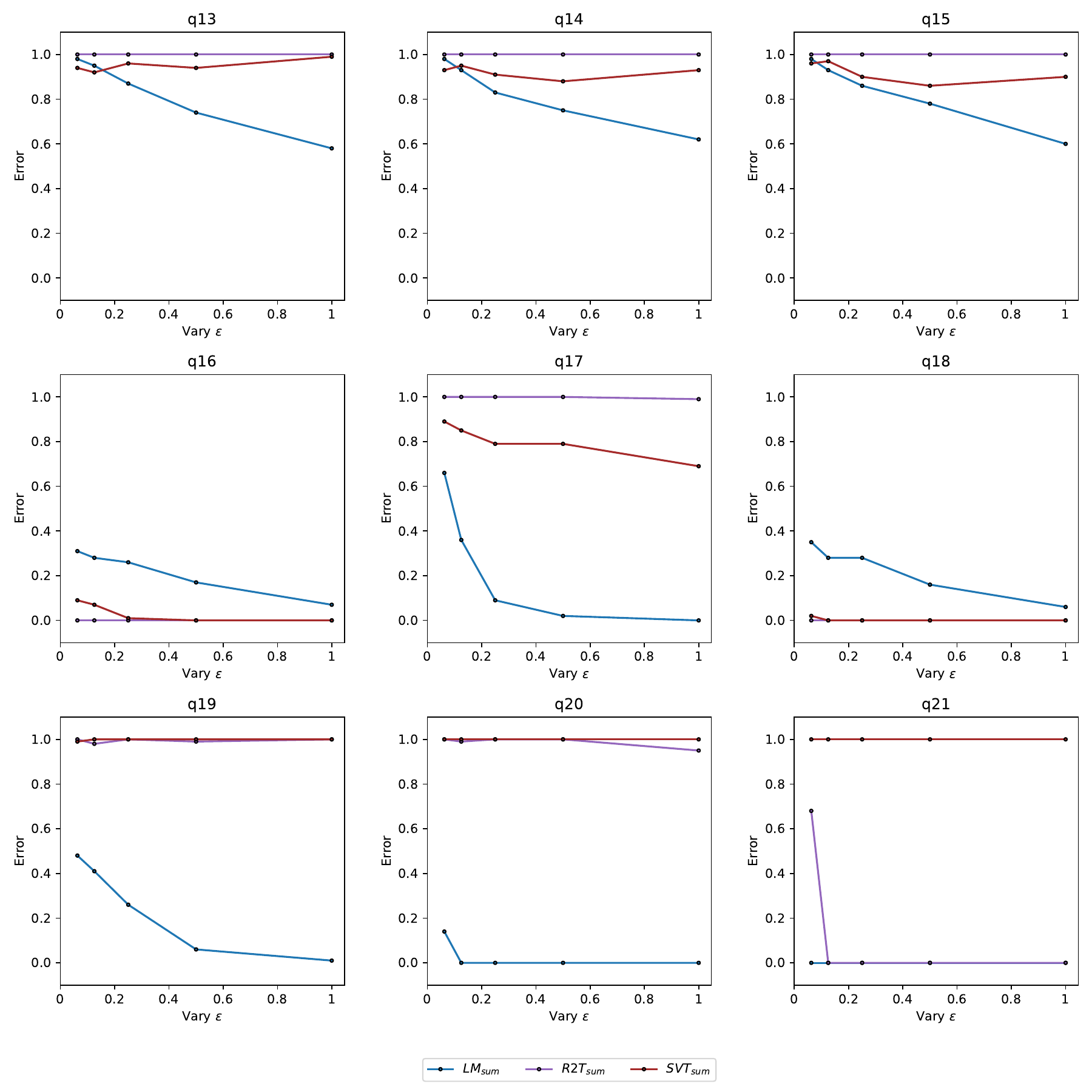}
  \caption{\reva{IPUMS-CPS:} Error for \sumQuery\ queries from \lapsum, \rtwotsum\ and \svtsum\ as $\epsilon$ varies.}
  \label{fig:appendix_sum_vary_eps}
\end{figure*}

\begin{figure*}[ht]
  \includegraphics[width=0.89\textwidth]{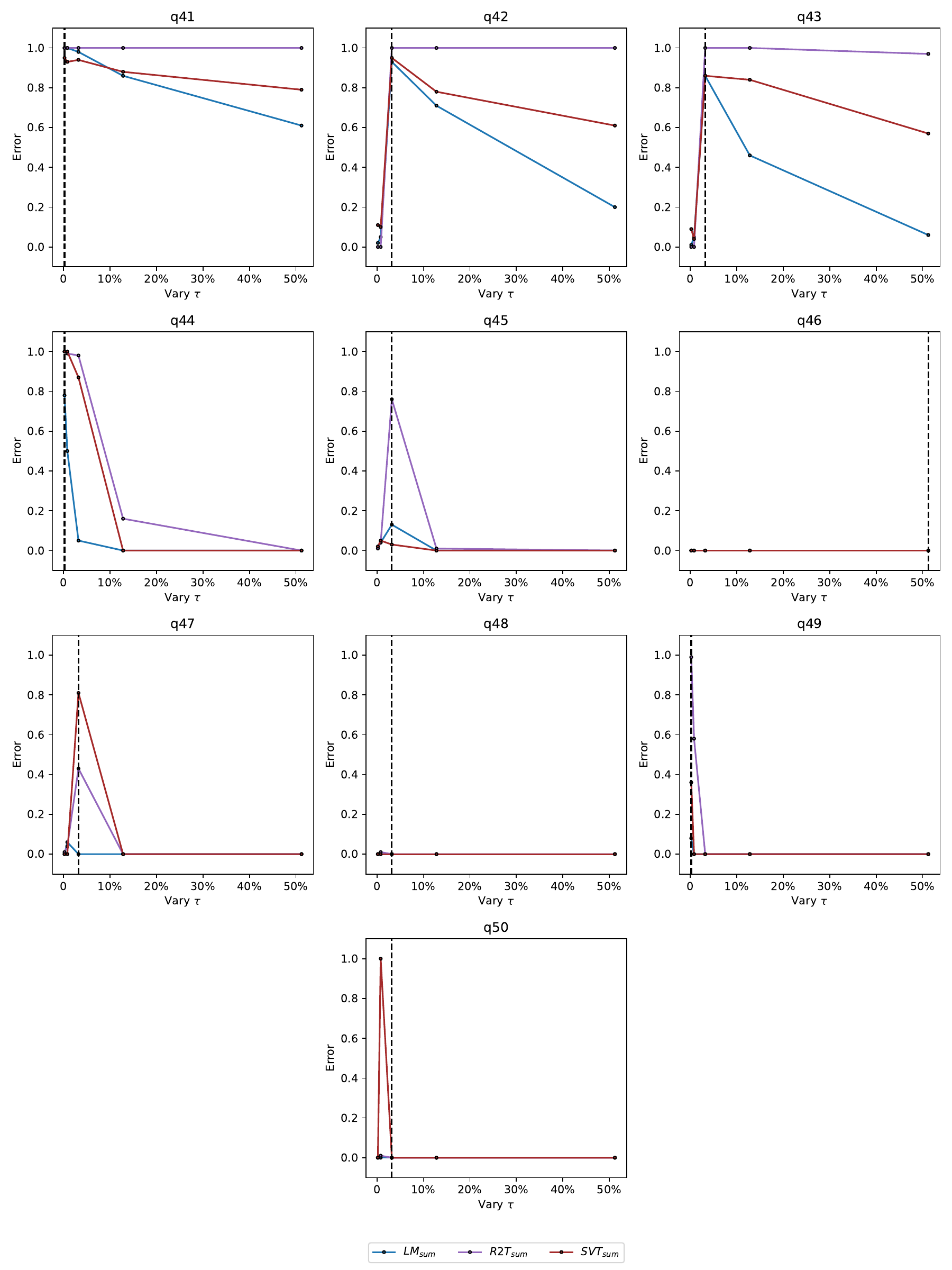}
  \caption{\reva{NYC Yellow Taxi Trip: Error for \sumQuery\ queries from \lapsum, \rtwotsum\ and \svtsum\ as $\tau$ varies.} \revc{The dotted line marks the smallest $\tau$ value considered such that the query answer on the private data belongs in the interval \intvl.}}
  \label{fig:appendix_exp2_sum_vary_tau}
\end{figure*}

\begin{figure*}[ht]
  \includegraphics[width=0.89\textwidth]{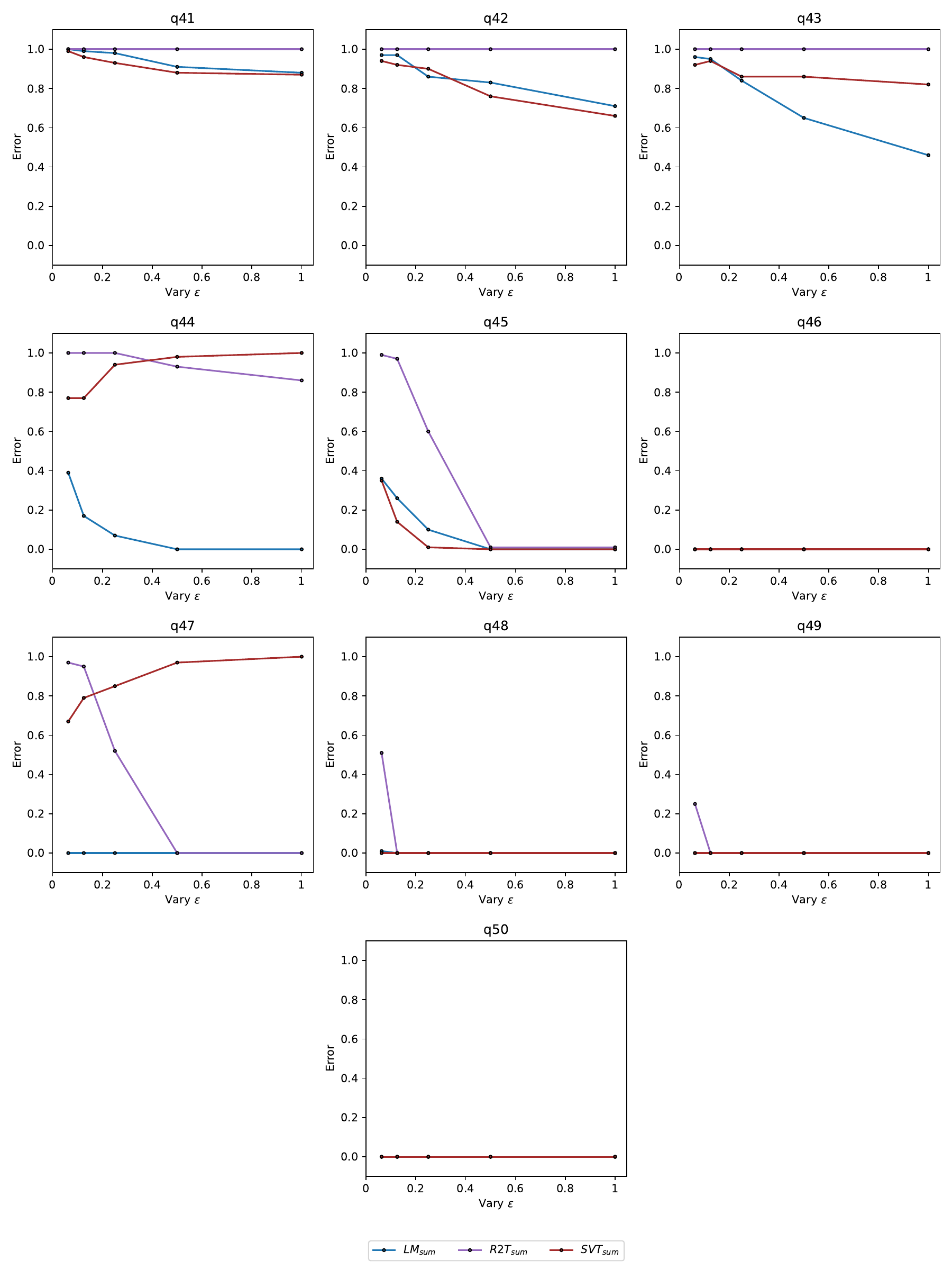}
  \caption{\reva{NYC Yellow Taxi Trip: Error for \sumQuery\ queries from \lapsum, \rtwotsum\ and \svtsum\ as $\epsilon$ varies.}}
  \label{fig:appendix_exp2_sum_vary_eps}
\end{figure*}

\begin{figure*}[ht]
  \includegraphics[width=0.89\textwidth]{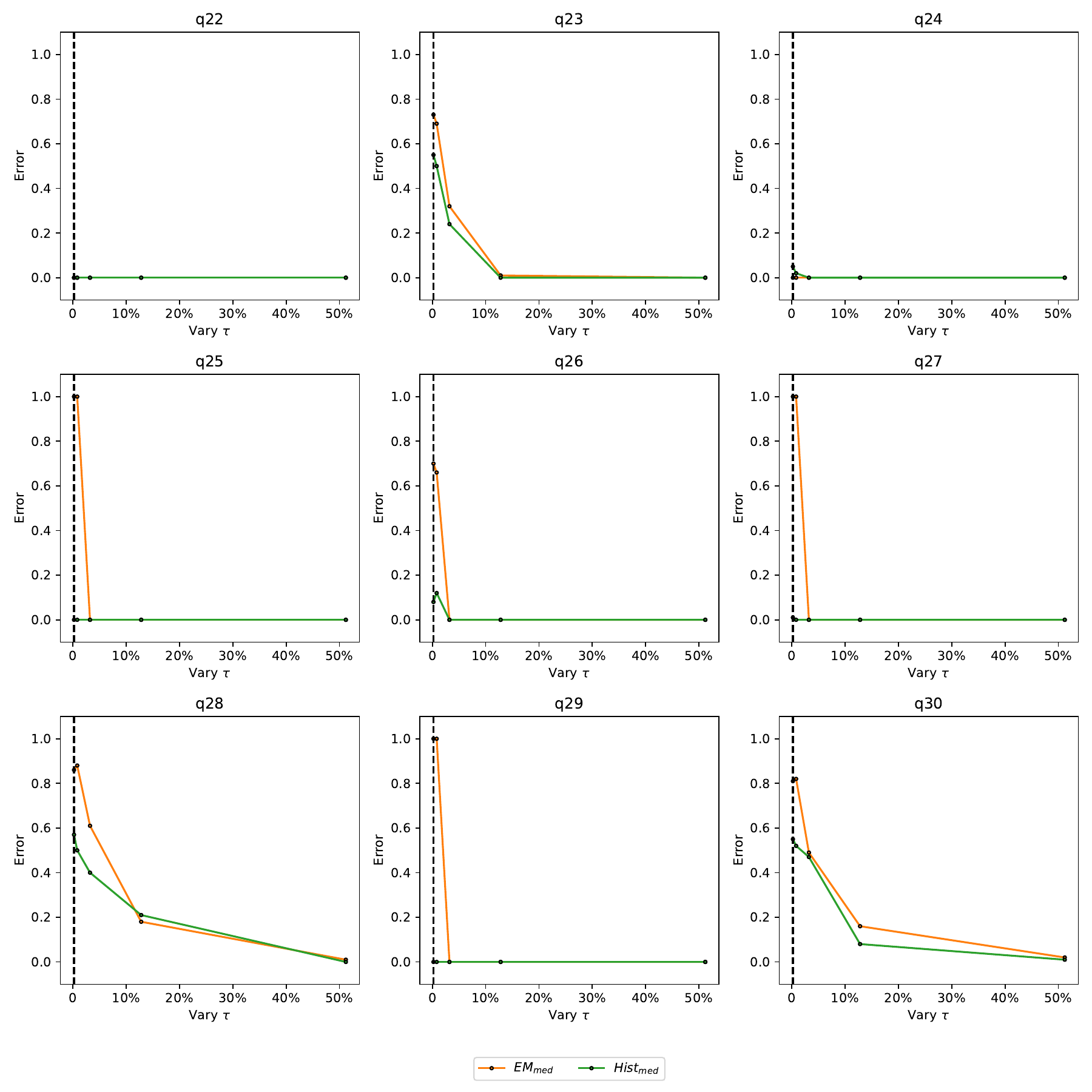}
  \caption{\reva{IPUMS-CPS:} Error for \medianQuery\ queries from \emmed\ and \histmed\ as $\tau$ varies. \revc{The dotted line marks the smallest $\tau$ value considered such that the query answer on the private data belongs in the interval \intvl.}}
  \label{fig:appendix_median_vary_tau}
\end{figure*}

\begin{figure*}[ht]
  \includegraphics[width=0.89\textwidth]{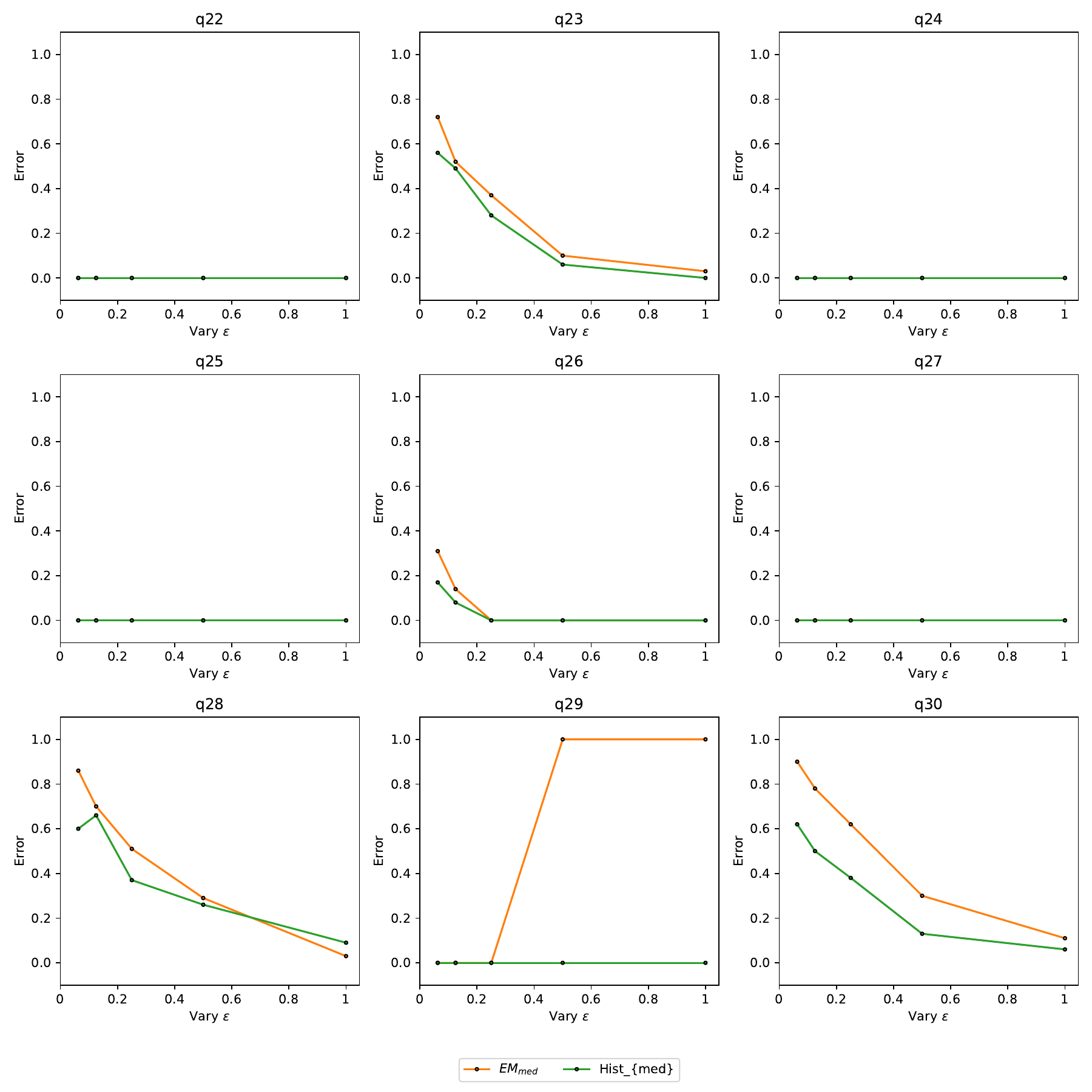}
  \caption{\reva{IPUMS-CPS:} Error for \medianQuery\ queries from \emmed\ and \histmed\ as $\epsilon$ varies.}
  \label{fig:appendix_median_vary_eps}
\end{figure*}

\begin{figure*}[ht]
  \includegraphics[width=0.89\textwidth]{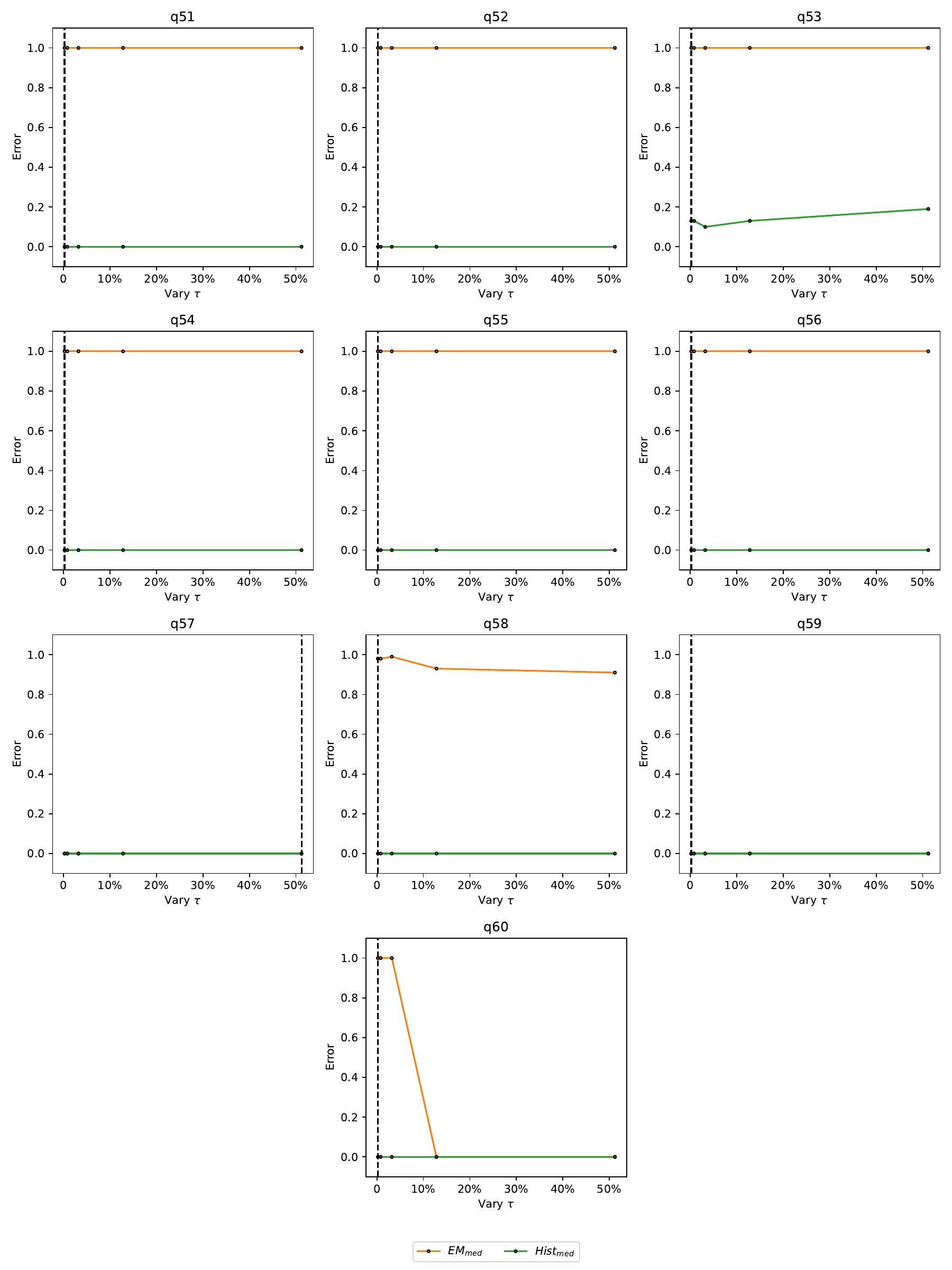}
  \caption{\reva{NYC Yellow Taxi Trip: Error for \medianQuery\ queries from \emmed\ and \histmed\ as $\tau$ varies.} \revc{The dotted line marks the smallest $\tau$ value considered such that the query answer on the private data belongs in the interval \intvl.}}
  \label{fig:appendix_exp2_median_vary_tau}
\end{figure*}

\begin{figure*}[ht]
  \includegraphics[width=0.89\textwidth]{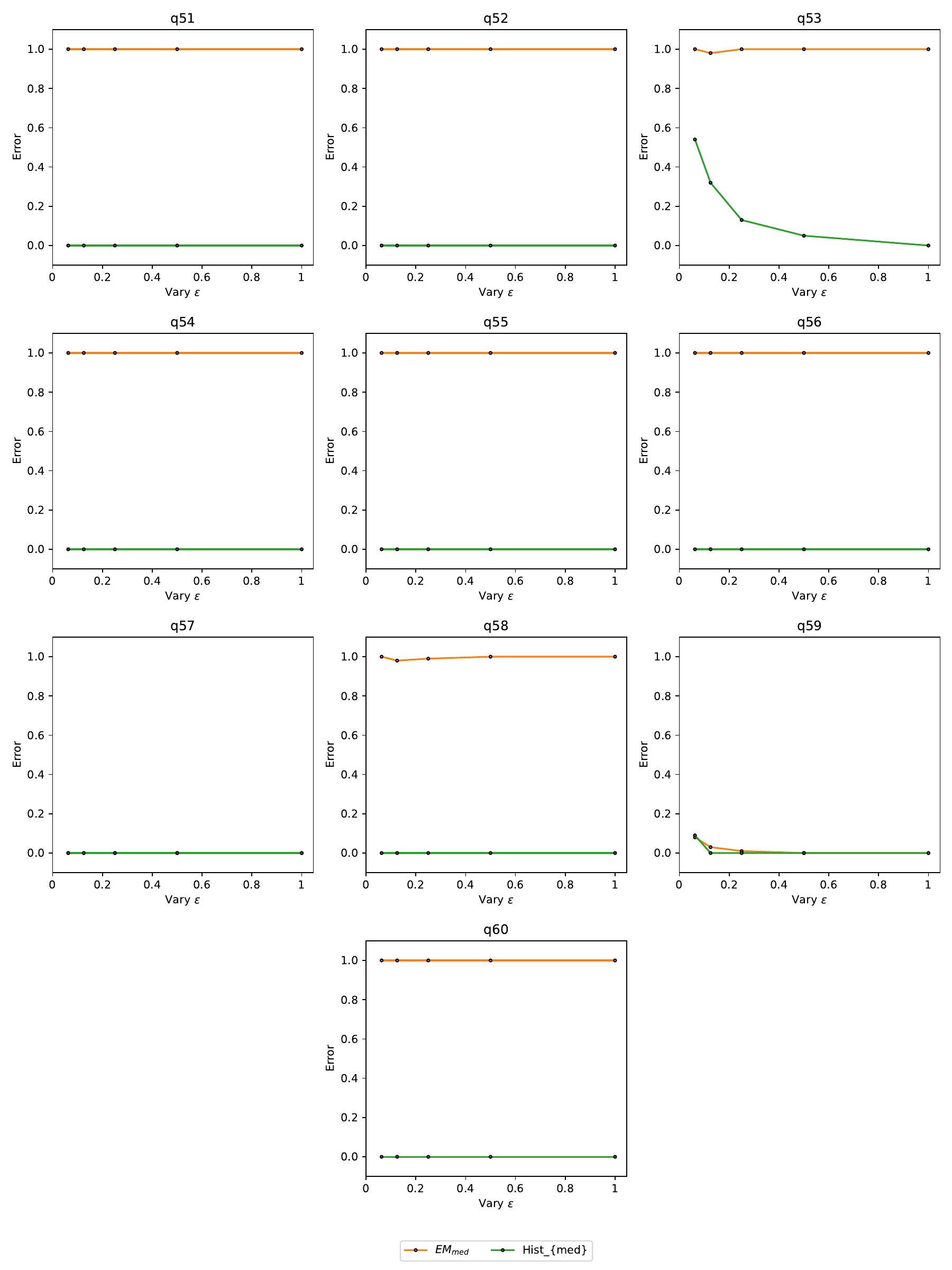}
  \caption{\reva{NYC Yellow Taxi Trip: Error for \medianQuery\ queries from \emmed\ and \histmed\ as $\epsilon$ varies.}}
  \label{fig:appendix_exp2_median_vary_eps}
\end{figure*}

\end{document}